\documentclass[11pt]{article}

\usepackage{geometry,showlabels}
\geometry{margin=1in}
\usepackage{xr-hyper}
\usepackage{times,enumitem}
\usepackage{url}
\usepackage[colorlinks = true,linkcolor =blue,citecolor = blue, urlcolor = blue]{hyperref}

\usepackage{bigints}
\usepackage{amsfonts,amsmath,amssymb,amsthm}
\usepackage{wrapfig}
\usepackage{verbatim,float,url}
\usepackage{graphicx,psfrag}
\usepackage{subcaption}
\usepackage[round]{natbib}   
\bibliographystyle{plainnat}
\setlength{\bibsep}{0.0pt}
\usepackage[english]{babel}
\usepackage{cancel}
\usepackage{color}



\usepackage{mathtools}

\newtheorem{theorem}{Theorem}
\newtheorem{lemma}[theorem]{Lemma}

\usepackage{mdwlist}	
\theoremstyle{remark}

\theoremstyle{definition}

\usepackage{tikz}  
\usetikzlibrary{trees}

\graphicspath{{figs/}}


\usepackage{mathtools}
\DeclarePairedDelimiter\ceil{\lceil}{\rceil}
\DeclarePairedDelimiter\floor{\lfloor}{\rfloor}
\newcommand{\indep}{\rotatebox[origin=c]{90}{$\models$}}

\theoremstyle{plain}
\newtheorem{assumption}{Assumption}

\begin{document}

\title{Graphon estimation via  nearest neighbor algorithm and 2D  fused lasso denoising }

\author{
	Oscar Hernan Madrid Padilla$^{1}$\\
	{\tt omadrid@berkeley.edu}\\
	\and
	Yanzhen Chen $^{2}$ \\
	{\tt  imyanzhen@ust.hk} \\
	\begin{tabular}{c}
		$^{1}$ Department of Statistics, UC Berkeley\\
		$^{2}$  Department of  ISOM, Hong Kong University of Science and Technology
	\end{tabular}
}

\date{\today}

\maketitle

 \begin{abstract}
 	We propose  a class of  methods for graphon estimation based on exploiting connections with non-parametric regression.  The idea is to construct an ordering of the nodes in the network, similar in spirit  to \cite{chan2014consistent}.  However, rather than only considering orderings based on the empirical  degree as in \cite{chan2014consistent},  we use the nearest  neighbor algorithm  which is an approximative solution to  the traveling 
 	salesman  problem. This in turn can  handle  general distances $\hat{d}$   between  the nodes, which allows us to incorporate  rich information of the network. Once an ordering is constructed, we formulate a 2D  grid graph denoising problem that we solve through  fused lasso regularization. For particular  choices of the metric $\hat{d}$,  we show that  the corresponding  two--step  estimator can attain competitive rates  when the true model  is the stochastic block model, and when the underlying  graphon is piecewise H\"{o}lder or   it has  bounded variation.  

 \end{abstract}
 \quad\quad\;\;\textbf{Keywords: total variation denoising, graph denoising, fused lasso, network estimation}

\section{Introduction}
\label{sec:intro}

Suppose we are given measurements  $A_{i,j} \in \{0,1\}$  with   $i,j \in \{1,\ldots,n\}$, and
with
\begin{equation}
\label{eqn:model1}
   \begin{array}{lll}
   A_{j,i}   & =&  A_{i,j},   \,\,\,\,\,\,\,\,\,\,\,\,\,\,  \forall  i,j \in\{1,\ldots,n\},\\
   A_{i,i}  &=& 0,   \,\,\,\,\,\,\,\,\,\,\,\,\,\,    \forall  i \in\{1,\ldots,n\},\\
   A_{i,j} &\sim &\text{Bernoulli}\left(\,\theta_{i,j}^*\right),   \,\,\,\,\,\,\,\, \forall  i < j, \,\,\, i,j \in \{1,\ldots,n\},	  \\
   \theta_{i,j}^* &=& f_0(\xi_i,\xi_j) \\
   \xi_ i  &\sim^{ \text{i.i.d} }  &  U[0,1],  \,\,\,\,i \in \,\{1,\ldots,n\},
   \end{array}
\end{equation}
where $f_0 \,:\, [0,1] \times [0,1]   \,\rightarrow [0,1]$   is a function that might depend on $n$. Moreover, the indices $i $  and  $j$  denote  the nodes of a network  with $n$  vertices, and    $A_{i,j} \,\,=\,\,1$
indicates that there is an edge between  $i$ and $j$. The goal  is  to estimate $\theta^*_{i,j}$,  the probability that there is a link between $i$  and $j$,  under   structural assumptions  of the function $f_0$. 

The model  described in (\ref{eqn:model1})  has attracted significant attention in the statistics and  machine  learning communities.  This is due to the increasing  amount of data that can be represented as a  binary  graph model. For instance, emails between individuals, social networks connections, financial networks   where an edge indicates a transaction between individuals, and many more.

  The goal of this paper is to study a class of methods  that can perform well in practice, yet  enjoying attractive  statistical  properties under different classes of functions. We are particularly interested in settings where $f_0$ is   piecewise H\"{o}lder, has bounded variation, or is an instance of the  stochastic block model. Our idea is based on the connection  between nonparametric regression  and graphon estimation, see  \cite{gao2015rate}  for a discussion.  
  Roughly speaking, our approach proceeds  as follows. We start constructing a 2D grid graph $ G_{2D}$ with set of nodes $V = \{1,\ldots,n\} \times \{ 1,\ldots,n \}$. To construct the  set of  edges we first find a  permutation $\hat{\tau}$ of the set $\{1,\ldots,n\}$ (we will discus below  some choices for $\hat{\tau}$). With this permutation in hand,  the set of edges of $G_{2D}$, namely $E $, is such  that $(i,j) \in V$  is connected  to $(i^{\prime},j^{\prime})\in V$ if and only if 
  \[
       \left\vert  \hat{\tau}^{-1}(i) - \hat{\tau}^{-1}(i^{\prime}) \right\vert   \,+\,  \left\vert  \hat{\tau}^{-1}(j) - \hat{\tau}^{-1}(j^{\prime}) \right\vert \,=\, 1.\,\,\,\,\,\,\
  \]
  Using the graph $G_{2D}$, we  define the total variation of  $\theta \in  \mathbb{R}^{n \times n }$, with respect to such graph, as
  \[
    \| D \theta \|_1\,=\,  \underset{( (i,j),(i^{\prime},j^{\prime})    )  \in E }{\sum }\, \left\vert \theta_{i,j}  - \theta_{i^{\prime},j^{\prime}}     \right \vert.  
  \]
  This similar to the notion  of total variation as studied in \cite{sadhanala2016total}, but  the 2D grid graph is based on the permutation $P$  that  we will learn from the data. Our hope is to construct  a reasonable permutation $P$  in that sense that the true parameter has small total variation, or that $\|  D \theta^* \|_1  <<  n^2 $. If this were achieved, then it would make sense to estimate $\theta^*$ as the solution to 
  
 \begin{equation}
 	\label{eqn:first_problem}
 	 \underset{\theta  \in \mathbb{R}^{n \times n }}{\min}\,\, \frac{1}{2}\|A\,-\,\theta \|_F^2  \,\,+\,\, \lambda\,\| D \theta \|_1, 	 
 \end{equation}
  where  $\| \cdot \|_F$  is the Frobenius norm, and $\lambda > 0 $ is a tuning parameter that is meant to emphasize the believe that the total variation along $G_{2D} $ should be small.
  
  Thus,   we perform  total  variation denosing on the adjacency matrix $A$, treating its entries as nodes of  carefully designed grid graphs. The   first type of grid  that we consider is constructed as follows: We think of the elements of
   $\{1,\ldots,n\}$  as cities with distances  induced by the metric  from  \cite{zhang2015estimating}. Thus the distance  between $i$  and $j$ is
   \[
      \hat{d}_I(i,j)  \,\,=\,\,  \frac{1}{n}\,  \underset{k \neq i,j }{\max} \left\vert    \sum_{l =1}^n (A_{i,l}  - A_{j,l} )A_{k,l}  \right\vert. 
   \] 
     Using such metric, we run  the  nearest neighbor (NN) algorithm, which is an approximate solution to  the traveling salesman problem, see  for instance  \cite{rosenkrantz1977analysis}.  This  gives a permutation  $\hat{\tau}(1), \ldots,\hat{\tau}(n) $ of the nodes that can be used  to   embed the matrix $A$  in a 2D grid. With the embedding   we can then solve  a problem   in the form of (\ref{eqn:first_problem}). We refer to the resulting procedure as NN-FL,  given that it combines the NN algorithm with fused lasso regularization.  Here,  fused lasso   refers to the estimator for denoising problems introduced by \cite{tibshirani2005sparsity},  whose  predecessor is the total variation denoising estimator  from 
      \cite{rudin1992nonlinear}. Equation (\ref{eqn:first_problem})  is actually  a  graph  fused lasso  denoising problem as studied from a theoretical perspective in \cite{padilla2016dfs} and \cite{hutter2016optimal}, and from an algorithmic point of view in \cite{barbero2014modular} and \cite{tansey2015fast}. We exploit these algorithms in order to efficiently compute our estimator.


The second  approach that we study considers  an alternative to  the metric  form \cite{zhang2015estimating}. Basically,  we simply  sort  the degrees of the nodes  just as in \cite{chan2014consistent}. However,  once the ordering is obtained, we do not use the  penalty  in (\ref{eqn:sorting})  as in \cite{chan2014consistent} but we rather apply fused lasso denoising. Hence  we refer to the resulting procedure as SAS-FL,  to emphasize  that is a minor modification of the sort and smooth  (SAS)  method from \cite{chan2014consistent}.  This  small   difference between  SAS-FL and the method from \cite{chan2014consistent} allows us to study the former
 on classes of bounded variation.  We refer the reader  to \cite{sadhanala2016total}  for  a discussion on some advantages  of using   the fused lasso on grids.

   Finally,  we  will also  discuss other possible  metrics choices  such as  those based  on external data (independent of $A$),  or  the $\ell_1$  distance on the columns of $A$.
 %

\subsection{Summary  of results}

As stated  above, our approach to  graphon estimation is based on performing  fused lasso/total variation denoising  of the adjacency matrix  with  appropriate graphs.  Loosely speaking, for the resulting estimators  we show the following:

\begin{enumerate}
	
		\item If the function $f_0(u,\cdot)$  is piecewise H\"{o}lder  of exponent  $\alpha \in  (0,1/2]$, then the NN-FL estimator  attains the rate $n^{-\alpha/2}$   after disregarding    logarithmic terms.
		In fact, our  result  actually  holds  for  a class of  functions larger than that of piecewise H\"{o}lder functions  of exponent  $\alpha$.

	\item  Let $g$  be the degree function  $g(u)  \,=\,  \int f_0(u,v)\,dv   $.  If  there exists a constant $L \,>\, 0$  such that  $L\,\vert u\,-\, v\vert\,\leq \, \left\vert g(u)\,-\, g(v) \right\vert $  for all $u,v \in [0,1]$,  $g$  is piecewise--monotone,   and $f_0$  has bounded variation, then  the  mean  square error (MSE) of the SAS-FL    attains rate  $n^{-1/2}$ (if we ignore logarithmic terms).
	
	

\item  In both  simulated  and real  data examples,  we provide  evidence  that the proposed methods  outperform  existing approaches  for    network  denoising and link prediction.
\end{enumerate}

\subsection{Notation}

For  $n \in \mathbb{N}$  we denote   the set $\{1,\ldots,n\}$    by $[n]$.  Moreover,  for $n \in \mathbb{N}$  we denote  by $\mathcal{S}_n$  the set of  permutations  of $[n]$, thus bijections from $[n]$ to itself.

  For  a matrix  $X \in \mathbb{R}^{s \times t}$,     the $i$-th  row of  $X$ is  denoted by  $X_{i,\cdot}$ , and   the $i$-th column by   $X_{\cdot,i}$. The Frobenius norm is denoted  by 
$$  \|X\|_F \,\,=\,\,  \sqrt{ \displaystyle \sum_{i=1}^{s}\,\sum_{j=1}^{t}  x_{i,j}}.   $$
For a set $I \subset  [t] $ we denote by $X_{\cdot,I}$  the matrix  obtained   by removing  from $X$  the columns  whose indices correspond to elements in $[t] \backslash I$. We define  $X_{I,\cdot}$  in a similar way for $I \subset [s]$.  

Throughout, we use the standard notation $\| x \|_p =  ( \sum_{i=1}^{n} \vert x_i \vert^p)^{1/p}$  for  $x \in \mathbb{R}^n$, and $p \in \mathbb{N}$ along with the  
 $\|x\|_{\infty }  =  \max_{i \in [n]} \vert x_i\vert$  notation.

For  a  Borel measurable set $A \subset \mathbb{R}^d$ we denote  its  Lebesgue  measure in $\mathbb{R}^d$ by $\text{Vol}(A)$.  And we denote by $1_A(x)$  the function that takes value $1$ if  $x \in A$  and $0$  otherwise.

\subsection{Previous work}

Methods  for  network estimation have been extensively  studied in the literature,  and network denoising remains  an active research area due to the challenge that it represents.
One avenue of research has focused on assuming that the generative process is the stochastic block model. In such framework, the main difficulty is perhaps to estimate the communities to which the nodes belong.
This perspective includes  seminal works by 
\cite{bickel2009nonparametric,rohe2011spectral}  and \cite{adamic2005political}, more recently by \cite{guedon2016community, yan2018provable,gao2018minimax}  among others.

In this paper we will not necessarily assume the stochastic block model but will also allow  for more general models. Our work focuses on directly   estimating the link probabilities  to which we refer as  graphon estimation. The statistical properties  of graphon estimators  have been of extensive interest.  For instance,
\cite{airoldi2013stochastic}  proposed to  approximate  Lipschitz  graphons  by  block models, and consistency in  H\"{o}lder classes  was studied in     
\cite{wolfe2013nonparametric}. Moreover,   \cite{gao2015rate}  and \cite{gao2016optimal}  characterized the   minimax rates for graphon estimation under the stochastic block model.  The rate is $\log K /n$  where $K$, the number of  communities, satisfies   $K  \,\leq \, \sqrt{n \log n}$. Recently, \cite{xu2017rates}  and \cite{klopp2017optimal}  independently  showed that 
the universal singular value thresholding (USVT) algorithm from \cite{chatterjee2015matrix} attains the rate $K/n$   when the true model is the stochastic block model. In fact, 
\cite{xu2017rates}  showed the rate $K/(n\,\rho)$  where $\rho $  is a sparsity parameter that satisfies $n\,\rho \,=\,  \Omega(\log n)$.  There, the authors also studied  USVT  when the function $f_0$ (in Model \ref{eqn:model1}) belongs to a Sobolev space.

Moreover,   \cite{zhang2015estimating}  studied  graphon estimation under more general structures than  the stochastic block model.
 Their method  proceeds in a two--step fashion. First, a network with nodes $\{1,\ldots,n\}$  is constructed  using the adjacency matrix $A$. Then, neighborhood smoothing is applied to the different neighborhoods. The resulting  estimator is proven to be consistent when $f_0 $  is piecewise Lipschitz.  The respective  rate is $(\log n /n)^{1/2} $  versus  the minimax rate $\log n /n$. A different method based on neighborhood smoothing was studied in \cite{song2016blind}.

An alternative approach to graphon estimation considering degree sorting was studied in \cite{chan2014consistent}     under the  assumption that $\int_{0}^{1}  f_0(x,y)dy$  is increasing. The idea  behind this method is to construct  an ordering of the nodes based on the empirical degree $d_i \,=\,  \sum_{j=1}^{n} A_{i,j}$. Once the ordering is obtained, the authors  propose the sort and smooth (SAS) estimator
\begin{equation}
\label{eqn:sorting}
\hat{\theta}_{sas} \,\,=\,\,   \underset{\theta}{\arg\min}  \displaystyle  \sum_{i=1}^{m} \sum_{j=1}^m  \,\sqrt{    \left( \frac{\partial \theta}{\partial x}  \right)_{i,j}^2  \,\,+\,\,\left( \frac{\partial \theta}{\partial y}\right)_{i,j}^2    }    \,\,\,\,\,\, \text{subject to }   \,\|  \theta  - H   \|_2  \,\leq \,\epsilon, 
\end{equation}
where $\epsilon > 0$  is  a tuning parameter, $H$   is a smooth version of $A$,  and $ \frac{\partial \theta}{\partial x}$  $ \left(\frac{\partial \theta}{\partial y}\right)$  is a discrete  derivative in the direction of the x (y) axis.

Finally,  we  emphasize  that while several  graphon estimation methods exist, many properties of estimators  are unknown. For instance,  the estimator from \cite{zhang2015estimating}  is consistent  when the true graphon is  piecewise Lipschitz  but it is not known if such estimator can  be nearly minimax when the true  model is the stochastic block model. Also,  USVT  can  perform well under the stochastic block model assumption or some smooth condition on $f_0$,  but  performance guarantees are unknown when  $f_0$ is  piecewise H\"{o}lder  or it is a general bounded variation function.

\section{General approach}

In this section  we  propose  a general class of estimators  filling in the details of our discussion in Section \ref{sec:intro}. We begin by    reviewing some background on the traveling salesman problem. This is then related to the graphon estimation problem giving rise to our family of estimators. 

\subsection{Nearest  neighborhoods  construction}
\label{sec:nn}

Our approach to construct a grid graph is motivated by the traveling salesman problem.
We describe this in generality next. Let $\mathcal{C} \,\,=\,\, \{c_1,\ldots,c_s\}$ be a set of cities with a  distance  metric $d_{\mathcal{C}} \,\,:\,\,\mathcal{C} \times \mathcal{C} \,\rightarrow \mathbb{R}$ specifying how far any pair of cities are from each other. A tour of cities is a  sequence $c_{P(1)},\ldots,c_{P(s)}$ where $P$ is a permutation of the set  $\{1,\ldots,s\}$. Thus, a tour is just an arrangement of cities such that each city appears exactly once. 

The traveling salesman problem  (see \cite{bellmore1968traveling} for a survey) is a well known problem that seeks for the tour with the minimum cost, measuring the cost  in terms of the metric 
$d_{\mathcal{C}}$, see below. The optimal tour or circuit is found by solving  
\begin{equation}
\label{tsp}
P^*\,\,\in \,\, \underset{P   \in \mathcal{S}_s }{\arg \min}\,\, C(P),
\end{equation}
where the cost of a tour $P$ is defined as
\[
C(P)\,\,\,=\,\,\,\sum_{i=1}^{s-1}  \,d_{\mathcal{C}}(c_{P(i)},c_{P(i+1)})\,\,+\,\,  d_{\mathcal{C}}(c_{P(s)},c_{P(1)}) .
\]
Unfortunately, it is known that (\ref{tsp}) is NP-hard, e.g.  \cite{rosenkrantz1977analysis}. Despite this challenge, there exist different approximation  algorithms for solving (\ref{tsp}). For instance,  \cite{rosenkrantz1977analysis} studied approximation algorithms which run in polynomial time. From such methods,  we will use the nearest neighbor algorithm (NN) which starts from a random city, and then proceeds iteratively by choosing the closest city to the current city. Specifically,  at first the algorithm visits $\hat{\tau}(1)$  for some $\hat{\tau}(1) \in \mathcal{C}$, perhaps chosen  randomly. Next,  the NN algorithm visits $\hat{\tau}(2)$, where  
\[
\hat{d}(\hat{\tau}(1),\hat{\tau}(2)) \,\,\leq \,\,    \hat{d}(\hat{\tau}(1),c),\,\,\,\,\,\, \forall  c \in \mathcal{C} \backslash \{ \hat{\tau}(1) \} .
\]
Then, the NN algorithm continues sequentially visiting  $\hat{\tau}(j)$   at iteration $j$  where
\[
\hat{d}(\hat{\tau}(j-1),\hat{\tau}(j)) \,\,\leq \,\,    \hat{d}(\hat{\tau}(j-1),c),\,\,\,\,\,\, \forall  c \in \mathcal{C} \backslash \{\hat{\tau}(1),\ldots,\hat{\tau}(j-1)  \}.
\]

Although, the nearest neighbor method is straightforward, it enjoys some attractive performance guarantees. In particular, 
denoting the permutation associated with NN by $\hat{P}$, Theorem 1 in  \cite{rosenkrantz1977analysis} shows that
\begin{equation}
\label{eqn:nn_property}
C(\hat{P} )\,\,\leq \,\, \left(1 + \frac{\log_2 s}{2} \right)\,\, C(P^*),
\end{equation}
provided  that  $\hat{d}$  satisfies  the triangle inequality.

Moreover, by its mere definition, one can see that the computational complexity of running the NN algorithm  is $O(w\,s^2)$, where $w$ is the computational cost of evaluating $d_{\mathcal{C}}$.\\

\subsection{Class of estimators}
\label{sec:class_of_estimators}

As anticipated,  our approach is based on running   2D  grid fused lasso denoising on the data $A$  with a carefully constructed graph.  We will now state more precisely  how to arrive to our estimator  $\hat{\theta}  \in  \mathbb{R}^{n \times n} $  for   $\theta^*$.

In order to avoid correlation  between  the constructed ordering  and the signal to denoise,  for  $m  \in [n]$ we   use the data $A_{\cdot,[m]}$  to construct 
metric  $\hat{d}  \,:\, ([n]  \backslash [m] )\times ([n]  \backslash [m] )  \rightarrow \mathbb{R}$. Once the metric  is constructed, we can think of the elements of $([n] \backslash  [m])$  as a set of cities,   and for any two cities $i,j \in [n]$,  the distance $\hat{d}(i,j)$  tells us how far the cities  are. A  discussion on  choices of the metric  $\hat{d}$  will be given later.

 \textbf{Choice of $m$:} Throughout  we assume that  $m $ is chosen  to satisfy $m \asymp n$, thus there exists positive  constants $c_1$ and $c_2$  such that $c_1\,m \,\leq \, n  \,\leq  \,c_2\,m$. For instance, we could take  $ m  =  \floor{n/2}$.

A natural  way to arrange the cities (nodes), in the graphon estimation problem,  would be to  place   cities  that are close to each other in the sense of the metric  $\hat{d}$ as adjacent. This would make sense  if the  graphon  has some  underlying structure, for instance  if the ground truth is the stochastic block model. We would also  require that  the distance $\hat{d}(i,j)$  is a reasonable  approximation to a metric $d^*(\xi_i,\xi_j)$,  such that  $f_0(\xi_{i},\cdot)$  and $f_0(\xi_j,\cdot) $  are ``similar" if  $d^*(\xi_i,\xi_j)$  is small. We  will be precise  in  stating our assumptions but for now we proceed to construct our proposed approach.

Motivated  by the  discussion above,  we use the NN  algorithm  (as  discussed in Section \ref{sec:nn}) on the cities  $[n] \backslash [m]$   with distance  $\hat{d}$. We let    $\hat{\tau}$   be the corresponding  function  $\hat{\tau} \,:\,  [n-m] \,\rightarrow\,   ([n] \backslash [m]) $, such  that the NN algorithm first visits city $\hat{\tau}(1)$, next city $\hat{\tau}(2)$,  and so on.

Using the ordering $\hat{\tau}$,  we construct  a signal    $y \in \mathbb{R}^{ (n-m) \times (n-m)    }$    satisfying  $y_{i,j} \,=\,   A_{ \hat{\tau}(i),  \hat{\tau}(j)   }  $ for all $i,j \in [n-m]$.  We also construct the 2D  grid graph  $G \,=\, (V,E) $  with  set of nodes
\[
   V  \,=\,  \{(i,j)  \,:\,   i,j \in [n-m]  \},
\]
and set of edges   
$$E  \,=\, \{ (e^+,e^-)  \in [n-m]^2   \times [n-m]^2  \,:\,  \|  e^+  \,-\,e^- \|_1  = 1     \}.   $$
 We also use  $\nabla_G$ to denote    an oriented  incidence  operator of $G$. Thus,  $\nabla_G   \,:\,  \mathbb{R}^{(n-m)\times (n-m)}  \,\rightarrow\,  \mathbb{R}^{\vert E \vert} $  where    for  $e \,=\,  (e^+,e^{-}) \in E$  we have
 \[ 
 (\nabla_G \theta)_{e}  \,\,=\,\, \theta_{e^+} \,\,-\,\, \theta_{e^-}.
 \]
 
  Using the graph   $G$,  we proceed   to construct  our estimator  by  solving  a graph  fused lasso  problem. Doing so, we first  find  $\hat{\beta}  \in \mathbb{R}^{ (n-m) \times (n-m) }$
  as the solution to
  \begin{equation}
  \label{eqn:estimator}
  \underset{ \beta \in \mathbb{R}^{ (n-m) \times (n-m) } }{ \text{minimize} }  \,\,  \frac{1}{2} \| y \,-\,  \beta \|_F^2\,\,+\,\,  \lambda\,\|\nabla_G \beta \|_1,    
  \end{equation}
  for a tunning parameter $\lambda  >0$. We then set  $\hat{\theta}_{  \hat{\tau}(i),\hat{\tau}(j)}   \,=\,    \hat{\beta}_{i,j}$  for  all  $i,j \in [n-m]$. 
  
  The procedure above  allows us to estimate  $\theta^*_{i,j}$ for  all  $i,j \in [n]\backslash [m]$. In a similar way we can also construct  estimates  of  $\theta^*_{i,j}$ for  all  $i,j \in  [m]$.  The idea is to  have  an ordering of the nodes  in $[m]$  by  using the NN  algorithm with a metric that only involves data from $A_{ \cdot,  ([n] \backslash  [m]) }$, and then  we   solve  a 2D fused lasso  problem.
  
  
 As for  the estimates of $\theta^*_{i,j}$ for  all  $i \in [n]\backslash [m]$  and  $j \in [m]$,  we proceed  in a similar way but using two different orderings. The first ordering   $\hat{\tau}_1$  is obtained  by running the NN algorithm on the set of cities $[m]$ with a metric  depending on the data $A_{[m], [m] }$.  For the second ordering  we  use the NN algorithm  with set of cities $([n]\backslash [m])$  and metric depending the data $A_{([n]\backslash [m]),([n]\backslash [m])}$. 
    
    Finally,  we emphasize that  we do the portioning of the data in order to avoid correlation between the  ordering and the signal  to denoise. This is done to keep the  analysis  mathematically correct. However, in practice, one can obtain  a single  ordering  and then run a  fused lasso problem  as in (\ref{eqn:first_problem}).  We have noticed that such  an  approach  works well in practice.

 \paragraph{Computational cost.}
  As stated in the previous subsection, the computational cost  associated with the NN  algorithm is $O(s\,n^2)$, where $s$  is the cost of computing   the distance between any two  nodes.
Note that this  can be reduced  if there are  multiple processors available.  In such case, the distance  from a node to the remaining nodes  could be computed by partitioning the  nodes and performing computations in parallel. 

As for the fused lasso computation, this can be done using the efficient algorithm  from \cite{barbero2014modular},  or the  ADMM  solver from \cite{tansey2015fast}  which is amenable to parallel computing. 
  
\subsection{Choices of  metric $\hat{d}$}
\label{sec:d}

Clearly,   the class of methods  described above  can be used  with any metric  $\hat{d}$  on the set of nodes on the graphon  estimation. The purpose  of this  section is to highlight  different  choices  of $\hat{d}$, some of which  have  appeared in the literature in the context of other  estimators.


For simplicity,  we will focus on constructing $\hat{d}$  for  the  case of estimating  $\theta^*_{i,j}$ for  all  $i,j \in [n]\backslash [m]$. The  remaining  cases  described in 
Section \ref{sec:class_of_estimators}  can be  constructed in a similar way.

\subsubsection{Inner product based  distance}
Our first natural approach for constructing  $\hat{d}$
 is to  consider the  metric proposed  in \cite{zhang2015estimating}. Specifically,  we set
\begin{equation}
\label{eqn:distance_smoothing}
\hat{d}_I(i,i^{\prime})  \,\,=\,\, \,  \underset{  k   \in [m] }{\max }\, \sqrt{\frac{1}{n} \vert  \langle  A_{i, [m]},  A_{k, [m] } \rangle    \,-\,    \langle  A_{i^{\prime}, [m]},  A_{k, [m] } \rangle   \vert      }, \,\,\,\,\,\forall  i, i^{\prime}  \in [n] \backslash [m].
\end{equation}
We  call NN-FL  the   estimator   from Section \ref{sec:class_of_estimators}  when  the distance   $\hat{d}$ is taken as $\hat{d}_I$ in (\ref{eqn:distance_smoothing}).

Importantly, our  modification of the  distance  defined in  \cite{zhang2015estimating}  does  satisfy the  triangle inequality,  which is sufficient for the NN algorithm  to satisfy 
(\ref{eqn:nn_property}).

\subsubsection{Sorting}
 
  Another  choice for the  distance $\hat{d}$ is
\begin{equation}
\label{eqn:distance_sorting}
\hat{d}_1(i,i^{\prime})  \,\,=\,\,    \left\vert  \frac{1}{m} \sum_{j \in [m]}A_{i,j}   \,-\,  \frac{1}{m} \sum_{j \in [m]}A_{i^{\prime},j} \right\vert,  \,\,\,\,\forall i,i^{\prime }  \in [n] \backslash [m]. 
\end{equation}
Thus,  for  any two  nodes, the distance is the absolute  value of the difference  between  the degrees (normalized by $m$) based on  the data  $A_{\cdot,[m]}$. Since  such degrees  are  numbers and the metric is the Euclidean  distance,  the optimal  tour (traveling salesman problem solution ) is the ordering obtained by sorting the degrees. This is  the  ordering  constructed in the  method from \cite{chan2014consistent}. The difference  is that we use the fused lasso penalty for denoising without preliminary smoothing of the data, whereas \cite{chan2014consistent} use the penalty in (\ref{eqn:sorting}) with a smoothed version of $A$.


Throughout, whenever  we use the  distance  (\ref{eqn:distance_sorting})  we will refer to the estimator  from Section \ref{sec:class_of_estimators}  as sort and smooth fused lasso  graphon estimation (SAS-FL). In the experiments section we will see that, as expected, the empirical performance of SAS-FL  is similar to  SAS  from \cite{chan2014consistent}, although the former seems to perform sightly  better in the examples in Section \ref{sec:experiments}.

%


  \subsubsection{$\ell_1$  distance and other choices}
 
 Clearly,  a different  metric  can be obtained   by simply taking the  $\ell_1$  norm  of the difference between rows  or columns  of the incidence  matrix $A$.  Thus, we can define 
 \[
   d_{\ell_1}(i,i^{\prime})  \,=\,  \| A_{i,[m]} \,-\, A_{i^{\prime},[m]}  \|_1,  \,\,\,\,\,\forall  i, i^{\prime}  \in [n] \backslash [m].
 \]
We will refer to the  respective procedure using this metric as  $\ell_1$-FL.  We will not study  convergence properties of this method, although  we will see that it is a reasonable method in practice.


Finally,  we  notice that  the  metric  $\hat{d}$  could  be constructed using side information about the nodes. For instance, using covariates, or repeated measurements  of the network if available. 

 \section{Analysis of the NN-FL  estimator on extensions of piecewise H\"{o}lder classes}
 \label{sec:extenshions_holder}
  
  The purpose of this section is  to study  convergence  properties  of the  NN-FL  estimator. 
 We analyze  the  performance of the NN-FL estimator for classes of graphons   that  extent  the notion of piecewise H\"{o}lder functions. We notice that a particular  instance of this  was studied in \cite{zhang2015estimating}. 
 
 To formalize, if  $\alpha \in (0,1]$, we say that $f_0$  is piecewise H\"{o}lder of exponent $\alpha$  if the following holds. There exists a  partition of intervals $\mathcal{A}_1,\ldots,\mathcal{A}_r$  of $[0,1]$, such that if $u,v \in \mathcal{A}_l$  for some $l \in [r]$, then
 \begin{equation}
 \label{eqn:piecewise_holder}
 \underset{t \in [0,1]}{\sup}\, \left\vert f_0(u,t) \,-\,  f_0(v,t)  \right \vert\,\,\leq \,\, L_1\,\vert u\,-\, v\vert^{\alpha}, 
 \end{equation}
 and 
 \begin{equation}
 \label{eqn:piecewise_holder2}
 \underset{t \in [0,1]}{\sup}\, \left\vert f_0(t,u) \,-\,  f_0(t,v)  \right \vert\,\,\leq \,\, L_1\,\vert u\,-\, v\vert^{\alpha}, 
 \end{equation}
 for  a positive  constant  $L_1$  independent of  $u$ and $v$.

 This class  of functions  appeared  in the   analysis of the fused lasso estimator in the context of 2D non-parametric  regression, see \cite{hutter2016optimal}. There,  the authors  showed that  the fused lasso  attains  the rate  $n^{- \alpha/(1+\alpha)   }$.
 
 Next, we state  two assumptions  which  hold if (\ref{eqn:piecewise_holder}) is satisfied along with Model  \ref{eqn:model1}. Thus, we relax  the piecewise H\"{o}lder  condition. 
 
 \begin{assumption}
 	\label{as1}
 	There exists a positive constant $c_1 $	  such that with probability   approaching  one 
 	\[
 	\underset{  k \in [m]  }{\min}   \displaystyle  \int_{0}^1   \vert  f_0(\xi_k,t)  \,-\,  f_0(\xi_{i},t)  \vert dt \,\leq \,\,    c_1  \left(  \frac{\log n}{n}\right)^{\alpha} ,\,\,\,\,\,\, \forall  i  \in  [n] \backslash  [m].      
 	\]
 	and
 	\[
 	\underset{  k \in [m]  }{\min}   \displaystyle  \int_{0}^1   \vert  f_0(t,\xi_k)  \,-\,  f_0(t,\xi_{i})  \vert dt \,\leq \,\,    c_1  \left(  \frac{\log n}{n}\right)^{\alpha} ,\,\,\,\,\,\, \forall  i  \in  [n] \backslash  [m].      
 	\]
 	for $\alpha \in (0,1/2]$.
 \end{assumption}
 
 Note that we constrain $\alpha \in (0,1/2]$,   and the case $\alpha \in [1/2,1]$  will not be studied. Instead,  we will analyze  the stochastic  block model in the next subsection. 
 
 To understand why Assumption  \ref{as1}  holds  when  condition (\ref{eqn:piecewise_holder})  is met with $\alpha \in (0,1/2]$, we refer to the  work in \cite{von2010hitting}.  There, the authors showed that, with probability approaching one,  the following holds: For each $\xi_i$, the  set of its $K $-th nearest neighbors (with Euclidean distance)  among the points $\{x_j\}_{j \in [n ] \backslash \{i\} }$  are all within distance $c\log n /n$  for an appropriate  choice of $K$, and for some constant $c >0$.

 We now state our second assumption. This involves  the  quantity  that the penalty $\|\nabla_G  \cdot\|_1$ in (\ref{eqn:estimator})  emulates. 
 
 \begin{assumption}
 	\label{as2}
 	There exists   an unknown (random) permutation $\tau^* \in \mathcal{S}_{n-m-1}$  such that
 	\[
 	\displaystyle \sum_{i=1}^{n-m-1}  \displaystyle \int_0^1    \vert f_0(\xi_{ \tau^*(i) },t) \,-\,   f_0(\xi_{\tau^*(i+1) },t)  \vert   \,dt \,\,=\,\,  O_{\mathbb{P}}(  n^{1-\alpha} \,\log^{\alpha} n   ),
 	\] 
 	and
 	\[
 	\displaystyle \sum_{i=1}^{n-m-1}  \displaystyle \int_0^1    \vert f_0(t,\xi_{ \tau^*(i) }) \,-\,   f_0(t,\xi_{\tau^*(i+1) })  \vert   \,dt \,\,=\,\,  O_{\mathbb{P}}(  n^{1-\alpha} \,\log^{\alpha} n   ),
 	\] 
 	where $\alpha \in (0,1/2]$.
 \end{assumption}
 
 If (\ref{eqn:piecewise_holder}) and (\ref{eqn:piecewise_holder2}) hold then Assumption \ref{as2}  will  be satisfied. To verify this,  simply take $\tau^*$  as the ordering of the elements of $\{\xi_i \}_{i \in  ([n ]\backslash [m])}$, i.e,
 \[
 \xi_{\tau^*(1)} \,<\,\xi_{\tau^*(2)}  \,<\,  \ldots \,<\, \xi_{\tau^*(n-m)}.
 \]
 Once again, we exploit properties of nearest neighbor graphs from \cite{von2010hitting}. 

 With these conditions, we are now ready  to present our next  result.
 
 \begin{theorem}
 	\label{thm:rate}
 	Let us suppose that Assumptions \ref{as1}-\ref{as2}  hold, and  let $\hat{\tau}$ be constructed  as in Section \ref{sec:class_of_estimators}  by setting $d \,:=\, \hat{d}_I$ (see Equation \ref{eqn:distance_smoothing}). Then  for an appropriate choice of $\lambda$,  the corresponding estimator  in (\ref{eqn:estimator})   satisfies
 	\[
 	\displaystyle    \frac{1}{n^2}   \sum_{ i,j \in [n-m],  \,i < j  }  \left( \hat{\theta}_{ \hat{\tau}(i), \hat{\tau}(j)   }  \,-\,   \theta_{\hat{\tau}(i), \hat{\tau}(j)  }^*  \right)^2   \,\,=\,\,  
 	O_{\mathbb{P}} \left(   \frac{  \log^{2\,+\, \frac{\alpha}{2}   }n  }{n^{  \frac{\alpha}{2} }}  \right).
 	\]
 \end{theorem}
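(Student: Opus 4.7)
The plan is to recast the NN-FL estimator as a 2D grid fused lasso denoising problem on an $(n-m)\times(n-m)$ lattice and invoke an existing oracle inequality. Setting $\beta^*_{i,j} := \theta^*_{\hat{\tau}(i),\hat{\tau}(j)}$ and $y_{i,j} := A_{\hat{\tau}(i),\hat{\tau}(j)}$, we have $y = \beta^* + \epsilon$ where $\epsilon_{i,j}$ is centered and bounded by $1$, hence sub-Gaussian. By the data splitting described in Section~\ref{sec:class_of_estimators}, $\hat{\tau}$ depends only on $A_{\cdot,[m]}$ and is independent of $\epsilon$. Conditionally on $\hat{\tau}$, a 2D grid fused lasso oracle inequality of Hutter--Rigollet / Padilla type yields, for an appropriate $\lambda \asymp \log n$,
\[
\frac{1}{n^2}\, \| \hat{\beta} - \beta^* \|_F^2 \,=\, O_{\mathbb{P}}\!\left( \frac{\log^{c} n}{n^2}\, \| \nabla_G \beta^* \|_1 \right)
\]
for some absolute constant $c$; here we used $(n-m)^2 \asymp n^2$. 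It then suffices to show $\| \nabla_G \beta^* \|_1 = O_{\mathbb{P}}(n^{2-\alpha/2}\log^{c'} n)$ for an explicit $c'$ so that the logarithmic exponents combine to $2 + \alpha/2$.

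To control $V := \| \nabla_G \beta^* \|_1$, I would first replace discrete row/column differences by integrals. Since the coordinates $\{\xi_{\hat{\tau}(j)}\}_j$ form an i.i.d.\ uniform sample of size $\asymp n$, a uniform empirical-process bound over the class $\{|f_0(u,\cdot) - f_0(v,\cdot)| : u, v \in [0,1]\}$ gives $\sum_j |\beta^*_{i,j} - \beta^*_{i+1,j}| \lesssim n\, w_i$ with $w_i := \int_0^1 |f_0(\xi_{\hat{\tau}(i)},t) - f_0(\xi_{\hat{\tau}(i+1)},t)|\, dt$. By symmetry of $f_0$ the horizontal contribution is of the same order, so $V \lesssim n \sum_i w_i$. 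The heart of the proof is then a uniform concentration lemma relating $\hat{d}_I$ to the $L^1$-slice distance in both directions: with high probability, for all $i,i' \in [n]\setminus[m]$,
\[
\| f_0(\xi_i,\cdot) - f_0(\xi_{i'},\cdot) \|_{L^1} \,\lesssim\, \hat{d}_I(i,i')^2 + \delta_n, \qquad \hat{d}_I(i,i')^2 \,\lesssim\, \| f_0(\xi_i,\cdot) - f_0(\xi_{i'},\cdot) \|_{L^1} + \delta_n,
\]
for a vanishing $\delta_n$. Combined with $\hat{d}_I \leq 1$, the first inequality gives $w_i \lesssim \hat{d}_I(\hat{\tau}(i),\hat{\tau}(i+1)) + \delta_n$, so $\sum_i w_i \lesssim C(\hat{\tau}) + n\delta_n$, where $C(\hat{\tau})$ is the cost of the NN tour under $\hat{d}_I$.

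It remains to bound $C(\hat{\tau})$. Because $\hat{d}_I$ satisfies the triangle inequality, the NN guarantee (\ref{eqn:nn_property}) yields $C(\hat{\tau}) \leq (1 + \log_2(n-m)/2)\, C(\tau^*)$, where $\tau^*$ is the oracle permutation from Assumption~\ref{as2}. Applying the reverse direction of the concentration lemma together with Cauchy--Schwarz,
\[
C(\tau^*) \,=\, \sum_i \hat{d}_I(\tau^*(i),\tau^*(i+1)) \,\lesssim\, \sqrt{(n-m) \sum_i w_i^*} \,+\, n\sqrt{\delta_n},
\]
where $w_i^* := \| f_0(\xi_{\tau^*(i)},\cdot) - f_0(\xi_{\tau^*(i+1)},\cdot) \|_{L^1}$. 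Assumption~\ref{as2} then gives $\sum_i w_i^* = O_{\mathbb{P}}(n^{1-\alpha}\log^{\alpha} n)$ and hence $C(\tau^*) = O_{\mathbb{P}}(n^{1-\alpha/2}\log^{\alpha/2} n)$. Assembling, $V = O_{\mathbb{P}}(n^{2-\alpha/2}\log^{1+\alpha/2} n)$, and substitution into the oracle inequality produces the claimed rate.

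The main technical obstacle is the uniform two-sided concentration of $\hat{d}_I(i,i')^2$ around the $L^1$-slice distance, uniformly over the $O(n^2)$ pairs $(i,i')$ and all $m$ candidate witnesses $k$ in the definition of $\hat{d}_I$. The direction $\| f_0(\xi_i,\cdot) - f_0(\xi_{i'},\cdot) \|_{L^1} \lesssim \hat{d}_I(i,i')^2 + \delta_n$ is the harder one: it requires the $[m]$-block to supply a witness $k$ whose slice $f_0(\xi_k,\cdot)$ aligns sufficiently well with the sign of $f_0(\xi_i,\cdot) - f_0(\xi_{i'},\cdot)$, which is precisely where Assumption~\ref{as1} and the richness implied by $m \asymp n$ enter. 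Sharp Bernstein-type tail bounds followed by a union bound, combined with the independence afforded by the data splitting, then handle the data-dependence of $\hat{\tau}$.
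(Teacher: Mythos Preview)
Your proposal is correct and mirrors the paper's proof: the paper also reduces to the 2D fused-lasso oracle inequality (Lemma~\ref{lem:tv:rates}) after bounding $\|\nabla_G\beta^*\|_1$ via the NN guarantee~(\ref{eqn:nn_property}), the two-sided comparison between $\hat d_I$ and the $L^1$-slice distance (using Assumption~\ref{as1} to supply the witness $k$), and Cauchy--Schwarz together with Assumption~\ref{as2}. One small correction: the comparison that is actually provable is $\|f_0(\xi_i,\cdot)-f_0(\xi_{i'},\cdot)\|_{L^1}\lesssim \hat d_I(i,i')+\delta_n$ (linear in $\hat d_I$, obtained via $L^1\le L^2$ and $L^2^2\lesssim \bar d_I^{\,2}+(\log n/n)^\alpha$), so your detour through $\hat d_I\le 1$ is unnecessary but harmless.
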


 Thus,  in the  class of  graphons implied by Assumptions \ref{as1}--\ref{as2}, the NN-FL estimator  attains the rate $n^{-\alpha/2}$  after ignoring  logarithmic terms. To the best of our knowledge, other estimators have not been studied  on this class of functions. The most related  work comes from \cite{zhang2015estimating}  who studied  piecewise Lipschitz graphons  for which their estimator attains the rate $n^{-1/2}$.


  
  \section{Analysis of the SAS-FL estimator on BV  functions}

  
 We focus on the class of  graphons that satisfy  the  assumption  of bounded variation. This condition  has appeared  in the statistics literature  due to its flexibility. Indeed, it encompasses a very large  class of functions that includes reasonable subclasses such as piecewise Lipschitz. In one dimensional  non-parametric  regression,  \cite{mammen1997locally}  studied   locally adaptive  estimators    that attain minimax rates when the true  regression function  has bounded variation. More recently, \cite{tibshirani2011solution,tibshirani2014adaptive}   studied  discretized  version of the estimators from \cite{mammen1997locally}. The framework   from Section \ref{sec:class_of_estimators} consists of  a particular  instance of  generalized lasso estimation studied in \cite{tibshirani2011solution}. 
 
 In one dimension, bounded variation is well defined  as follows.  A function  $f\,:\, [0,1] \,\rightarrow \,\mathbb{R}$   has bounded variation if there exists a constant $C$  such that  for any set of points    $0 \,\leq\, a_1  \,\leq \,\ldots\,\leq  a_r  \leq  1$,  $r \in \mathbb{N}$, it holds that
\begin{equation}
	\label{eqn:one_d_bv}
	\displaystyle  \sum_{l=1}^{r-1} \, \vert  f(a_l) \,-\,f(a_{l+1}) \vert     \,\,<\,\,C.
\end{equation}

When passing to  higher dimensions (in particular dimension two),  the definition  of bounded  variation  is not unique. An early work  from  \cite{clarkson1933definitions}  discussed multiple  definitions of bounded variation. Although these   days there is a widespread convention  in the definition of bounded variation in the field of  mathematics, the statistics community continues  to rely on early  definitions. For instance, perhaps implicitly \cite{sadhanala2016total}   defined  the canonical  class of bounded variation by taking  (\ref{eqn:one_d_bv})   and  assuming it holds  through  each  horizontal  and vertical chain  graph of a 2D grid graph. We now do something similar for the case of graphons.   


  \begin{assumption}
  	\label{as5}
  	We assume the data is generated as in the model implied by  (\ref{eqn:model1}). Moreover, we assume that the functions 
  	\[
  	g_1(u) \,\,=\,\,  \displaystyle \int_0^1\, f_0(u,v)dv,\,\,\,\,\,\,\,\,\,\,\,\,\,	g_2(u) \,\,=\,\,  \displaystyle \int_0^1\, f_0(v,u)dv,
  	\]
  	satisfy  the  following:
  	\begin{enumerate}
  		\item   There  exists  some positive constant $L_1$  such that 
  		\begin{equation}
  			\label{eqn:bilip}
  				L_1  \,  \vert  x \,-\, y    \vert  \,\,\leq \,\,   \vert  g_l(x) \,-\,g_l(y) \vert , \,\,\,\,\,\,\,\,\forall  x,y \in  [0,1],  \forall  l \in  \{1,2\}.  				
  		\end{equation}
  		\item \textbf{Piecewise-Monotonic:} For  $l\in \{1,2\}$ there exists a partition  $0\,<\,b_1^l\,<  \ldots\, <\,  b_r^l   \,<\,  1$  such that $g_l$  is  monotone in each of the intervals 
  		$(0,b_1^l),  (b_1^l,b_2^l),\ldots, (b_r^l,1)$.
  		
  		\item  The function $f_0$ has \textbf{bounded variation} in the following sense. There exists a positive constant $C>0$  such that  if  $ 0\,\leq\,  a_0  \,\leq \, a_1 \,\leq \,  \ldots  \,\leq \,a_s  \,\leq \, 1$  with $s \in \mathbb{N}$,  then
  		\[
  	      	\displaystyle  \sum_{l=1}^{s-1} \, \vert  f_0(a_l,t) \,-\,f_0(a_{l+1},t) \vert     \,\,<\,\,C,
  		\]
  		and
  			\[
  			\displaystyle  \sum_{l=1}^{s-1} \, \vert  f_0(t,a_l) \,-\,f_0(t,a_{l+1}) \vert     \,\,<\,\,C,
  			\]
  	\end{enumerate}
  	for  all $t \in [0,1]$.
  	
  \end{assumption}

 Importantly,  we allow for  a flexibility of the graphon by only requiring   that has bounded variation. The most restrictive assumption is perhaps that   $g$ is piecewise monotonic. As  for  the   condition  expressed by (\ref{eqn:bilip}),  we acknowledge that this requirement  appeared  in the analysis of \cite{chan2014consistent}. 
  
  \begin{theorem}
  	\label{thm:bv_class}
  	Suppose  that Assumption \ref{as5} holds.  Let $\hat{\tau}$ be constructed  as in Section \ref{sec:class_of_estimators}  by setting $d \,:=\, \hat{d}_1$ (see Equation \ref{eqn:distance_sorting}). Then  for an appropriate choice of $\lambda$,  the corresponding estimator  $\hat{\theta}$ in  (\ref{eqn:estimator})  satisfies  that  
  	\[
  	\displaystyle    \frac{1}{n^2}   \sum_{ i,j \in [n-m],  \,i < j  }  \left(   \hat{\theta}_{ \hat{\tau}(i), \hat{\tau}(j)   }  \,-\,   \theta^*_{\hat{\tau}(i), \hat{\tau}(j)  }  \right)^2   \,\,=\,\,  
  	O_{\mathbb{P}} \left(   \frac{  \, \log^{  \frac{3}{2} } n  }{\sqrt{n} \,}   \right).
  	\]
  \end{theorem}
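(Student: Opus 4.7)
The plan is to chain three ingredients: concentration of the empirical degrees that feed $\hat{d}_1$, a combinatorial bound on the total variation of $\theta^*$ along the grid induced by $\hat{\tau}$, and the $2$D fused-lasso oracle inequality. Throughout I work on the high-probability event that all relevant concentration bounds hold.

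\textbf{Step 1 (concentration and ordering).} Since $\hat{d}_1(i,i')=|\hat{g}(i)-\hat{g}(i')|$ with $\hat{g}(i)=\frac{1}{m}\sum_{j\in[m]}A_{i,j}$, the NN algorithm simply produces an approximate sort of the empirical degrees. Conditioning on $\{\xi_l\}_{l=1}^n$, $\hat{g}(i)$ has mean $\frac{1}{m}\sum_{j\in[m]}f_0(\xi_i,\xi_j)$ which (by another Hoeffding application over $j$) is itself within $O(\sqrt{\log n/m})$ of $g_1(\xi_i)$; a union bound then yields
\[
\max_{i\in[n]\setminus[m]}|\hat{g}(i)-g_1(\xi_i)|\;=\;O_\mathbb{P}\bigl(\sqrt{\log n/n}\bigr)\;=:\;\eta.
\]
Combined with the bi-Lipschitz lower bound (\ref{eqn:bilip}) and the piecewise-monotonicity of $g_1$, this shows that within each monotonicity piece of $g_1$ the NN ordering $\hat{\tau}$ agrees with the $\xi$-sort $\tau^*$ up to rank-blocks of size $K=O(\eta n/L_1)=O(\sqrt{n\log n})$: inside each such block of $K$ consecutive $\xi$-ranks the $\hat{g}$-induced order may be arbitrary, but blocks are traversed in monotone order. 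Moreover, since the optimal 1D tour on $\hat{g}$-values has cost $\le \max\hat{g}-\min\hat{g}\le 1$, the NN guarantee (\ref{eqn:nn_property}) forces the NN tour to consist of at most two long monotone runs separated by one ``return jump'' (plus the $O(\log n)$ overhead), and the same analysis works for the symmetric ordering on $[m]$.

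\textbf{Step 2 (TV bound).} Fix $t\in[0,1]$ and consider the row quantity
\[
R(t)\;=\;\sum_{i=1}^{n-m-1}\bigl|f_0(\xi_{\hat{\tau}(i+1)},t)-f_0(\xi_{\hat{\tau}(i)},t)\bigr|.
\]
Decompose the sum into (a) pairs lying inside a common block, (b) pairs straddling consecutive blocks in the monotone traversal, and (c) the single return-jump term. Within a block of $\xi$-width $K/n$, the reordering can be arbitrary but each of the at most $K$ in-block steps is bounded by the block-restricted BV of $f_0(\cdot,t)$; summing over the $n/K$ blocks and using additivity of BV over a partition of $[0,1]$ gives an in-block contribution $\le K\cdot C$. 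The between-block terms telescope (by the sorted BV hypothesis) to $\le C$, and the single return jump contributes $O(1)$. Thus $R(t)=O(\sqrt{n\log n})$ uniformly in $t$. Running the symmetric argument on columns (with the condition on $g_2$) and then summing over the $n-m$ rows and columns yields
\[
\|\nabla_G\theta^*\|_1\;=\;O_\mathbb{P}\bigl(n^{3/2}\sqrt{\log n}\bigr).
\]

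\textbf{Step 3 (oracle inequality).} Since $A_{i,j}-\theta^*_{i,j}$ is bounded (hence sub-Gaussian), the standard $2$D grid fused-lasso oracle bound (e.g.\ \cite{hutter2016optimal} or \cite{padilla2016dfs}) applied with $\lambda\asymp\sqrt{\log n}$ gives
\[
\frac{1}{n^2}\|\hat{\theta}-\theta^*\|_F^2\;\lesssim\;\frac{\|\nabla_G\theta^*\|_1\,\log n}{n^2}\;+\;\frac{\log n}{n^2}
\]
with high probability. Substituting the Step 2 bound yields $O_\mathbb{P}(n^{3/2}\sqrt{\log n}\cdot \log n/n^2)=O_\mathbb{P}(\log^{3/2}n/\sqrt{n})$, which is the claimed rate.

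\textbf{Main obstacle.} Step 2 is the crux. A naive per-edge bound gives only the trivial row TV $O(n)$ (since $|f_0|\le 1$), and a naive displacement bound for NN in 1D is also $O(n)$ because of the return jump. The key is the interplay of two scales: the block size $K=\sqrt{n\log n}$, set by the concentration/bi-Lipschitz trade-off, controls the in-block contribution via additivity of BV over a partition, while the NN cost bound (\ref{eqn:nn_property}) together with the monotonicity of $g_1$ forces between-block transitions to behave like a sorted traversal with only one extra long jump. Balancing these two effects is what produces the exponent $3/2$ in the TV bound and ultimately the $n^{-1/2}$ rate. A minor additional nuisance is bridging the finitely many boundaries between monotonicity pieces of $g_l$, which only add $O(n)$ to $\|\nabla_G\theta^*\|_1$ and are absorbed in the dominant $n^{3/2}$ term.
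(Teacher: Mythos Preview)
Your overall strategy---degree concentration $\Rightarrow$ rank displacement $O(\sqrt{n\log n})$ $\Rightarrow$ TV control on $\theta^*$ along the $\hat\tau$-grid $\Rightarrow$ 2D fused-lasso oracle---is exactly the paper's route (Lemmas~\ref{lem:concentration_to_g}--\ref{lem:comparison_permutations}, Theorem~\ref{lem:tv_control}, then Lemma~\ref{lem:tv:rates}). Two peripheral remarks: for $\hat d_1$ the optimal tour on a line is simply the sort, and that is what the paper and Lemma~\ref{lem:comparison_permutations} actually use, so the NN ``two monotone runs with one return jump'' discussion is both unnecessary and (as a statement about 1D NN) not true in general; and in Step~3 the oracle inequality must be the dependent-error version (Lemma~\ref{lem:tv:rates}), since the entries of $A-\theta^*$ above and below the diagonal coincide.

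The genuine gap is in Step~2. The rank-displacement bound $|\hat\tau^{-1}(i)-\tau^{-1}(i)|\le K$ does \emph{not} imply that ``blocks of $K$ consecutive $\xi$-ranks are traversed in monotone order'': with displacement $\le 2$, the true order $(1,2,3,4,5,6)$ may be perturbed to $(3,1,2,6,4,5)$, whose block-label sequence for blocks of size $2$ is $2,1,1,3,2,3$. Hence the between-block differences do \emph{not} telescope to $C$; each block boundary may be crossed $\Theta(K)$ times. What the displacement bound \emph{does} give (via Lemma~\ref{lem:Dvoretzky}) is that for every $i$ away from the monotonicity boundaries, the pair $\xi_{\hat\tau(i)},\xi_{\hat\tau(i+1)}$ lies in a common interval of width $O(K/n)$. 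Your in-block counting already handles this once you apply it to \emph{all} pairs: a block of size $\asymp K$ is touched by $O(K)$ pairs, and each such pair is bounded by the local BV, so $R(t)\le O(K)\sum_b V_b=O(KC)=O(\sqrt{n\log n})$. The paper gets the same bound by a closely related device (Theorem~\ref{lem:tv_control}): it uses $s\asymp K$ shifted partitions of $[0,1]$ so that each pair can be assigned \emph{injectively} to one box, and then sums the BV inequality over the $s$ offsets. Either way, drop the telescoping claim and simply absorb between-block pairs into the same $O(KC)$ count; with that correction your argument goes through.
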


  Thus, on the class of functions  from Assumption \ref{as5},  the SAS-FL  estimator  attains the rate  $n^{-1/2}$,  which matches  the theoretical  result  from \cite{zhang2015estimating}  but for the class of piecewise Lipschitz functions.
  
  Interestingly,   we are the first  to study  graphon estimation with the  bounded  variation assumption.

 \section{Experiments}
 \label{sec:experiments}

The purpose  of this section is to shed some lights on the empirical performance of the class of estimators proposed in this paper. Evaluations of performance  are presented  next on both simulated and real networks.

\subsection{Network denoising}

We begin by  considering   examples   of   simulated  data   that are intended  to  test  the validity  of our general  class of methods on qualitatively different scenarios. The specifications  of $\hat{d}$ that we consider  are those described in Section \ref{sec:d}.

As benchmarks  we consider  the   following approaches.  The neighborhood  smoothing  method (NS)   from \cite{zhang2015estimating},   universal singular value thresholding (USVT) algorithm from \cite{chatterjee2015matrix}, and the sort and smooth (SAS)  method from   \cite{chan2014consistent}.

In all comparisons, the MSE  is used as a measure of performance.  Four different  scenarios are constructed. In the first scenario  the ground truth is  the stochastic block model with 12 communities. In our second example,  $f_0$ is taken as  piecewise smooth, where  locally  the function behaves like linear combinations of the  $\sqrt{\cdot}$  function applied to each coordinate.
We also consider a piecewise  constant   model (not a stochastic block). In the latter, the  degree function behaves  locally as a constant  making estimation difficult for both SAS  and SAS-FL. Our final example consists of $f_0$  being a polynomial of two variables.  Figures \ref{fig:ex1}-\ref{fig:ex4}  offer a visualization of the examples.

\begin{figure}[bp!]
	\begin{center}
		\includegraphics[width=1.55in,height= 1.59in]{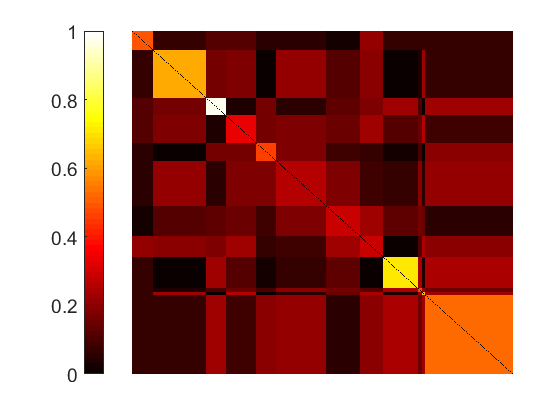}
		\includegraphics[width=1.55in,height= 1.59in]{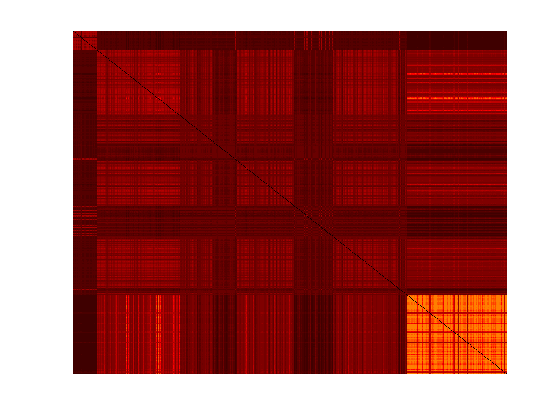}
		\includegraphics[width=1.55in,height= 1.59in]{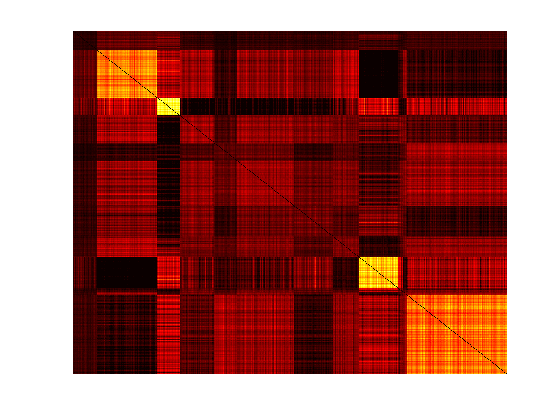}
		\includegraphics[width=1.55in,height= 1.59in]{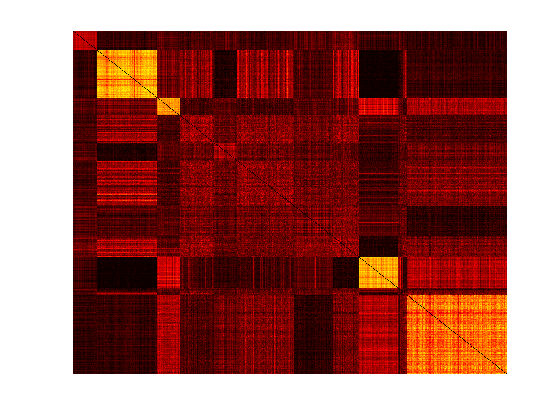}\\
		\includegraphics[width=1.55in,height= 1.59in]{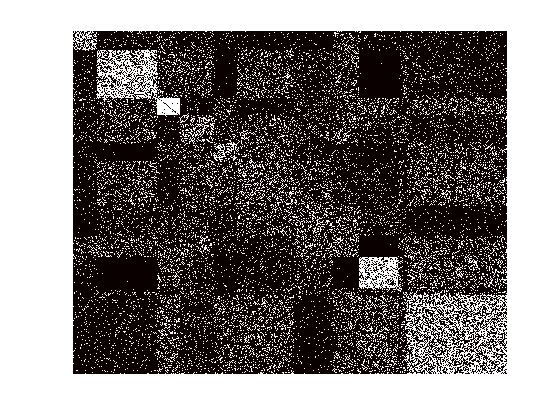}
		\includegraphics[width=1.55in,height= 1.59in]{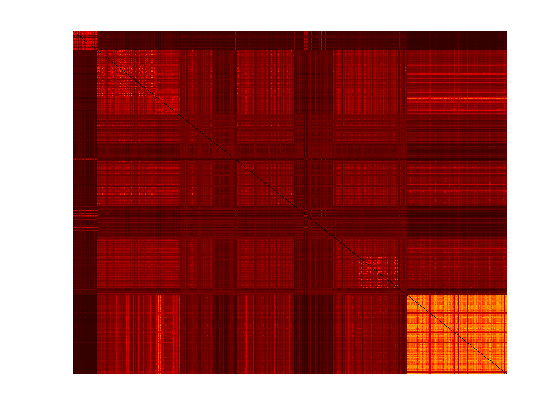}
		\includegraphics[width=1.55in,height= 1.59in]{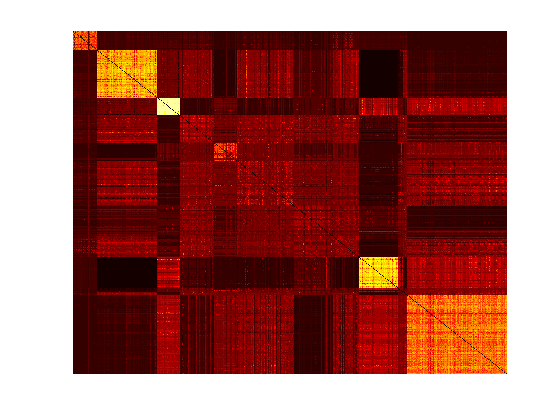}
		\includegraphics[width=1.55in,height= 1.59in]{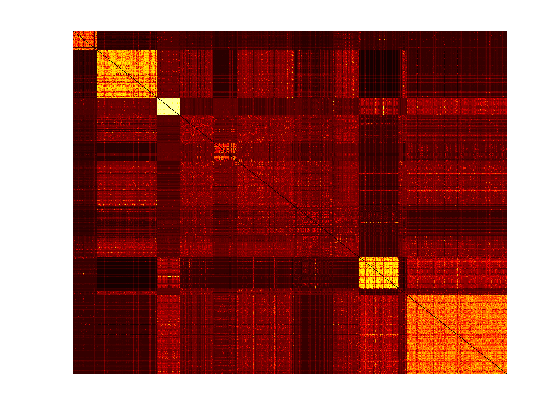}\\
		\caption{\label{fig:ex1} 
			The top left in the first  row  shows a realization of the matrix of probabilities $P$ for Example 1, here $n= 500$. Then  from left to right the panels in the first row correspond to the methods  SAS, USVT, and NS.  In the second row the leftmost plot  corresponds to a realization of the incidence matrix (A) drawn  with  the parameters in  $P$ from the first row. Then from left to right the remaining plots are associated to  SAS-FL, NN-FL, and $\ell_1$-FL.
					}
	\end{center}
\end{figure}

\begin{figure}[bp!]
	\begin{center}
		\includegraphics[width=1.55in,height= 1.59in]{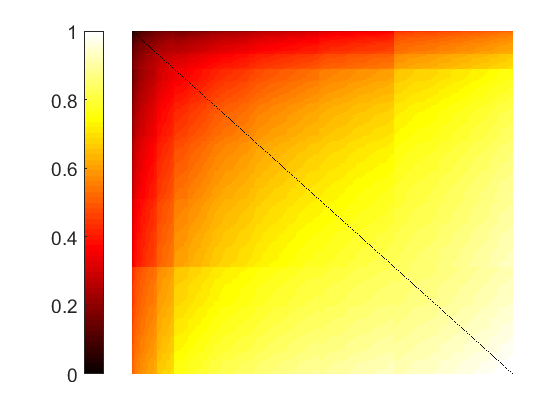}
		\includegraphics[width=1.55in,height= 1.59in]{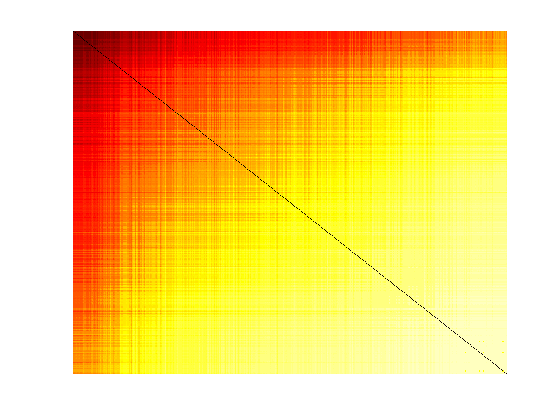}
		\includegraphics[width=1.55in,height= 1.59in]{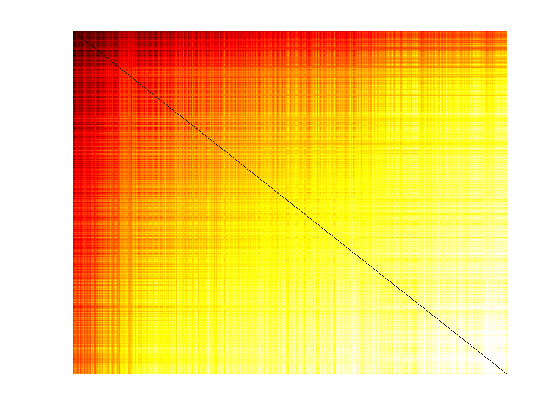}
		\includegraphics[width=1.55in,height= 1.59in]{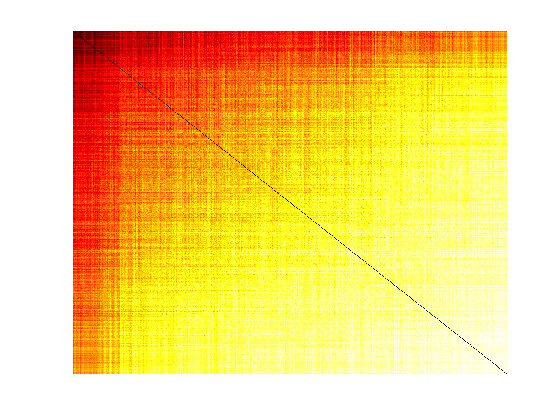}\\
		\includegraphics[width=1.55in,height= 1.59in]{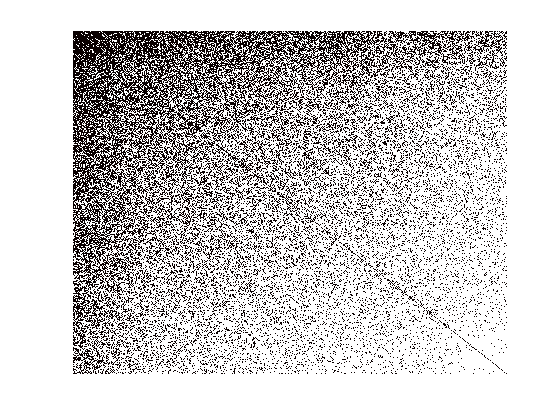}
		\includegraphics[width=1.55in,height= 1.59in]{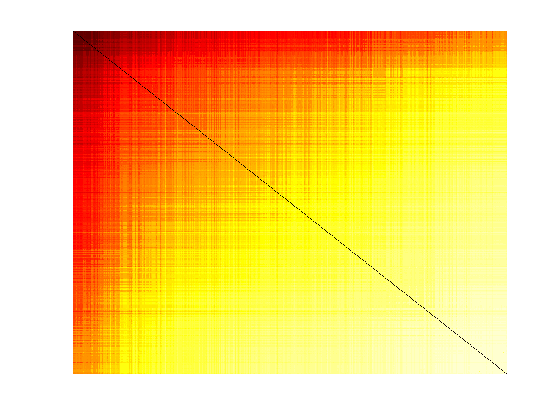}
		\includegraphics[width=1.55in,height= 1.59in]{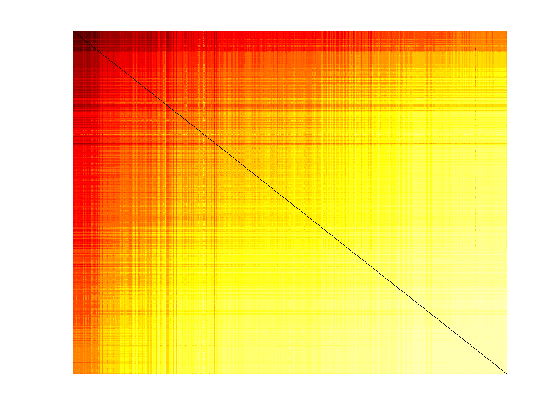}
		\includegraphics[width=1.55in,height= 1.59in]{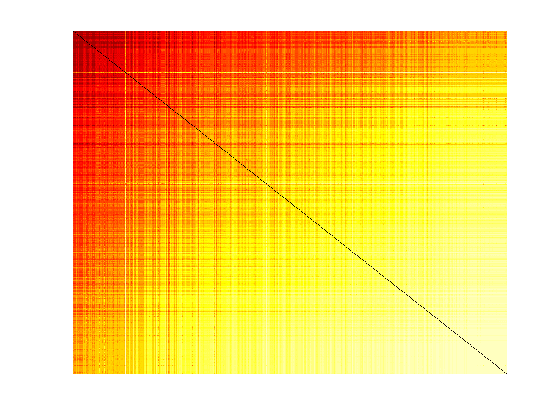}\\
		\caption{\label{fig:ex2} 
			The top left in the first  row  shows a realization of the matrix of probabilities $P$ for Example 2, here $n= 500$. Then  from left to right the panels in the first row correspond to the methods  SAS, USVT, and NS.  In the second row the leftmost plot  corresponds to a realization of the incidence matrix (A) drawn  with  the parameters in  $P$ from the first row. Then from left to right the remaining plots are associated to  SAS-FL, NN-FL, and $\ell_1$-FL.
		}
	\end{center}
\end{figure}

\begin{figure}[bp!]
	\begin{center}
		\includegraphics[width=1.55in,height= 1.59in]{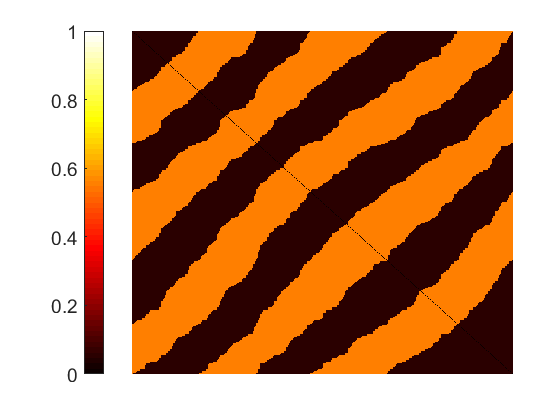}
		\includegraphics[width=1.55in,height= 1.59in]{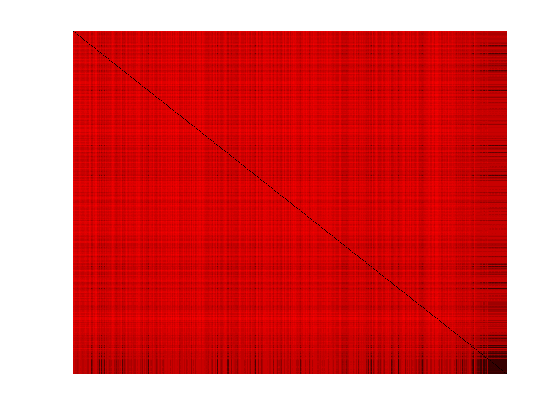}
		\includegraphics[width=1.55in,height= 1.59in]{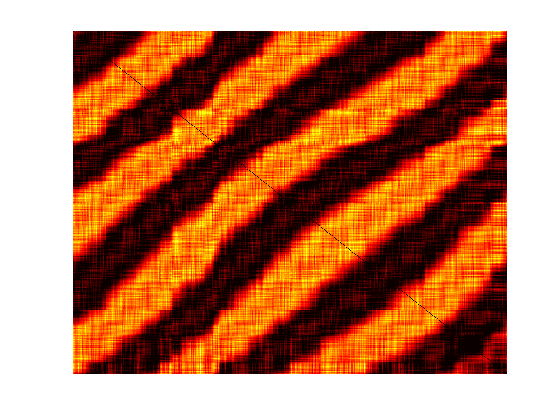}
		\includegraphics[width=1.55in,height= 1.59in]{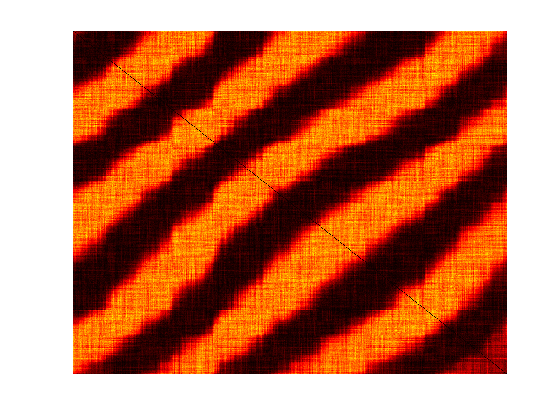}\\
		\includegraphics[width=1.55in,height= 1.59in]{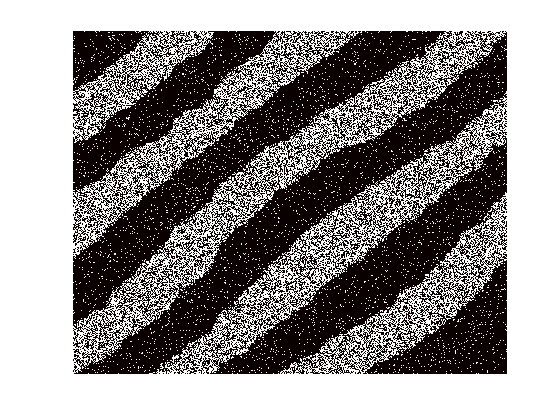}
		\includegraphics[width=1.55in,height= 1.59in]{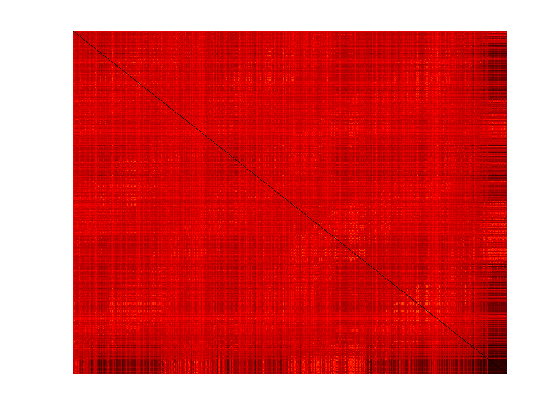}
		\includegraphics[width=1.55in,height= 1.59in]{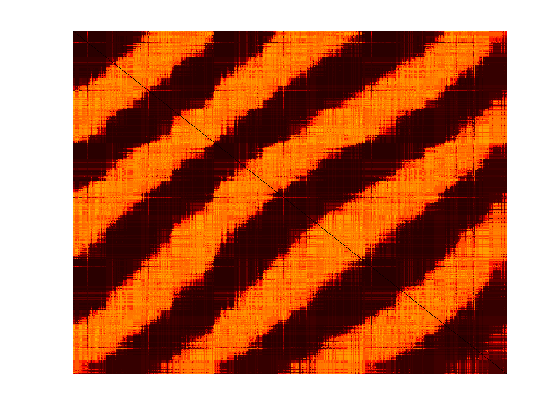}
		\includegraphics[width=1.55in,height= 1.59in]{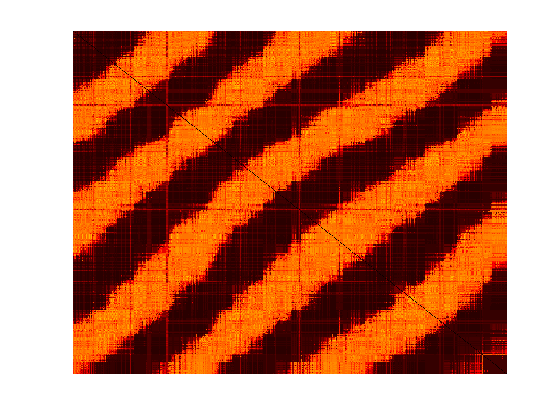}\\
		\caption{\label{fig:ex3} 
				The top left in the first  row  shows a realization of the matrix of probabilities $P$ for Example 3, here $n= 500$. Then  from left to right the panels in the first row correspond to the methods  SAS, USVT, and NS.  In the second row the leftmost plot  corresponds to a realization of the incidence matrix (A) drawn  with  the parameters in  $P$ from the first row. Then from left to right the remaining plots are associated to  SAS-FL, NN-FL, and $\ell_1$-FL.
		}
	\end{center}
\end{figure}

\begin{figure}[bp!]
	\begin{center}
		\includegraphics[width=1.55in,height= 1.59in]{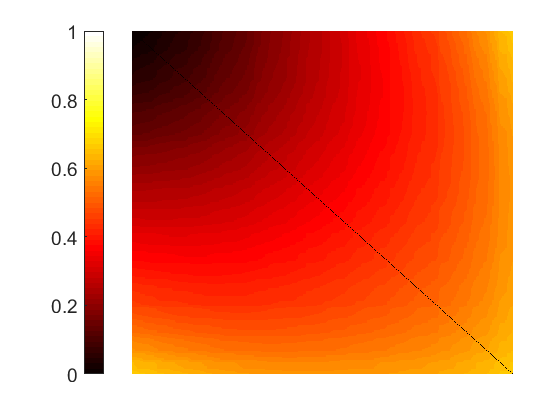}
		\includegraphics[width=1.55in,height= 1.59in]{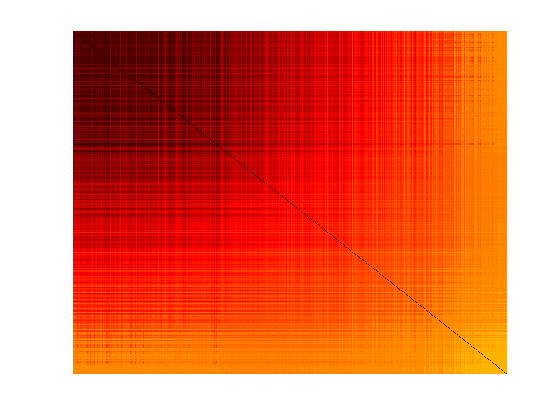}
		\includegraphics[width=1.55in,height= 1.59in]{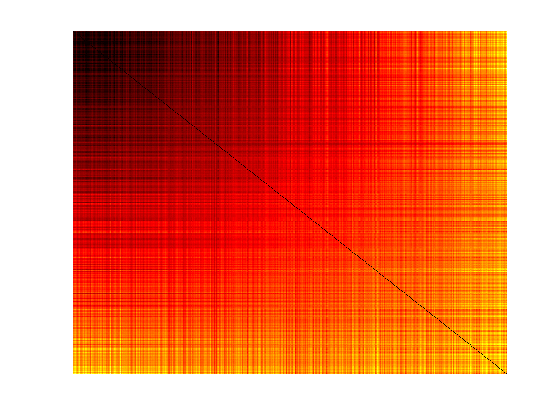}
		\includegraphics[width=1.55in,height= 1.59in]{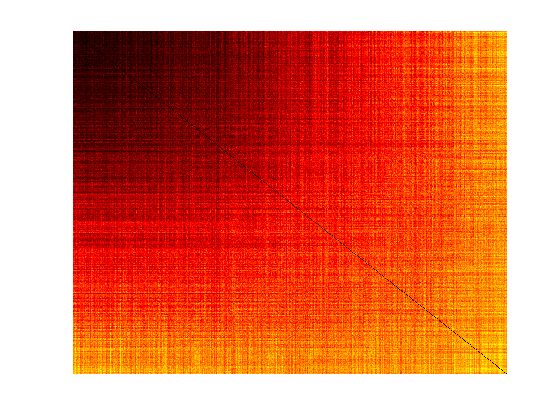}\\
		\includegraphics[width=1.55in,height= 1.59in]{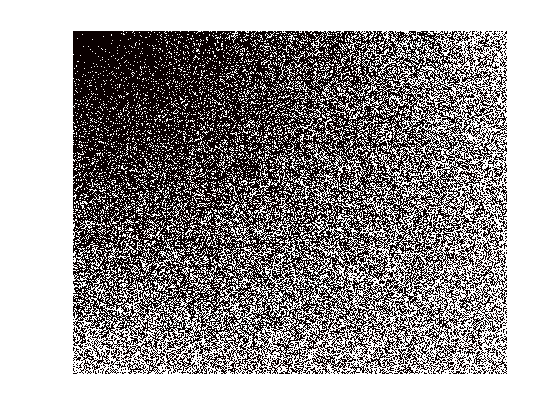}
		\includegraphics[width=1.55in,height= 1.59in]{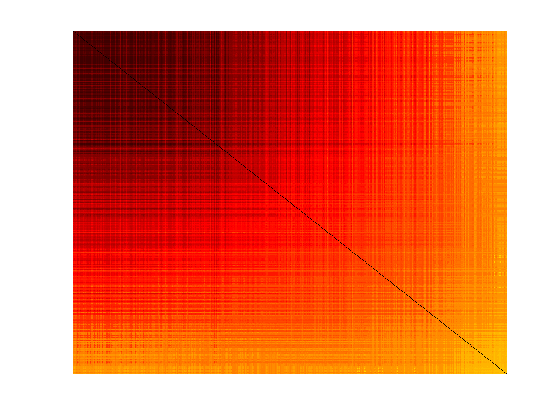}
		\includegraphics[width=1.55in,height= 1.59in]{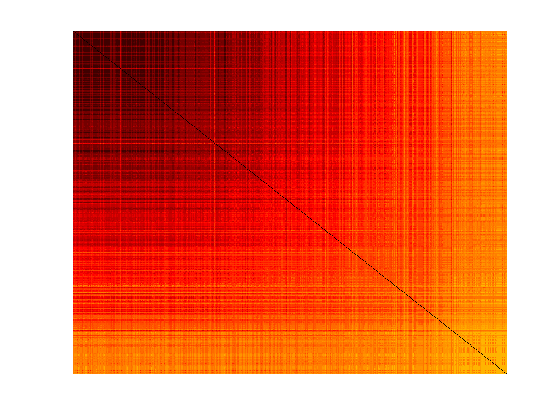}
		\includegraphics[width=1.55in,height= 1.59in]{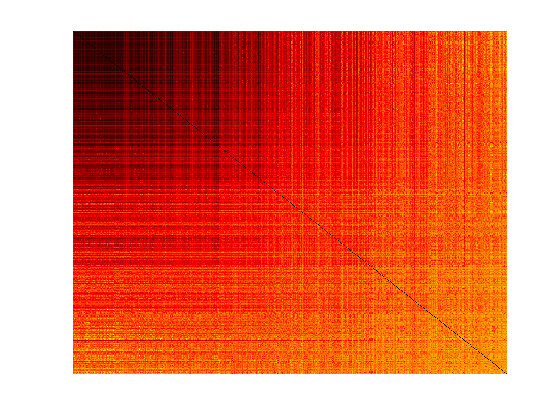}\\
		\caption{\label{fig:ex4} 
					The top left in the first  row  shows a realization of the matrix of probabilities $P$ for Example 4, here $n= 500$. Then  from left to right the panels in the first row correspond to the methods  SAS, USVT, and NS.  In the second row the leftmost plot  corresponds to a realization of the incidence matrix (A) drawn  with  the parameters in  $P$ from the first row. Then from left to right the remaining plots are associated to  SAS-FL, NN-FL, and $\ell_1$-FL.
		}
	\end{center}
\end{figure}

\begin{table}[t!]
	\centering
	\caption{	\label{tab:sim}Simulation results for Examples 1, 2, 3 and 4, see Figures \ref{fig:ex1}--\ref{fig:ex4} in that order.  Comparisons between the true and estimated probability matrices for different methods  given samples from each example. The acronyms here are explained the text. The Mean squared error (MSE)  is multiplied by a constant. }
	\begin{subtable}{1\textwidth}
		\centering
		\centering
		\caption{\label{tab:sim1}Mean squared error,  times 1000, averaging over 50 Monte Carlo simulations, for different methods  given samples from  Example 1. }
		\smallskip
		\begin{small}
			\begin{tabular}{p{3.5pc} p{3.5pc} p{3.5pc} p{3.5pc} p{3.5pc} p{3.5pc} p{3.5pc} }
				n           & NN-FL         & L1-FL  & SAS-FL    & SAS     & USVT & NS \\
				500         &\textbf{2.8}   & 4.9    & 13.5    & 15.2    & 5.1  & 3.9  \\			
				1000        &\textbf{1.5}   & 2.7    & 10.5    & 13.1    & 2.1  & 2.5 \\
				2000        &\textbf{1.2}   & 1.6    &9.2      &12.3     &1.6   &1.9\\
			\end{tabular}
		\end{small}
	\end{subtable}
	\begin{subtable}{1\textwidth}
		\centering
		\centering
		\caption{\label{tab:sim2} Mean squared error,  times 1000, averaging over 50 Monte Carlo simulations, for different methods  given samples from  Example 2. }
		\medskip
		\begin{small}
			\begin{tabular}{p{3.5pc} p{3.5pc} p{3.5pc} p{3.5pc} p{3.5pc} p{3.5pc} p{3.5pc} }
				n           & NN-FL  & L1-FL  & SAS-FL              & SAS     & USVT & NS \\
				500         &1.7     & 4.2    & \textbf{0.9}      & 1.0     &1.9   &   2.9 \\			
				1000        &1.0     & 2.4    &\textbf{0.47}      &0.53     &0.82  &  1.7  \\	
				2000        &0.88    &1.7     &\textbf{0.33}      &0.35     &0.55  &1.3    \\	
			\end{tabular}
		\end{small}
	\end{subtable}
	\begin{subtable}{1\textwidth}
		\centering
		\centering
		\caption{\label{tab:sim3} Mean squared error,  times 1000, averaging over 50 Monte Carlo simulations, for different methods  given samples from  Example 3. }
		\medskip
		\begin{small}
			\begin{tabular}{p{3.5pc} p{3.5pc} p{3.5pc} p{3.5pc} p{3.5pc} p{3.5pc} p{3.5pc} }
				n           & NN-FL       & L1-FL   & SAS-FL       & SAS     & USVT & NS \\
				500         &\textbf{8.2} &9.9      &56.5        &60.7     &9.4   &8.7    \\				
				1000        &\textbf{5.7} &6.3      &47.9        &59.7     &6.4   &6.4    \\	
				2000        &5.1          &\textbf{5.0}&44.6     &59.4     &5.3  &5.3    \\	
			\end{tabular}
		\end{small}
	\end{subtable}
	\begin{subtable}{1\textwidth}
		\centering
		\centering
		\caption{\label{tab:sim4}Mean squared error,  times 1000, averaging over 50 Monte Carlo simulations, for different methods  given samples from  Example 4. }
		\medskip
		\begin{small}
			\begin{tabular}{p{3.5pc} p{3.5pc} p{3.5pc} p{3.5pc} p{3.5pc} p{3.5pc} p{3.5pc} }
				n           & NN-FL  & L1-FL  & SAS-FL             & SAS     & USVT & NS \\
				500         &2.0     &4.3     &\textbf{1.3 }     &1.4      &1.8  &3.6    \\				
				1000        &1.3     &2.5     &\textbf{0.67}     &0.71     &0.88 &2.2    \\	
				2000        &1.1     &1.9     &\textbf{0.48}     &0.50    &0.59  &1.7    \\	
			\end{tabular}
		\end{small}
	\end{subtable}
\end{table}

For the methods based on fused lasso denoising, namely SAS-FL, NN-FL, and $\ell_1$-FL   we choose the tunning  parameter  $\lambda$  by cross-validation. This is done by selecting  the best value of $\lambda$ out of 30  candidates, by erasing $20\%$  of the data points and replacing them by zeros, and then performing predictions based on the remaining data.

The results for each scenario are given in Table \ref{tab:sim}. These are obtained by averaging over 50  Monte Carlo simulations, and  for  values of  $n \in \{500,1000,1500\}$. In Table \ref{tab:sim}, we see that in most cases  the best method is either NN-FL  or  SAS-FL. Even  for the stochastic block model example, the method NN-FL seems to have the best performance.

Moreover,  in Example 3, we see that SAS and SAS-FL  suffer  greatly  due to the   nearly constant behavior of the degree function ($g_1$, $g_2$ in Assumption \ref{as5}). In  contrast,  NN-FL and $\ell_1$-FL  are not affected by the degree issue and offer strong performance.

Figures  \ref{fig:ex1}-\ref{fig:ex4} allow us to   visualize the  comparisons of different methods in the examples considered. In Figure \ref{fig:ex1}, we can see that NN-FL gives a more  detailed recovery of the blocks compared to  all the other methods. As for Figure \ref{fig:ex2},  we see that  with $n = 500$,  all the competing methods are comparable, although, based on MSE,  SAS-FL is the best  approach. In Figure \ref{fig:ex3},  we clearly see the effect of the degree on the performance of SAS and SAS-FL. 

\subsection{Link prediction}

We now validate the methods studied in this paper in the task of link prediction.  To  that end, we consider  two different  datasets. For our first example we use the Protein230 dataset  from \cite{butland2005interaction}.  This consists of 230 proteins  and their interactions encoded in 595 edges.  In our second  example, we use the Yeast protein interaction network from \cite{butland2005interaction}.  This is a larger  network  consisting of 2361 nodes and 6646  edges.  

Using the data described  above, we evaluate the prediction  performance  of different methods as follows. In each case,  we remove some  observations of the matrix $A \in \mathbb{R}^{n \times n}$, thus rather  than observing $A$, we assume that the data is $\tilde{A}$ where 
\[
   \begin{array}{lll}
   \tilde{A}_{i,j} & =&  \rho_{i,j}\,A_{i,j},\,\,\,\,\,\\ 
    \rho_{i,j}& \sim^{i.i.d} &\text{Bernoulli}(0.8),  \,\,\,\,\,\,\forall  i,j \in [n].  
   \end{array}
   \]
For each data example  we generate  $50$  trials  of  $\tilde{A}$, and for  each instance  of $\tilde{A}$ we  fit different estimators.  For each estimator  we compute  the area under the curve  (AUC) of  the receiver
operating characteristic (ROC). We then  report the average  AUC-ROC  and refer to it simply as AUC-ROC.

With the setting described above,  
 Table \ref{tab:real}  reports  the average  AUC-ROC  associated to the competing methods in each of the considered examples. We can see that for both  examples,  NN-FL  and $\ell_1$-FL are the most competitive estimators. As a sanity check, we also computed the  area under the  precision recall curve, and found that in both cases the best approach  was NN-FL.

\begin{table}[t!]
	\centering
	\caption{	\label{tab:real} Average AUC-ROC for the competing methods under Examples 1 and 2.  }
	\centering
	\smallskip
	\begin{small}
		\begin{tabular}{p{3.5pc} p{3.5pc} p{3.5pc} p{3.5pc} p{3.5pc} p{3.5pc} p{3.5pc} }
			Example           & NN-FL         & $\ell_1$-FL  & SAS-FL    & SAS     & USVT & NS \\
			1                 &\textbf{0.84}  & 0.83   &0.76      &0.84     &0.60      &0.69  \\			
			2                 & 0.92          &\textbf{0.93} & 0.80       &0.82         &  0.39           & 0.92 \\	
		\end{tabular}
	\end{small}
\end{table}


\section{Conclusion}

We have  studied a  novel class of graphon  estimators based on a two--step approach that  combines the nearest neighbor algorithm  with  the  graph fused lasso. 
Overall  the  estimators  seem to perform  reasonably  in both simulated and real data. 

Statistical  guarantees have been provided, although some questions  remain open. For instance, we have not studied  the statistical performance  when the graphon is piecewise H\"{o}lder with exponent in the interval $[1/2,1]$. 

We also  leave for future work to understand  the convergence properties  $\ell_1$-FL  which also seems like a reasonable approach, at least in the examples considered here.

\appendix

\section{Proofs}

We  assume that  $f_0(x,y) =  f_0(y,x)$ for all $x,y \in [0,1]$. This is not  important in the proofs  but it simplifies the notation.

\subsection{Important lemmas}

First we recall   Theorem 2 from \cite{hoeffding1963probability}.

\begin{theorem}
	\label{thm:Hoeffding}
	(\textbf{Hoeffding})  Let $Z_1,\ldots,Z_n$ be centered independent  random variables  satisfying $a \leq Z_i \leq b$  for  all $i= 1,\ldots,n$. Then,
	for all $\delta \,\,>\,\, 0$,   we have, 	with probability at least $1 \,-\, \delta$,  that
	\[
	\left\vert  \frac{1}{n}  \sum_{i=1 }^n\,Z_ n  \right\vert  \,\,\leq\,\,\sqrt{ \frac{c\,\log\left(2/\delta\right) }{2\,n}   } 
	\]
	where $c  \,\,=\,\,n^{-1}\sum_{i=1}^{n}(a_i \,-b_i)^2 $.
\end{theorem}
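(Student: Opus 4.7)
The plan is to follow the classical Chernoff bounding argument that underlies Hoeffding's original proof, built from three independent pieces: an exponential Markov inequality, a moment generating function bound for bounded centered variables (Hoeffding's lemma), and an optimization over the tilting parameter. I will argue the one-sided tail bound and then symmetrize.

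First, I would establish Hoeffding's lemma: for any centered random variable $Z$ with $a \le Z \le b$ almost surely and any $s \in \mathbb{R}$,
\[
\mathbb{E}[e^{sZ}] \;\le\; \exp\!\Bigl(\tfrac{s^2 (b-a)^2}{8}\Bigr).
\]
The standard derivation writes $Z = \tfrac{b-Z}{b-a}\,a + \tfrac{Z-a}{b-a}\,b$, uses convexity of $z \mapsto e^{sz}$ to bound $e^{sZ}$ pointwise by the corresponding linear combination of $e^{sa}$ and $e^{sb}$, takes expectations using $\mathbb{E}[Z]=0$, and then analyzes $\varphi(s) := \log\bigl(p e^{-s\alpha} + (1-p)e^{s(1-\alpha)\cdot \text{sign}}\bigr)$ with $p = -a/(b-a)$ via Taylor expansion; the key computation is that $\varphi''(s) \le 1/4$ by AM--GM, which after integrating twice yields the claimed subgaussian bound with variance proxy $(b-a)^2/4$.

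Second, I would apply the exponential Markov (Chernoff) inequality to the sum $S_n := \sum_{i=1}^{n} Z_i$. For any $s>0$ and $t>0$, independence gives
\[
\mathbb{P}\!\left(\tfrac{1}{n}\sum_{i=1}^{n} Z_i \ge t\right)
\;\le\; e^{-snt} \prod_{i=1}^{n} \mathbb{E}[e^{sZ_i}]
\;\le\; \exp\!\Bigl(-snt + \tfrac{s^2}{8}\sum_{i=1}^{n}(b_i-a_i)^2\Bigr)
\;=\; \exp\!\Bigl(-snt + \tfrac{s^2 n c}{8}\Bigr),
\]
using the definition $c = n^{-1}\sum_i (b_i-a_i)^2$. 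Minimizing the right-hand side in $s$ at $s^\star = 4t/c$ yields $\exp(-2nt^2/c)$.

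Third, I would apply the same argument to $-Z_i$ (which is still centered and bounded), take a union bound for the two-sided statement, and obtain
\[
\mathbb{P}\!\left(\Bigl|\tfrac{1}{n}\sum_{i=1}^{n} Z_i\Bigr| \ge t\right) \;\le\; 2\exp\!\left(-\tfrac{2 n t^2}{c}\right).
\]
Finally, setting the right-hand side equal to $\delta$ and solving for $t$ gives $t = \sqrt{c \log(2/\delta)/(2n)}$, which is exactly the high-probability form stated in the theorem. The only genuinely technical step is Hoeffding's lemma; once its $(b-a)^2/4$ subgaussian variance proxy is in hand, the rest is a one-line Chernoff optimization and a two-sided union bound, so I expect that step to be the main obstacle while the remainder is bookkeeping.
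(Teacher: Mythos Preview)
Your argument is the standard Chernoff--Hoeffding derivation and is correct: Hoeffding's lemma gives the subgaussian MGF bound with proxy $(b_i-a_i)^2/4$, independence factorizes the MGF of the sum, optimizing the tilt parameter yields the one-sided tail $\exp(-2nt^2/c)$, and a union bound plus inversion gives the stated high-probability form. The paper, however, does not prove this statement at all; it simply recalls it as Theorem~2 of \cite{hoeffding1963probability} and uses it as a black box, so there is no proof in the paper to compare against.
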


\begin{lemma}
	\label{lem:permutation}
	Let $\epsilon \in \mathbb{R}^d$  a  random vector and let $\hat{P} \in  \mathcal{S}_d$  be random permutation such that $\epsilon \indep \hat{P}$. If the coordinates  of $\epsilon $ are independent with mean zero and belong to $[-1,1]$, then
	$$\mathbb{E}\left( e^{ \sum_{i=1}^{d  } s_i\,\epsilon_{ \hat{P}(i)}}  \right)  \,\,\leq \,\, e^{ \|s\|_2^2/2 },\,\,\,\,\,\forall s \in \mathbb{R}^d.  $$
\end{lemma}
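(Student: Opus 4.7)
The plan is to condition on the random permutation $\hat{P}$ and reduce the problem to the standard sub-Gaussian MGF bound for bounded independent random variables, then use the tower property to remove the conditioning.

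First, I would observe that since $\epsilon \indep \hat{P}$, for any fixed permutation $\pi \in \mathcal{S}_d$ the conditional distribution of $(\epsilon_{\pi(1)},\ldots,\epsilon_{\pi(d)})$ given $\{\hat{P}=\pi\}$ is the same as the unconditional distribution of $(\epsilon_{\pi(1)},\ldots,\epsilon_{\pi(d)})$. Because $\pi$ is a bijection and the $\epsilon_i$ are independent with mean zero, the coordinates $\epsilon_{\pi(1)},\ldots,\epsilon_{\pi(d)}$ are also independent, centered, and take values in $[-1,1]$. This is the key reduction: the random reshuffling does not destroy independence once the permutation is frozen.

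Next, I would apply the classical MGF bound of Hoeffding (the lemma underlying Theorem \ref{thm:Hoeffding}): for any centered random variable $Z\in[a,b]$ and $t\in\mathbb{R}$, $\mathbb{E}(e^{tZ})\le e^{t^2(b-a)^2/8}$. Applied with $(a,b)=(-1,1)$ to each $\epsilon_{\pi(i)}$ and $t=s_i$, this gives $\mathbb{E}(e^{s_i\epsilon_{\pi(i)}})\le e^{s_i^2/2}$. Multiplying over $i$ by conditional independence yields
\[
\mathbb{E}\!\left(e^{\sum_{i=1}^d s_i\epsilon_{\hat P(i)}}\;\Big|\;\hat{P}=\pi\right)\;\le\;\prod_{i=1}^d e^{s_i^2/2}\;=\;e^{\|s\|_2^2/2}.
\]
Crucially, the right-hand side does not depend on the particular permutation $\pi$.

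Finally, I would apply the tower property:
\[
\mathbb{E}\!\left(e^{\sum_{i=1}^d s_i\epsilon_{\hat P(i)}}\right)\;=\;\mathbb{E}_{\hat P}\!\left[\mathbb{E}\!\left(e^{\sum_{i=1}^d s_i\epsilon_{\hat P(i)}}\,\Big|\,\hat P\right)\right]\;\le\;\mathbb{E}_{\hat P}\!\left[e^{\|s\|_2^2/2}\right]\;=\;e^{\|s\|_2^2/2},
\]
which is the desired inequality. There is no real obstacle here; the only subtle point worth flagging is the need for independence $\epsilon \indep \hat{P}$, without which conditioning on $\hat{P}$ would not preserve independence of the $\epsilon_i$'s and the Hoeffding MGF bound could fail to apply coordinatewise.
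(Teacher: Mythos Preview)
Your proof is correct and follows essentially the same approach as the paper: condition on $\hat{P}$, use independence together with Hoeffding's MGF bound (the paper cites this as Lemma~2.2 of Boucheron, Lugosi, and Massart) to get a permutation-free upper bound on the conditional MGF, then apply the tower property.
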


\begin{proof}
	Simply note that
	\[
	\mathbb{E}\left( e^{ \sum_{i=1}^{d  } s_i\,\epsilon_{ \hat{P}(i)}}  \right)     \,\,  = 
	\,\,  \mathbb{E}\left( \mathbb{E}\left( e^{ \sum_{i=1}^{d  } s_i\,\epsilon_{ \hat{P}(i)}}  \big| \hat{P} \right) \right)\\  
	\,\,  = \,\,  \mathbb{E}\left(  \prod_{i=1}^{d} \mathbb{E}\left( e^{  s_i\,\epsilon_{ \hat{P}(i)}}  \big| \hat{P} \right) \right)\\
	\,\,  \leq \,\, \mathbb{E}\left(  \prod_{i=1}^{d} e^{s_i^2/2 } \right)\\
	\,\,=  \,\,  e^{ \|s\|_2^2/2 },
	\]
	where the first equality follows from the tower property of expectations,   the second equality holds by the independence assumption, and the third inequality is met  by Lemma 2.2 from  \cite{boucheron2013concentration}.
\end{proof}

\begin{lemma}
	\label{lem:tv:rates}
	(\textbf{Minor modification to Corollary 5 from  \cite{hutter2016optimal} }) Consider the model
	\[
	y_i \,\,=\,\,   \mu^*_i  \,\,+\,\,\varepsilon_{ i },\,\,\,\,\,\,i =1,\ldots,   m  \,=\, N_1 \times N_2,
	\]
	with  $N_l / N  =   z_l$ for some $z_l \in \mathbb{N}$,  $l = 1,2$. Here $N$  is a quantity that can grow to infinity.  Moreover, we assume that  for each  $j \in [m]$  we have that  $\varepsilon_j   \,=\, \epsilon_i $  for some $i \in [\tilde{m}]$,   where  $\epsilon \in \mathbb{R}^{\tilde{m}}$ with $\tilde{m}  \asymp m$. In addition, we assume that  $s_i \,=\, \vert \{ j \in [m] \,:\, \varepsilon_j   \,=\, \epsilon_i  \}\vert  \leq  \kappa $  for all $i \in [\tilde{m}]$,  for some positive constant $\kappa$. We also assume that  the coordinates of  $\epsilon   \,\,=\,\,   (\epsilon_1,\ldots,\epsilon_{\tilde{m}})$  are independent and satisfy $\mathbb{E}(\epsilon_i) \,=\, 0$,  $\epsilon_i \in [-1,1]$. 
	Let  
	\[
	\hat{\mu }  \,\,=\,\,  \underset{\mu \in \mathbb{R}^{m}  }{\arg \min} \,\,\frac{1}{2}\sum_{i=1}^{m}(y_{\hat{P}}(i)  \,-\, \mu_i )^2    \,\,+\,\,   \lambda\,m\,\|D\mu\|_1,
	\]
	where $D$ is the incidence matrix of  $N_1 \times N_2$ grid graph, $\lambda >0$ is a tunning parameter, and  $\hat{P} \in \mathcal{S}_{\tilde{m}}$ is  a random permutation independent of $\epsilon$. Then there exists a constant $c>0$ such  that for any $\delta >0$,    if   $\lambda$ is chosen as 
	\[
	\lambda  \,\,=\,\, \frac{c\,\log N  \,\sqrt{  \log\left(   \frac{e\,m}{\delta}    \right)    } }{m}, 
	\] 
	then 
	\[
	\mathbb{P}\left( \frac{1}{m} \|  \hat{\mu} \,-\,\mu^* \|_2^2 \,\,\geq \,\, C\frac{\left[\log m\,\sqrt{ 2\log( e\,m/\delta  )  } \|D\mu^* \|_1\, \,+\,  \log^2 m\,\log(e/\delta)       \right]   }{m}\right)      \,\,< \,\,\frac{2\,z_1\,z_2}{\delta},
	\]
	for some constant $C > 0$.
\end{lemma}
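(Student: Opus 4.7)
The plan is to follow the basic argument of Corollary~5 in \cite{hutter2016optimal}, keeping the geometric inputs for the 2D grid intact, and redoing only the noise-concentration step so that it accommodates the two non-standard features here: the random permutation $\hat{P}$ independent of $\epsilon$, and the fact that the entries of $\varepsilon$ are copies of the independent $\epsilon_i$ with multiplicities $s_i \le \kappa$.

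First, from the optimality of $\hat{\mu}$ I would write the standard basic inequality
\[
\tfrac{1}{2}\|\hat{\mu} - \mu^*\|_2^2 \;\le\; \langle \varepsilon_{\hat{P}},\, \hat{\mu} - \mu^* \rangle \;+\; \lambda m \bigl(\|D\mu^*\|_1 - \|D\hat{\mu}\|_1\bigr),
\]
and decompose $\hat{\mu} - \mu^* = P_K(\hat{\mu}-\mu^*) + D^\dagger D(\hat{\mu}-\mu^*)$ where $K = \ker D$ is the line of constants on the connected grid. Applying H\"older's inequality to $\langle \varepsilon_{\hat{P}}, D^\dagger D(\hat{\mu}-\mu^*)\rangle$ reduces the noise term, up to the one-dimensional kernel contribution (absorbed by Cauchy--Schwarz into the quadratic on the left), to $\|(D^\dagger)^\top \varepsilon_{\hat{P}}\|_\infty \cdot (\|D\hat{\mu}\|_1 + \|D\mu^*\|_1)$. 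On the event that $\lambda m \ge 2\|(D^\dagger)^\top \varepsilon_{\hat{P}}\|_\infty$, the Hutter--Rigollet template yields $m^{-1}\|\hat{\mu} - \mu^*\|_2^2 \lesssim \lambda \|D\mu^*\|_1 + \lambda^2 m$, and plugging in $\lambda = c (\log N)\sqrt{\log(em/\delta)}/m$ reproduces exactly the stated rate.

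The technical core is therefore the $\ell_\infty$ tail for $(D^\dagger)^\top \varepsilon_{\hat{P}}$. Let $\pi \colon [m] \to [\tilde m]$ be the deterministic labelling implicit in $\varepsilon_j = \epsilon_{\pi(j)}$, and fix any column $v$ of $(D^\dagger)^\top$. Grouping indices by their $\epsilon$-label gives
\[
v^\top \varepsilon_{\hat{P}} \;=\; \sum_{i=1}^{\tilde m} w_i(\hat{P})\, \epsilon_i, \qquad w_i(\hat{P}) \;=\; \sum_{j \,:\, \hat{P}(\pi(j)) = i} v_j,
\]
and because each group has cardinality at most $\kappa$, Cauchy--Schwarz gives $\|w(\hat{P})\|_2^2 \le \kappa \|v\|_2^2$ deterministically in $\hat{P}$. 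Conditioning on $\hat{P}$ and invoking Lemma~\ref{lem:permutation} (or Hoeffding applied directly to a weighted sum of independent centered bounded variables) gives the sub-Gaussian MGF bound $\mathbb{E}\exp(t\, v^\top \varepsilon_{\hat{P}}) \le \exp(\kappa \|v\|_2^2 t^2/2)$. A Chernoff estimate and a union bound over the $m$ columns of $(D^\dagger)^\top$, combined with the 2D-grid pseudo-inverse bound $\max_v \|v\|_2 \lesssim \log N$ from Proposition~4 of \cite{hutter2016optimal}, yield $\|(D^\dagger)^\top \varepsilon_{\hat{P}}\|_\infty \lesssim \log N \sqrt{\log(em/\delta)}$ with probability at least $1 - \delta/(z_1 z_2)$, with the $\sqrt{\kappa}$ factor absorbed into $C$.

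The main obstacle I foresee is precisely this MGF step: one needs to verify that independence of $\hat{P}$ and $\epsilon$, together with the multiplicity bound $s_i \le \kappa$, does not inflate the sub-Gaussian proxy by more than a constant, and that the 2D-grid pseudo-inverse estimate from \cite{hutter2016optimal} can be applied off-the-shelf here since the grid topology itself is unchanged. Once these two checks are in place, the remainder is a routine substitution into the original Hutter--Rigollet skeleton.
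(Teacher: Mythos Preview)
Your proposal is correct and follows essentially the same route as the paper: basic inequality, splitting off the kernel of $D$, H\"older to reduce to $\|(D^\dagger)^\top \varepsilon_{\hat P}\|_\infty$, the grouping argument $\|w(\hat P)\|_2^2\le\kappa\|v\|_2^2$ to handle the multiplicities, the MGF bound obtained by conditioning on $\hat P$ (this is exactly the paper's Lemma~\ref{lem:permutation}), and Proposition~4 of \cite{hutter2016optimal} for the pseudo-inverse column norms. The only cosmetic difference is that the paper first partitions the $N_1\times N_2$ grid into $z_1z_2$ square $N\times N$ sub-grids and carries out the argument on each (which is where the $z_1z_2$ factor in the tail probability comes from), whereas you work on the full grid directly; since $z_1,z_2$ are fixed integers this changes nothing of substance.
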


\begin{proof}
	We proceed as in the basic inequality argument from the proof of  Theorem 3 in \cite{wang2016trend}. Letting $\tilde{\varepsilon}  \,=\, (\varepsilon_{\hat{P}(1)  }, \ldots,  \varepsilon_{\hat{P}(m)  } )$, we obtain
	\begin{equation}
	\label{eqn:basic_inequality}
	\frac{1}{2}\, \|   \hat{\mu} \,-\,\mu^*  \|_2^2  \,\,\leq \,\,  \tilde{\varepsilon}^T(  \hat{\mu} \,-\, \mu^* )\,\,+\,\,  m\,\lambda\left[ \|D\mu^* \|_1 \,-\, \|D\hat{\mu} \|_1  \right].  
	\end{equation}
	Note that there exists two disjoint subsets $A_1,  A_2,  \ldots,  A_{z_1\cdot z_2}$  of $\{1,\ldots,m\}$ such that in the graph with incidence matrix $D$, the elements of $A_1$ lie on sub--grid graph of dimensions $N \times N$, the elements of $A_2$ lie on a different sub--grid of dimensions  $N \times N$ and so on. Throughout, for a vector $\mu \in \mathbb{R}^{N_1 \times N_2}$, we write 
	$\mu^j   \,\,=\,\,(\mu_{i_1^j},\ldots, \mu_{i_{N^2}^j}) $   where $i_1^j < \ldots < i_{N^2}^j $, with $A_j  \,=\,\{i_1^j,\ldots,i_{N^2}^j   \}$ for $j = 1,2,\ldots,z_{1}\cdot z_2$. Moreover,  we denote by $D^1$  the incidence matrix of grid graph with dimensions $N \times N$.
	
	Using the notation above, and writing $\Pi$ as the projection onto the span of $\boldsymbol{1}  \in \mathbb{R}^{N^2}$, we obtain  that there exists  permutation matrices $P_1,\ldots,P_{z_1\cdot z_2}  \in  \mathbb{R}^{ N^2 \times N^2 } $  such that 
	\begin{equation}
	\label{eqn:emp_part}
	\begin{array}{lll}
	\tilde{\varepsilon}^T(\hat{\mu} \,-\, \mu^*)    & = & \displaystyle \sum_{j=1}^{z_1\cdot z_2 } \,(\tilde{\varepsilon}^j)^T(\hat{\mu}^{j} \,-\, (\mu^*)^j) \\
	& = &  \displaystyle \sum_{j=1}^{z_1\cdot z_2} \left[  ( \Pi\tilde{\varepsilon}^j)^T(\hat{\mu}^{j} \,-\, (\mu^*)^j) \,\,+\,\,((I-\Pi)\tilde{\varepsilon}^j)^T(\hat{\mu}^{j} \,-\, (\mu^*)^j)   \right]\\
	& \leq &  \displaystyle \sum_{j=1}^{z_1\cdot z_2} \, \|\Pi\tilde{\varepsilon}^j \|_2\,\| \hat{\mu}^{j} \,-\, (\mu^*)^j\|_2  \,\,+\,\,  \underset{1 \leq j \leq z_{1}\cdot z_2  }{\max} \| ((D^1)^+)^T  P_j \tilde{\varepsilon}^j  \|_{ \infty }\left[ \|D\mu^* \|_1 \,+\, \|D\hat{\mu} \|_1 \right]   \\
	& \leq & \left[ \underset{1 \leq j \leq z_{1}\cdot z_2  }{\max} \, \|\Pi\tilde{\varepsilon}^j \|_2\right] z_1\,z_2\,\| \hat{\mu} \,-\, \mu^*\|_2  \,\,+\,\,  \underset{1 \leq j \leq z_{1}\cdot z_2  }{\max} \| ((D^1)^+)^T  P_j \tilde{\varepsilon}^j  \|_{ \infty }\left[ \|D\mu^* \|_1 \,+\, \|D\hat{\mu} \|_1 \right]   \\
	\end{array}
	\end{equation}	
	where we have proceeded as in the proof of Theorem 2 from \cite{hutter2016optimal}, using H\"{o}lder's inequality. Now, for
	$x \in \mathbb{R}^{N^2}$     we have that
	\[
	x^T   \,\tilde{\varepsilon}^j   \,\,=\,\, \displaystyle \sum_{l=1}^{N^2}  x_l  \tilde{\varepsilon}_{  i_l^j }  \,\,=\,\, \displaystyle \sum_{l=1}^{N^2}  x_l  \varepsilon_{  \hat{P}(i_l^j) }   \,\,=\,\, \displaystyle    \sum_{s=1}^{\tilde{m}}  \epsilon_s\, \left( \sum_{ l \,:\,\varepsilon_{  \hat{P}(  i_l^j ) } \,=\,   \epsilon_s }   x_l  \right)  \,\,=:\,\,  \tilde{x}^T \epsilon,
	\]
	where $\tilde{x} \in \mathbb{R}^{ \tilde{m} }$  and satisfies
	\[
	\| \tilde{x} \|_2  \,\,=\,\,  \sqrt{ \sum_{s=1}^{\tilde{m}}  \left( \sum_{ l\,:\,\varepsilon_{  \hat{P}(i_l^j) } \,=\,   \epsilon_s }   x_l  \right)^2   }  \,\,\leq \,\,
	\sqrt{ \kappa   \, \sum_{s=1}^{\tilde{m}}
		\,\,   \sum_{ l \,:\,\varepsilon_{  \hat{P}(i_l^j) } \,=\,   \epsilon_s }   x_l^2  }   \,\,=\,\,  \sqrt{\kappa} \, \| x \|_2.
	\]
	Therefore, combining the above observation with  Lemma \ref{lem:permutation}, and  Corollary 2.6 from \cite{boucheron2013concentration}, and (\ref{eqn:emp_part})  it follows that
	\begin{equation}
	\label{eqn:emp_part2}
	\begin{array}{lll}
	\tilde{\epsilon}^T(\hat{\mu} \,-\, \mu^*)    & \leq & 2\,\kappa^{1/2}\,\sqrt{2\log( e/\delta ) }\,\| \hat{\mu} \,-\,\mu^* \|_2 ,\,+\,\,\rho\,\kappa^{1/2}\,\sqrt{2\log( e\,N^2/\delta ) }\left[ \|D\mu^* \|_1 \,+\, \|D\hat{\mu} \|_1 \right],    \\ 
	\end{array}
	\end{equation}
	with probability at least  $1 \,-\, 2\,z_1\,z_2/\delta$,
	where as in Proposition 4 from \cite{hutter2016optimal},  there exists a constant $c_1 >0$ such that $\rho \leq c_1 \sqrt{\log m}$. 
	
	Therefore, combining (\ref{eqn:basic_inequality})  with (\ref{eqn:emp_part2}) and using the inequality  $a\,x    - x^2/4  \leq a^2  $
	for all $a$  and $x$, we arrive to the  desired conclusion.	
\end{proof}

\subsection{Proof of Theorem 2}

Let  $\sigma $ and $\tau$  be  bijections  such that 
\[
\begin{array}{l}
\xi_{\sigma(1)}  \,<\,    \ldots  \,<\,   \xi_{\sigma(n-m)},\\
g(\xi_{\tau(1)})  \,<\,    \ldots  \,<\,   g(\xi_{\tau(n-m)}),\\
\end{array}
\]
where $\sigma(i),\tau(i) \in [n]\backslash [m]$.

We  write  $ \ell(i)  \,:=\,  l $  if  $\xi_i \in A_l$.

\begin{lemma}
	\label{lem:concentration_to_g}
	Let us assume that $n \,\leq\, c_2\,m$. Then, the event 
	\[
	\Omega  \,\,=\,\, \left\{  \xi  
	\,:\,    \underset{i \in [n]\backslash [m]  }{\max} \, \left\vert   \displaystyle  g(\xi_i) \,-\,  \frac{1}{m} \sum_{j \in  [m]} A_{i,j}    \right\vert    \,\,\leq \,\, 2\,c_2^{ \frac{1}{2} }\,\left( \frac{\log n}{n}    \right)^{  \frac{1}{2} }     \right\},
	\]
	happens with probability at least $1  \,-\,  \frac{4}{n}$.
\end{lemma}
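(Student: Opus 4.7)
The plan is to control the deviation pointwise for each $i \in [n]\setminus[m]$ via a triangle inequality that separates the randomness in the Bernoulli draws from the randomness in the $\xi_j$'s, apply Hoeffding (the stated Theorem \ref{thm:Hoeffding}) twice, and then union bound over the at most $n$ indices.

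Concretely, I would fix $i \in [n]\setminus[m]$ and write
\[
\frac{1}{m}\sum_{j \in [m]} A_{i,j} - g(\xi_i) \,=\, \underbrace{\frac{1}{m}\sum_{j \in [m]} \bigl(A_{i,j} - f_0(\xi_i,\xi_j)\bigr)}_{\text{T}_1(i)} \,+\, \underbrace{\frac{1}{m}\sum_{j \in [m]} f_0(\xi_i,\xi_j) - g(\xi_i)}_{\text{T}_2(i)}.
\]
Because $i \notin [m]$, the variables $\{A_{i,j}\}_{j \in [m]}$ are, conditional on all the $\xi$'s, independent Bernoulli draws bounded in $[0,1]$, with conditional mean $f_0(\xi_i,\xi_j)$. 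Hence Theorem \ref{thm:Hoeffding} applied conditionally gives that for any $\delta>0$, $|\mathrm{T}_1(i)| \leq \sqrt{\log(2/\delta)/(2m)}$ with conditional probability at least $1-\delta$, hence also unconditionally. Similarly, conditional on $\xi_i$, the random variables $\{f_0(\xi_i,\xi_j)\}_{j \in [m]}$ are i.i.d., bounded in $[0,1]$, with mean exactly $g(\xi_i)$ (by definition of $g$), so another application of Hoeffding gives $|\mathrm{T}_2(i)| \leq \sqrt{\log(2/\delta)/(2m)}$ with probability at least $1-\delta$.

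Combining the two bounds via the triangle inequality yields, for each fixed $i$,
\[
\left|\frac{1}{m}\sum_{j \in [m]} A_{i,j} - g(\xi_i)\right| \,\leq\, \sqrt{\frac{2\log(2/\delta)}{m}}
\]
with probability at least $1 - 2\delta$. Union-bounding over the $n-m \leq n$ indices $i \in [n]\setminus[m]$ gives that the maximum over $i$ of the above quantity is bounded by $\sqrt{2\log(2/\delta)/m}$ with probability at least $1 - 2n\delta$. The final calibration is to pick $\delta = 2/n^2$, so that $2n\delta = 4/n$ and $\log(2/\delta) = 2\log n$, giving a bound $2\sqrt{\log n/m}$. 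Using $m \geq n/c_2$ (the assumption $n \leq c_2 m$), this is $\leq 2 c_2^{1/2}\sqrt{\log n/n}$, matching the statement.

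There is essentially no obstacle — the argument is a standard sub-Gaussian concentration plus union bound. The only subtlety worth double-checking is that $\mathrm{T}_1$ and $\mathrm{T}_2$ involve different sources of randomness so the two Hoeffding applications are legitimate; this is handled by conditioning appropriately (on all $\xi$'s for $\mathrm{T}_1$, and on $\xi_i$ alone for $\mathrm{T}_2$) before invoking Theorem \ref{thm:Hoeffding}.
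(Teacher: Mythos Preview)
Your proposal is correct and follows essentially the same approach as the paper: both split the deviation into the Bernoulli fluctuation $A_{i,j}-\theta^*_{i,j}$ (your $\mathrm{T}_1$) and the sampling fluctuation $\frac{1}{m}\sum_j \theta^*_{i,j}-g(\xi_i)$ (your $\mathrm{T}_2$), apply Hoeffding to each with the appropriate conditioning, and union bound over $i\in[n]\setminus[m]$ to obtain the stated bound with probability at least $1-4/n$. The only cosmetic difference is that the paper first takes the union bound over $i$ for each term separately and then combines on the intersection event, whereas you combine pointwise first and then union bound; the resulting constants coincide.
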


\begin{proof}
	First  we  observe  that by Theorem  \ref{thm:Hoeffding}  and union bound 
	\begin{equation}
	\label{eqn:concentration}	
	\mathbb{P}\left( \underset{i \in [n]\backslash [m] }{\max} \,  \frac{1}{m} \left\vert   \displaystyle  \sum_{j \in  [m]} A_{i,j} \,-\,   \sum_{j \in  [m]} \theta^*_{i,j}    \right\vert   \,\,>\,\,   \sqrt{c_2} \left( \frac{\log n}{n}    \right)^{  \frac{1}{2} }    \,\,\Bigg\vert\,\,  \xi \right)  \,\,\leq \,\,    \frac{2}{n}.
	\end{equation}
	Next  we define the set
	\begin{equation}
	\label{eqn:concentration2}	
	\tilde{\Omega}  \,=\,   \left\{  \xi  
	\,:\,    \underset{i \in [n]\backslash [m]  }{\max} \, \left\vert   \displaystyle  g(\xi_i) \,-\,   \frac{1}{m}\sum_{j \in  [m]} \theta^*_{i,j}    \right\vert    \,\,\leq \,\, \sqrt{c_2}\,\left( \frac{\log n}{n}    \right)^{  \frac{1}{2} }     \right\},
	\end{equation}
	and by  Theorem  \ref{thm:Hoeffding} (and using conditional probability) we have that   $\mathbb{P}\left( \tilde{\Omega} \right)\,\geq \,  1  \,-\, 2/n$.  Now, 
	\begin{equation}
	\label{eqn:concentration3}
	\def\arraystretch{2}
	\begin{array}{l}
	\mathbb{P}\left(  \underset{i \in [n]\backslash [m] }{\max} \, \left\vert   \displaystyle \frac{1}{m}  \sum_{j \in  [m]} A_{i,j} \,-\,   g(\xi_i)   \right\vert   \,\,\leq \,\, \displaystyle 2\,c_2^{\frac{1}{2}}\,\left( \frac{\log n}{n}    \right)^{  \frac{1}{2} }   \right)   \\
	\,\,\,  \geq    \displaystyle  \int_{ \tilde{\Omega} }  \, \mathbb{P}\left(  \underset{i \in [n]\backslash [m] }{\max} \, \left\vert   \displaystyle  \frac{1}{m}  \sum_{j \in  [m]} A_{i,j} \,-\,   g(\xi_i)    \right\vert   \,\,\leq \,\, 2\,c_2^{\frac{1}{2}}\,\left( \frac{\log n}{n}    \right)^{  \frac{1}{2} }   \,\,\Bigg\vert \,\, \xi  \right)\,d\xi\\ 
	\,\,\geq \,\,\displaystyle  \int_{ \tilde{\Omega} }  \, \mathbb{P}\left(  \underset{i \in [n]\backslash [m] }{\max} \, \left\vert   \displaystyle  \frac{1}{m}  \sum_{j \in  [m]} A_{i,j} \,-\,   \sum_{j \in  [m]} \theta^*_{i,j}    \right\vert   \,\,\leq \,\, \,c_2^{\frac{1}{2}}\left( \frac{\log n}{n}    \right)^{  \frac{1}{2} }   \,\,\Bigg\vert \,\, \xi  \right)\,d\xi\\ 
	\,\,\geq \,\,  \left[ 1  \,-\,   \frac{2}{n}  \right] \displaystyle  \int_{ \tilde{\Omega} }  d\xi   \\
	\,\,\geq \,\,   1  \,-\,   \frac{4}{n}. 
	\end{array}
	\end{equation}
\end{proof}

\begin{lemma}
	\label{lem:Dvoretzky}
	Let us assume that $c_1\,m\,\,\leq \,\,  n  \,\, \leq \,\,c_2\,m$. Then the event $\tilde{\Omega}$ given as
	\[
	\underset{i \in [n - m] }{ \max }  \,\left\vert   \xi_{\sigma(i)} \,\,-\,\,\frac{i}{n\,-\, m} \right\vert  \,\,\leq\,\,\frac{2\, c_2^{ \frac{1}{2}  }}{L_1}\left( \frac{\log n}{n} \right)^{ \frac{1}{2} },
	\]
	happens with probability at least $1 \,-\,  2\,\exp(-8\,c_3\,\log n/L_1^{2} \,)$, where $c_3 \,=\, c_2\, (c_1  \,-\, 1)/c_1 $.
\end{lemma}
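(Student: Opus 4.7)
The lemma's name is a clear pointer: the plan is to invoke the Dvoretzky--Kiefer--Wolfowitz (DKW) inequality for the empirical distribution of uniform random variables. The inputs $\{\xi_i : i \in [n]\backslash[m]\}$ are $N := n-m$ i.i.d.\ $U[0,1]$ variables, and the permutation $\sigma$ gives their order statistics. For uniform order statistics, the quantity $\max_i |\xi_{\sigma(i)} - i/N|$ is nothing but the supremum deviation of the empirical CDF from the identity evaluated at the jump points, so DKW is exactly the tool.

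Concretely, I would let $F_N(t) = \frac{1}{N}\sum_{i \in [n]\backslash[m]} \mathbf{1}\{\xi_i \le t\}$ and observe that $F_N(\xi_{\sigma(i)}) = i/N$, so that
\[
\max_{i \in [n-m]} \bigl| \xi_{\sigma(i)} - \tfrac{i}{N} \bigr| \;=\; \max_i \bigl| \xi_{\sigma(i)} - F_N(\xi_{\sigma(i)}) \bigr| \;\le\; \sup_{t \in [0,1]} |F_N(t) - t|.
\]
Then I would apply DKW, $\mathbb{P}(\sup_t |F_N(t) - t| > \epsilon) \le 2 e^{-2 N \epsilon^2}$, with the choice $\epsilon = \tfrac{2\sqrt{c_2}}{L_1}\sqrt{\log n/n}$, which is exactly the threshold in the statement.

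It remains to verify the exponent. Plugging in yields $2 N \epsilon^2 = \tfrac{8 c_2 (n-m)\log n}{n L_1^2}$. From $c_1 m \le n$ we get $m \le n/c_1$, hence $n - m \ge n(c_1 - 1)/c_1$, and therefore $\tfrac{c_2(n-m)}{n} \ge c_2(c_1-1)/c_1 = c_3$. This gives $2 N \epsilon^2 \ge 8 c_3 \log n / L_1^2$, so that the DKW tail becomes $2 e^{-8 c_3 \log n / L_1^2}$, matching the claim. (The other inequality $n \le c_2 m$ is not actually needed for this particular bound; it would be used elsewhere.)

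There is no real obstacle here: the content of the lemma is a one-line application of DKW, and the only bookkeeping is to translate between $n$, $m$, and $N = n-m$ using the assumption $c_1 m \le n \le c_2 m$ so that the exponent comes out with the specific constant $c_3 = c_2(c_1-1)/c_1$ the authors want for later use. If one wished to avoid citing DKW directly, the same bound can be obtained by a union bound over a sufficiently fine grid in $[0,1]$ combined with Hoeffding's inequality (Theorem~\ref{thm:Hoeffding}), paying only an extra logarithmic factor absorbed into the constants.
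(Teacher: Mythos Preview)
Your proposal is correct and matches the paper's own proof, which consists of the single sentence ``This follows immediately from Dvoretzky--Kiefer--Wolfowitz inequality.'' You have in fact supplied the constant-tracking that the paper omits, and your verification that $2N\epsilon^2 \ge 8c_3\log n/L_1^2$ via $n-m \ge n(c_1-1)/c_1$ is exactly what is needed to recover the stated exponent.
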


\begin{proof}
	This follows immediately from Dvoretzky--Kiefer--Wolfowitz inequality.
\end{proof}

\begin{lemma}
	\label{lem:dist}
	Let  $\Omega^{\prime}$  be the event that 
	\[
	\forall\,i,j \in [n  \,-\, m],\,\,\,\,\vert i\,-\,j\vert \,\,>\,\, \frac{12}{L_1}\sqrt{c_2\,n\,\log n}  \,\,\,\,\,\,\,\,\,\text{implies}\,\,\,\,\,\,\,\,\vert \xi_{\sigma(i)}\,-\,\xi_{\sigma(j)}\vert \,\,>\,\,\frac{8}{L_1}\left( \frac{c_2\,\log n}{n} \right)^{ \frac{1}{2}  }.
	\]
	Then  $\mathbb{P}\left(  \Omega^{\prime} \right)\,\,\geq \,\,  1  \,-\,  2\,\exp(-8\,c_3/L_1^{2} \,\log n)$ with $c_3$ as in Lemma \ref{lem:Dvoretzky}.
\end{lemma}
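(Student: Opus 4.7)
The statement follows almost directly from Lemma \ref{lem:Dvoretzky} by triangle inequality, so the plan is short.

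The plan is to work on the event $\tilde{\Omega}$ defined in Lemma \ref{lem:Dvoretzky}, which satisfies $\mathbb{P}(\tilde{\Omega}) \geq 1 - 2\exp(-8c_3\log n/L_1^2)$, and simply take $\Omega^{\prime} \supseteq \tilde{\Omega}$. On $\tilde{\Omega}$, we have uniformly in $i \in [n-m]$ the control $\left\vert \xi_{\sigma(i)} - i/(n-m)\right\vert \leq (2c_2^{1/2}/L_1)(\log n/n)^{1/2}$.

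Fix any $i,j \in [n-m]$ with $|i-j| > (12/L_1)\sqrt{c_2\,n\log n}$. By the triangle inequality,
\[
\vert \xi_{\sigma(i)} - \xi_{\sigma(j)}\vert \,\geq\, \left\vert \frac{i}{n-m} - \frac{j}{n-m} \right\vert \,-\, \left\vert \xi_{\sigma(i)} - \frac{i}{n-m} \right\vert \,-\, \left\vert \xi_{\sigma(j)} - \frac{j}{n-m} \right\vert .
\]
The first term on the right satisfies $|i-j|/(n-m) \geq |i-j|/n \geq (12/L_1)\sqrt{c_2\log n/n}$ by the assumed lower bound on $|i-j|$, while each of the remaining terms is bounded above by $(2c_2^{1/2}/L_1)\sqrt{\log n/n}$ on $\tilde{\Omega}$.

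Combining these bounds gives
\[
\vert \xi_{\sigma(i)} - \xi_{\sigma(j)}\vert \,\geq\, \frac{12\,c_2^{1/2}}{L_1}\sqrt{\frac{\log n}{n}} \,-\, \frac{4\,c_2^{1/2}}{L_1}\sqrt{\frac{\log n}{n}} \,=\, \frac{8}{L_1}\left(\frac{c_2\log n}{n}\right)^{1/2},
\]
which is the desired conclusion. Since the argument is uniform in $i,j$, the event $\Omega^{\prime}$ contains $\tilde{\Omega}$, and thus $\mathbb{P}(\Omega^{\prime}) \geq 1 - 2\exp(-8c_3\log n/L_1^2)$. There is no real obstacle here; the only care needed is tracking constants so that $12 - 4 = 8$ matches the constant claimed in the statement, and using $n-m \leq n$ so that the normalized gap $|i-j|/(n-m)$ can be lower bounded by $|i-j|/n$.
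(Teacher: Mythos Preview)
Your proof is correct and takes essentially the same approach as the paper: work on the event $\tilde{\Omega}$ from Lemma~\ref{lem:Dvoretzky} and apply the triangle inequality to relate $|\xi_{\sigma(i)}-\xi_{\sigma(j)}|$ and $|i-j|/(n-m)$. The only cosmetic difference is that the paper argues by contrapositive (assuming $|\xi_{\sigma(i)}-\xi_{\sigma(j)}|$ is small and concluding $|i-j|$ is small), whereas you argue directly; the constants and the use of $n-m\le n$ line up in the same way.
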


\begin{proof}
	Le us assume that the event $\tilde{\Omega}$  from  Lemma \ref{lem:Dvoretzky}  holds.  Let  $i,j \in [n \,-\, m]$  such that  $\vert \xi_{\sigma(i)}\,-\,\xi_{\sigma(j)}\vert   \,\,<\,\, 8/L_1 \sqrt{c_2\,\log n  /n} $. Then
	\[
	\left\vert \frac{i}{n \,-\, m}\,-\, \frac{j}{n \,-\, m}  \right\vert \,\,\leq \,\,  \left\vert \frac{i}{n \,-\, m }\,-\, \xi_{ \sigma(i) } \right\vert  \,\,+\,\, \left\vert \xi_{ \sigma(i) }\,-\, \xi_{\sigma(j) }   \right\vert \,\,+\,\, \left\vert \frac{j}{n \,-\, m}\,-\, \xi_{ \sigma(j) } \right\vert   \,\,<\,\,\frac{12}{L_1}\left( \frac{c_2\,\log n}{n} \right)^{ \frac{1}{2}  },
	\]
	and the claim follows from Lemma \ref{lem:Dvoretzky}.
\end{proof}

\begin{lemma}
	\label{lem:comparison_permutations}
	The event 
	\[
	\Omega_2  \,=\,  \left\{ 	\underset{i\in [n] \backslash [m]}{\max}  \,\left\vert  \hat{\tau}^{-1}(i) \,-\, \tau^{-1}(i)    \right\vert \,   \,\,\leq \,\,  \frac{24}{L_1} \left(  c_2\,n\,\log n \right)^{ \frac{1}{2} }  \right\}
	\]
	happens with probability at least $1 \,-\, 4/n \,-\,2\,\exp(-8\,c_3/L_1^{2} \,\log n)$, with $c_3$  as in Lemma \ref{lem:Dvoretzky}.
\end{lemma}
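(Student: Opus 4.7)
The plan is to combine the empirical-degree concentration in Lemma \ref{lem:concentration_to_g} with the design-spacing bound in Lemma \ref{lem:dist}, using the bi-Lipschitz lower bound in (\ref{eqn:bilip}) to translate between the coordinate $\xi$ and the degree level $g(\xi)$. First I would condition on the intersection $\Omega \cap \Omega'$ of those two good events; a union bound yields $\mathbb{P}(\Omega \cap \Omega') \geq 1 - 4/n - 2\exp(-8 c_3 L_1^{-2} \log n)$, which matches the probability stated in the conclusion. On $\Omega$, every empirical degree $\hat g_i := m^{-1} \sum_{j \in [m]} A_{i,j}$ satisfies $|\hat g_i - g(\xi_i)| \leq \varepsilon$ with $\varepsilon := 2\sqrt{c_2 \log n / n}$.

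The key reduction is to bound $|\hat\tau^{-1}(i_0) - \tau^{-1}(i_0)|$ by counting rank-swaps. Fix $i_0 \in [n] \setminus [m]$ and set $r := \tau^{-1}(i_0)$, $\hat r := \hat\tau^{-1}(i_0)$. Any index $j$ that contributes to the rank difference must lie on one side of $i_0$ in the $g$-ordering but on the opposite side in the $\hat g$-ordering, so $(g(\xi_j) - g(\xi_{i_0}))(\hat g_j - \hat g_{i_0}) \leq 0$. Combining this sign flip with $|\hat g_\cdot - g(\xi_\cdot)| \leq \varepsilon$ on $\Omega$ forces $|g(\xi_j) - g(\xi_{i_0})| \leq 2\varepsilon$, and then (\ref{eqn:bilip}) gives $|\xi_j - \xi_{i_0}| \leq 2\varepsilon / L_1 \leq 8 L_1^{-1} \sqrt{c_2 \log n / n}$.

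Because the global bi-Lipschitz inequality (\ref{eqn:bilip}) forces $g$ to be injective on $[0,1]$, the permutations $\sigma$ and $\tau$ agree up to reversing direction, so each colliding $j$ can be indexed by its $\sigma$-rank. Applying the contrapositive of Lemma \ref{lem:dist} on the event $\Omega'$ yields $|\sigma^{-1}(j) - \sigma^{-1}(i_0)| \leq 12 L_1^{-1} \sqrt{c_2 n \log n}$, so the colliding indices lie in a symmetric integer window of length at most $24 L_1^{-1} \sqrt{c_2 n \log n}$ around the $\sigma$-rank of $i_0$. This delivers the uniform bound $|\hat\tau^{-1}(i_0) - \tau^{-1}(i_0)| \leq 24 L_1^{-1} \sqrt{c_2 n \log n}$ for every $i_0 \in [n] \setminus [m]$, since the $\varepsilon$-control on $\Omega$ and the spacing control on $\Omega'$ are already uniform and no further union bound over $i_0$ is required.

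The most delicate step will be the rank-swap bookkeeping, namely arguing that the number of sign-flip indices $j$ is indeed at least $|\hat r - r|$: when $\hat r < r$ one must verify that the indices occupying empirical ranks $\hat r, \hat r + 1, \ldots, r - 1$ all satisfy $\hat g_j \leq \hat g_{i_0}$ while $g(\xi_j) \geq g(\xi_{i_0})$, with a symmetric accounting when $\hat r > r$. Once this combinatorial identification is in place, the remainder is a direct assembly of Lemmas \ref{lem:concentration_to_g} and \ref{lem:dist} together with the bi-Lipschitz inequality (\ref{eqn:bilip}), and the probability bound follows immediately from the union bound already recorded.
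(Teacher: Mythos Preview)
Your proposal is correct and follows essentially the same route as the paper: condition on $\Omega\cap\Omega'$, use the bi-Lipschitz bound (\ref{eqn:bilip}) together with Lemma~\ref{lem:concentration_to_g} to show that any ``sign-flip'' index $j$ must have $\xi_j$ within $8L_1^{-1}\sqrt{c_2\log n/n}$ of $\xi_{i_0}$, and then invoke the contrapositive of Lemma~\ref{lem:dist} to confine such $j$'s to a $\sigma$-rank window of width $24L_1^{-1}\sqrt{c_2 n\log n}$. The paper formalizes your ``rank-swap bookkeeping'' via the sets $\Lambda_1=\{i':\hat g_{\sigma(i')}>\hat g_{\sigma(i)}\}$ and $\Lambda_2=\{i':g(\xi_{\sigma(i')})>g(\xi_{\sigma(i)})\}$, giving $|\hat\tau^{-1}(\sigma(i))-\tau^{-1}(\sigma(i))|=\bigl||\Lambda_1|-|\Lambda_2|\bigr|\le\max\{|\Lambda_1\setminus\Lambda_2|,|\Lambda_2\setminus\Lambda_1|\}$; this is cleaner than tracking the indices at empirical ranks $\hat r,\ldots,r-1$ (which need not all be sign-flips), and your aside about $\sigma$ and $\tau$ agreeing up to reversal is not actually needed since Lemma~\ref{lem:dist} is already phrased in $\sigma$-ranks.
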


\begin{proof}
	Let us assume that the events  $\Omega$  and  $\Omega^{\prime}$  from Lemma  \ref{lem:concentration_to_g} and Lemma \ref{lem:dist} hold.  Let also  $i,i^{\prime} \in [n \,-\, m]$  be such that 
	$\vert i\,-\, i^{\prime}\vert \,\,   >\,\,12(c_2\,n\,\log n)^{1/2}/L_1$.
	Then,
	\[
	8\,\left(  \frac{c_2\,\log n}{n} \right)^{ \frac{1}{2} } \,\,\leq \,\, L_1\,\left\vert  \xi_{\sigma(i)}\,-\, \xi_{\sigma(i^{\prime})} \right\vert  \,\,\leq \,\,  \left\vert  g(\xi_{\sigma(i)})  \,-\,  g(\xi_{\sigma(i^{\prime})}) \right\vert.
	\]
	
	Therefore,  for large enough $n$, if   $\vert i\,-\, i^{\prime}\vert \,\,   >\,\,12(c_2\,n\,\log n)^{1/2}/L_1$   then
	\begin{equation}
	\label{eqn:iff}
	\underset{j\in [m]}{\sum} \,A_{\sigma(i),j}   \,<\,\underset{j\in [m]}{\sum} \,A_{\sigma(i^{\prime}),j}     \,\,\,\,\,\,\,\, \text{iff}  \,\,\,\,\,\,\,\,  g(\xi_{  \sigma(i) })  \,<\, g(\xi_{\sigma(i{\prime})}).
	\end{equation}
	Next let us fix $i \in [n - m]$ and define 
	\[
	\Lambda_1\,=\, \left \{ i^{\prime}   \in [n - m]\,:\, \underset{j\in [m]}{\sum} \,A_{\sigma(i),j}   \,<\,\underset{j\in [m]}{\sum} \,A_{\sigma(i^{\prime}),j}  \right \},\,\,\,\,\,\,\,\,\,\,  \Lambda_2\,=\, \left \{ i^{\prime}   \in [n -  m]\,:\,  g(\xi_{  \sigma(i) })  \,<\, g(\xi_{\sigma(i{\prime})})  \right \}.
	\]
	Then if  (\ref{eqn:iff}) holds then  
	\[
	\left\vert  \hat{\tau}^{-1}(\sigma(i)) \,-\, \tau^{-1}(\sigma(i))    \right\vert \,= \, \left\vert   \vert  \vert \Lambda_1 \vert   \,-\, \vert \Lambda_2 \vert   \right\vert \,\leq \,  \max\{ \vert  \Lambda_1  \backslash \Lambda_2 \vert  , \vert  \Lambda_2  \backslash \Lambda_1 \vert  \} \,\leq \, 24(c_2\,n\,\log n)^{1/2}/L_1. 
	\]
	
\end{proof}

\begin{theorem}
	\label{lem:tv_control}
	Let us suppose that Assumption 3  holds.  Let $\hat{\tau}$ be constructed  as in Section 2.2  by setting $d \,:=\, \hat{d}_1$. Then,
	\[
	\displaystyle    \frac{1}{n^2}   \sum_{i=  1}^{n-m  - 1}  \sum_{j=m+1}^{n} \vert  \theta_{ \hat{\tau}(i),j}^* \,-\, \theta_{\hat{\tau}(i+1),j}^*\vert    \,\,=\,\, O_{\mathbb{P}}\left(   \sqrt{\frac{\log n}{n}}   \right).
	\]
\end{theorem}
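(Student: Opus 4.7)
The plan is to use Lemma~\ref{lem:comparison_permutations}, which gives with probability $1 - O(1/n)$ the uniform Kendall-type bound $|\hat{\tau}^{-1}(k) - \tau^{-1}(k)| \leq \Delta := (24/L_1)\sqrt{c_2\, n \log n}$, together with the fact that under Assumption~\ref{as5} the population-degree sort $\tau$ coincides with the $\xi$-sort $\sigma$ inside each of the at most $r$ monotonic pieces of $g$. I will sketch the argument assuming a single monotonic piece (so that $\tau = \sigma$) and indicate at the end how to glue the pieces together; the gluing only costs an $O(r)$ additive overhead per column index $j$, which is absorbed into the final rate.

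Fix $j \in \{m+1, \ldots, n\}$. By inserting anchors at $\xi_{\sigma(i)}$ and using the triangle inequality,
\[
\sum_{i=1}^{n-m-1} \bigl| f_0(\xi_{\hat{\tau}(i)}, \xi_j) - f_0(\xi_{\hat{\tau}(i+1)}, \xi_j) \bigr| \;\leq\; T_j^{\mathrm{mid}} \,+\, 2\, T_j^{\mathrm{side}},
\]
where $T_j^{\mathrm{mid}} := \sum_i | f_0(\xi_{\sigma(i+1)}, \xi_j) - f_0(\xi_{\sigma(i)}, \xi_j) |$ and $T_j^{\mathrm{side}} := \sum_i | f_0(\xi_{\hat{\tau}(i)}, \xi_j) - f_0(\xi_{\sigma(i)}, \xi_j) |$. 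The middle piece $T_j^{\mathrm{mid}}$ is the discrete variation of $x \mapsto f_0(x, \xi_j)$ along the $\xi$-sorted grid, so the bounded-variation clause of Assumption~\ref{as5} yields $T_j^{\mathrm{mid}} \leq C$.

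For $T_j^{\mathrm{side}}$, on the event of Lemma~\ref{lem:comparison_permutations} there exists, for each $i$, an integer $i'$ with $|i - i'| \leq \Delta$ and $\hat{\tau}(i) = \sigma(i')$; telescoping along the $\sigma$-grid,
\[
| f_0(\xi_{\sigma(i')}, \xi_j) - f_0(\xi_{\sigma(i)}, \xi_j) | \;\leq\; \sum_{k=\min(i,i')}^{\max(i,i')-1} | f_0(\xi_{\sigma(k+1)}, \xi_j) - f_0(\xi_{\sigma(k)}, \xi_j) |.
\]
Summing over $i$, each elementary increment at grid index $k$ is counted at most $2\Delta + 1$ times (it is picked up only when $i$ lies in the window $[k-\Delta, k+\Delta]$), so $T_j^{\mathrm{side}} \leq (2\Delta + 1)\, C$. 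Therefore the $j$-th inner sum is $O(\Delta)$; summing over the $O(n)$ choices of $j$ and dividing by $n^2$ gives $O(\Delta/n) = O(\sqrt{\log n / n})$, which is the desired rate.

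The main obstacle is the covering/counting step in the telescoping argument and, when $g$ is only piecewise monotonic with $r$ pieces, the piecewise gluing. In that regime, the same decomposition is carried out separately on each monotonic block of $g$ (where $\tau$ equals $\sigma$ or its reverse), and the at most $r - 1$ cross-piece transitions in $\hat{\tau}$ contribute each an extra term of magnitude $\leq 1$ per $j$, i.e., $O(r)$ per $j$, absorbed into the $O(\sqrt{\log n / n})$ rate. All randomness is controlled by taking the intersection of the high-probability events from Lemmas~\ref{lem:concentration_to_g}--\ref{lem:comparison_permutations}, whose total probability tends to one.
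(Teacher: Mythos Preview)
Your telescoping-and-counting argument along the sorted grid is correct and considerably more direct than the paper's proof in the single-piece case. The paper instead works in $\xi$-space: it first shows that, away from a small bad set near the breakpoints $b_1,\dots,b_r$, one has $|\xi_{\hat\tau(i)}-\xi_{\tau(i)}|=O(\sqrt{\log n/n})$; it then lays down a fixed partition of $[0,1]$ into cells of width $\asymp\sqrt{\log n/n}$, argues that for most $i$ both $\xi_{\hat\tau(i)}$ and $\xi_{\hat\tau(i+1)}$ fall in a common cell, and invokes the bounded-variation clause cell by cell. Your rank-space argument bypasses that geometric construction entirely and gets the $O(\Delta)$ bound per column via a one-line double count.

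The piecewise extension, however, has a real gap. Two issues. First, the claim that $\hat\tau$ makes at most $r-1$ cross-piece jumps is false: Lemma~\ref{lem:comparison_permutations} only pins each node's $\hat\tau$-rank to within $\Delta$ of its $\tau$-rank, so near each of the $r-1$ block boundaries in $\tau$ the sequence $\hat\tau(i)$ can oscillate between pieces up to $O(\Delta)$ times, giving $O(r\Delta)$ cross-piece steps (this still fits the budget, but it is not $r-1$). Second, and more seriously, the anchor $\sigma(i)$ you use in $T_j^{\mathrm{side}}$ is the wrong object once there is more than one piece. On a piece where $g$ is decreasing, $\tau$ restricted to that piece equals the \emph{reverse} of $\sigma$ restricted to that piece, so the global index $\sigma^{-1}(\hat\tau(i))$ can sit $\Theta(n)$ away from $i$; your double-counting then degrades to $T_j^{\mathrm{side}}=O(n)$ rather than $O(\Delta)$. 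The clean repair is to anchor at $\tau(i)$ and telescope along the $\tau$-grid instead. This works because the lower bi-Lipschitz bound (\ref{eqn:bilip}) forces $g$ to be injective, so the $g$-ranges of the $r$ pieces are pairwise disjoint and the pieces form \emph{contiguous} blocks in the $\tau$-ordering; hence the $\tau$-grid variation of $f_0(\cdot,\xi_j)$ is at most $C+2(r-1)=O(1)$, and your counting then yields $T_j^{\mathrm{side}}\le(2\Delta+1)\,O(1)$ as needed. With this modification your argument goes through and is still shorter than the paper's.
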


\begin{proof}
	
	Let  $\delta  \,\,=\,\,   24 \,( c_2 \,\log n)^{1/2}/(L_1\,n^{1/2} )$,   $\mathcal{S}^{\prime} \,=\,  \{b_1,\ldots,b_r\}$  and  $\Lambda \,=\, \{  i\,:\,  \xi_{\tau(i)} \notin B_{\delta}(\mathcal{S}^{\prime})   \}$,  where  $B_{\delta}(\mathcal{S}^{\prime})  = \{  x \,:\,  \vert x - x^{\prime}\vert < \delta,  \,\,\,\text{for some } \,\,x^{\prime} \in     \mathcal{S}^{\prime}\}$ . We also write
	\[
	T_i\,=\,  \vert  \left\{  j  \in [n] \backslash [m] \,:\,  \vert  \xi_j   \,-\, \xi_{ \tau(i) }    \vert   \,\leq
	\, \delta     \right\}  .
	\]
	Then by  Proposition 27 from \cite{von2010hitting} we have that 
	\[
	\mathbb{P} \left(    \underset{i \in [n-m]}{\min} \,T_i  \,\geq  \, (n-m)\,\delta \right) \,\geq \, 1  \,-\,   (n-m)\exp\left( -\frac{n\,\delta}{6} \right),
	\]
	And so we assume that 
	$$ \underset{i \in [n-m]}{\min} \,T_i  \,\geq  \, (n-m)\,\delta$$
	holds. Let us also assume that the  events $\Omega_2$  from Lemma \ref{lem:comparison_permutations}  and $\Omega^{\prime}$ from Lemma \ref{lem:Dvoretzky}  hold. By Assumption 3,  we have that,  if  $i  \in \Lambda$  and $i^{\prime } \,=\, \tau^{-1}( \hat{\tau}(i) )  $,  then
	\[
	\begin{array}{lll}
	\vert  \xi_{ \tau(i) } \,-\, \xi_{ \hat{\tau}(i) }  \vert & =&\vert  \xi_{ \tau(i) } \,-\, \xi_{ \tau(i^{\prime}) }  \vert \\
	&\leq &\left\vert  \xi_{ \tau(i) } \,-\,    \frac{ \sigma^{-1}( \tau(i) ) }{n-m} \right\vert  \,+\, \left\vert  \frac{ \sigma^{-1}( \tau(i) ) }{n-m}\,-\,    \frac{ \sigma^{-1}( \tau(i^{\prime}) ) }{n-m} \right\vert  \,+\, \left\vert  \xi_{ \tau(i^{\prime}) } \,-\,    \frac{ \sigma^{-1}( \tau(i^{\prime}) ) }{n-m} \right\vert \\
	& \leq& \frac{4\, c_2^{ \frac{1}{2}  }}{L_1}\left( \frac{\log n}{n} \right)^{ \frac{1}{2} }   \,+\,  \left\vert  \frac{ i  }{n-m}\,-\,    \frac{ i^{\prime} }{n-m} \right\vert   \\
	&\leq &   \frac{28\, c_2^{ \frac{1}{2}  }}{L_1}\left( \frac{\log n}{n} \right)^{ \frac{1}{2} },  
	\end{array}
	\]
	where in the second inequality   we have used the fact that   $\xi_{\tau(i)}$  and  $\xi_{\tau(i^{\prime})}$   belong to the same connected component of $[0,1] \backslash B_{\delta}(\mathcal{S}^{\prime})$,  and the function  $g$ is  strictly piecewise monotonic. 
	
	The argument above implies that
	\[
	\underset{i \in  \Lambda }{ \max }  \,\left\vert   \xi_{\hat{\tau}(i)} \,\,-\,\,\frac{ \sigma^{-1}(\tau(i))  }{n} \right\vert  \,\,\leq\,\,\frac{30}{L_1}\left( \frac{c_2\,\log n}{n} \right)^{ \frac{1}{2} }.
	\] 
	
	We now let  $\delta_2  \,\,=\,\,   150 \,(\log n)^{1/2}/(L_1\,n^{1/2} )$. 	We also  let  $s  \,=\,  \ceil{ \delta_2\,n }$, and clearly  $s \, \asymp  \, \sqrt{n\,\log n} $. Moreover, for  $i \in \{0,1,\ldots,s-1\}$  we define $l_i \,=\,\floor{   (1 - i/n)\,/(s/n)  } $  and
	\[
	\mathcal{B}_0^i\,=\,   \left[0,\frac{i}{n}\right),\,\,\,  \mathcal{B}_1^i\,=\,   \left[\frac{i}{n},\frac{i}{n}+\frac{s}{n}\right),\,\,\,\ldots,\,\,\, \mathcal{B}_{l_i}^i\,=\,   \left[\frac{i}{n} +\frac{s\,(l_i-1) }{n} ,\frac{i}{n}+\frac{s\,l_i}{n}\right),  \,\,\, \mathcal{B}_{l_i+1}^i\,=\,   \left[\frac{i}{n} +\frac{s\,l_i }{n} ,1\right].
	\]
	Then  if  $y_k^i,z_k^i \in \mathcal{B}_k^i$   with  $y_k^i  \,<\,z_k^i  $  and  $k \in \{1,\ldots,l_i \}$,   Assumption 3 implies that  for any $j \in [n] \backslash [m]$ 
	\begin{equation}
	\label{eqn:BV}
	\displaystyle \sum_{k=1}^{l_i} \left[  \left\vert  f_0\left( \frac{i  +  (k-1)s }{n} ,\xi_j\right) \,-\, f_0\left(  y_k^i,\xi_j\right)\right\vert  \,\,+\,\, \left\vert  f_0\left(  y_k^i,\xi_j\right) \,-\, f_0\left(  z_k^i,\xi_j\right)\right\vert   \,\,+\,\, \left\vert   f_0\left(  z_k^i,\xi_j\right)\,-\,f_0\left( \frac{i+ks}{n},\xi_j \right) \right\vert   \right] \,\,\,\leq\,\,\,C.
	\end{equation}
	
	On the other hand,  with probability approaching one,
	\begin{equation}
	\label{eqn:aux1}
	\left\vert  \xi_{ \hat{\tau}(i^{\prime} )} \,-\,  \frac{ \sigma^{-1}(\tau(i^{\prime}))     }{n}   \right\vert  \,\,\leq \,\, \frac{\delta}{4}\,\,\leq \,\,\frac{s}{4\,n},\,\,\,\,\,\,\,\,\,\, \left\vert  \xi_{ \hat{\tau}(i^{\prime}+1)} \,-\, \frac{ \sigma^{-1}(\tau(i^{\prime}))     }{n}  \right\vert  \,\,\leq \,\, \frac{\delta}{4}\,\,\leq \,\,\frac{s}{4\,n},\,\,\,\,\,\,\,\,\,\,\,\,\forall i^{\prime} \in \Lambda.
	\end{equation}
	Next let $I \,=\,  \{  i^{\prime} \in [n-m]  \,\,:\,\,  \xi_{ \hat{\tau}(i^{\prime}) },\xi_{ \hat{\tau}(i^{\prime}+1) } \in  (\frac{1}{n} + \frac{s}{2\,n},  1 \,-\,\frac{2s}{n}  )    \}$. 
	Let us assume that (\ref{eqn:aux1}) holds  and let $i^{\prime} \in  I$. Then we set   $i^{\prime \prime} \,=\,   \floor{  \sigma^{-1}(\tau(i^{\prime}))  \,-\,  \frac{s}{4}  }$ and observe that
	\[
	\xi_{ \hat{\tau}(i^{\prime})  }, \,\xi_{ \hat{\tau}(i^{\prime} +1)  }   \,\in \, \left( \frac{ i^{\prime \prime}    }{n} \,,\,  \frac{ i^{\prime\prime}    }{n}   \,+\, \frac{s}{n}  \right)  \,=\,  \mathcal{B}^{r_{i^{\prime} } }_{k_{i^{\prime} } },
	\]
	where  $\floor{  \sigma^{-1}(\tau(i^{\prime})) \,-\,  \frac{s}{4}  }\,=\,  r_{i^{\prime}}  \,+\,   k_{i^{\prime}}\,s $  with  $r_{i^{\prime}} \in \{0,1,\ldots,s-1\}$  and  $k_{i^{\prime}} \in [l_{r_{i^{\prime}}}]$. Therefore using (\ref{eqn:BV}), with probability approaching one we have that
	\[
	\begin{array}{l}
	\displaystyle  \sum_{j \in  [n]\backslash [m] }  \sum_{i=1}^{n-m-1} \,\vert f_0(\xi_{ \hat{\tau}(i) },\xi_j)  \,-\, f_0(\xi_{ \hat{\tau}(i+1) },\xi_j)  \vert\\
	\leq \,\,\,  2\,n\,\| f_0\|_{\infty}\,\left(\vert [n] \backslash  I\vert  \,+\,\vert [n] \backslash  \Lambda\vert \right)\,\,\,+\,\,\,   \displaystyle \sum_{j \in  [n]\backslash [m] } \sum_{i \in I \cap \Lambda }   \,\vert f_0(\xi_{ \hat{\tau}(i) },\xi_j)  \,-\, f_0(\xi_{ \hat{\tau}(i+1) },\xi_j)  \vert\\
	\leq \,\,\,  \displaystyle   \sum_{j \in  [n]\backslash [m] } \,\,\underset{ \,  y_k^i\, < \, z_k^i  \in \mathcal{B}_k^i,\,\,\, i \in \{0,1,,\ldots,s-1\},\,\,  k \in [l_i]  }{\sup}\,\,\, \sum_{i=0}^{s-1}\, \,\sum_{k=1}^{l_i} \Bigg[   \,\left\vert  f_0\left(  y_k^i,\xi_j\right) \,-\, f_0\left(  z_k^i,\xi_j\right)\right\vert      \Bigg]   \,\,+\,\, \\
	\,\,\,\,\,\, \,\,\, \,\,\,  2\,n\,\| f_0\|_{\infty}\,\left(\vert[n]\backslash I \vert  \,+\,\vert [n] \backslash \Lambda \vert \right)\\
	\leq  \,\,\,2\,n\,\| f_0\|_{\infty}\,\left(\vert [n] \backslash I\vert  \,+\,\vert [n] \backslash \Lambda \vert \right)  \,\,+\,\,  n\,s\,C.
	\end{array}
	\]
	Hence, by the binomial concentration inequality we have that 
	\[
	\vert [n] \backslash I\vert  \,\,+\,\,   \vert [n] \backslash \Lambda \vert   \,\,=\,\, O_{ \mathbb{P} }\left( \sqrt{ n \,\log n}  \right).
	\]
\end{proof}

\paragraph{Proof  of Theorem  2: } This follows from the previous theorem  combined with Lemma \ref{lem:tv:rates}.

\subsection{Proof of Theorem 1 }

As in the proof Lemma  2  from \cite{zhang2015estimating}  we obtain the following result:

\begin{lemma}
	\label{lem:uniform_bound}
	Assume that $m  \asymp  n$, then there exists  positive constants $C_1$  and  $C_2$  such that with probability  at least   $1  \,-\,  n^{-C_2}  $  the event  
	\[
	\frac{1}{n}\,\,\, \underset{ i,j \in [n],  \,\,i \neq j      }{\max } \, \vert  \langle  A_{i,[m]}, A_{j,[m]} \rangle  \,-\,  \langle  \theta^*_{i,[m]}, \theta^*_{j,[m]} \rangle  \vert \,\,\leq \,\,  C_1\left(   \frac{\log n}{n} \right)^{ \frac{1}{2} } 
	\]
	holds.
\end{lemma}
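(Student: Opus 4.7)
The plan is to apply a standard Hoeffding-union-bound argument after decomposing the inner products. Write $A_{i,k}  =  \theta^*_{i,k}  +  \epsilon_{i,k}$, where $\epsilon_{i,k}  =  A_{i,k}  -  \theta^*_{i,k}$ are centered and bounded in $[-1,1]$. Then for any fixed $i \neq j$,
\[
\langle A_{i,[m]}, A_{j,[m]}\rangle  -  \langle \theta^*_{i,[m]},\theta^*_{j,[m]}\rangle  =  \sum_{k \in [m]} \theta^*_{i,k}\epsilon_{j,k}  +  \sum_{k \in [m]}\theta^*_{j,k}\epsilon_{i,k}  +  \sum_{k \in [m]}\epsilon_{i,k}\epsilon_{j,k}.
\]
I would control each of these three sums conditionally on $\xi$, then unconditionally combine via union bound.

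For the two linear-in-$\epsilon$ terms, note that $\theta^*_{i,k}\epsilon_{j,k}$ is, for fixed $i,j$, an independent sequence over $k \in [m]\setminus\{i,j\}$ (the diagonal terms $A_{i,i},A_{j,j}$ vanish by assumption) with each summand centered and contained in $[-1,1]$. Theorem \ref{thm:Hoeffding} then gives, conditionally on $\xi$,
\[
\mathbb{P}\!\left(\left|\sum_{k\in[m]}\theta^*_{i,k}\epsilon_{j,k}\right|  \geq  t\sqrt{m}\,\Big|\,\xi\right)  \leq  2e^{-t^{2}/2},
\]
and setting $t \asymp \sqrt{\log n}$ makes the bound $o(n^{-C})$ for any prescribed $C$.

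For the bilinear term $\sum_k \epsilon_{i,k}\epsilon_{j,k}$ the key observation is that, for distinct $i,j,k$, the variables $\epsilon_{i,k}$ and $\epsilon_{j,k}$ are independent Bernoulli-type noise (they involve different upper-triangular entries of $A$), and their product is centered. Moreover, across $k$ the products are independent because they share no common pair of indices. Each summand lies in $[-1,1]$, so Hoeffding again gives $\bigl|\sum_k \epsilon_{i,k}\epsilon_{j,k}\bigr|  \leq  C\sqrt{m\log n}$ with conditional probability $1  -  2n^{-C'}$. The only delicate bookkeeping is verifying the independence when $i,j,k$ are not all distinct, but those are at most two summands and contribute $O(1)$.

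Combining the three bounds and taking a union bound over the at most $n^{2}$ choices of $(i,j)$ with $i \neq j$, we obtain
\[
\max_{i \neq j}\bigl|\langle A_{i,[m]},A_{j,[m]}\rangle  -  \langle \theta^*_{i,[m]},\theta^*_{j,[m]}\rangle\bigr|  \leq  C\sqrt{m\log n}
\]
with probability at least $1  -  n^{-C_2}$ for a suitable $C_2 > 0$. Dividing by $n$ and using $m \asymp n$ yields the claimed bound $C_1\sqrt{\log n/n}$. The main (minor) obstacle is simply keeping the index bookkeeping clean for the bilinear term, so that the independence structure truly holds; once that is pinned down, the rest is Hoeffding plus a union bound.
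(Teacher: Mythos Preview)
Your argument is correct and is essentially the standard Hoeffding-plus-union-bound proof that the paper defers to (it cites Lemma~2 of \cite{zhang2015estimating} rather than writing out the details). The decomposition into two linear-in-$\epsilon$ sums and one bilinear sum, each handled conditionally on $\xi$ by Hoeffding with the edge-disjointness giving independence across $k$, is exactly the route taken there; your bookkeeping on the at most two exceptional indices is the only place requiring care, and you have it right.
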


As a  consequence, we obtain our next result.

\begin{lemma}
	\label{cor:inner_product_dist_bound}
	With  $\hat{d}_I$  defined as in  (7) in the main paper,
	there  exist   positive  constants $C_1$  and  $C_2$  such that
	\[
	\vert  \hat{d}_I(i,i^{\prime})    \,-\,  d_I(i,i^{\prime}) \vert  \,\,\leq  \,\,  \sqrt{ 2\,C_1} \left(   \frac{\log n}{n} \right)^{ \frac{1}{4} },\,\,\,\,\forall   i,i^{\prime} \in [n] \backslash [m], 
	\]
	with probability at least  $1  \,-\,  n^{-C_2}  $  where
	\[
	d_I(i,i^{\prime}) \,\,=\,\,  \underset{  k   \in [m] }{\max }\sqrt{   \frac{1}{m} \vert  \langle  \theta^*_{i, [m]},  \theta^*_{k, [m] } \rangle    \,-\,    \langle  \theta^*_{i^{\prime}, [m]},  \theta^*_{k, [m] } \rangle   \vert      }, \,\,\,\,\,\forall  i, i^{\prime}  \in [n] \backslash [m].
	\]
\end{lemma}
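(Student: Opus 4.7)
}
The plan is to reduce the statement to the uniform concentration bound of Lemma~\ref{lem:uniform_bound} through two elementary inequalities: the subadditivity of the maximum and the concavity bound $|\sqrt{a}-\sqrt{b}|\leq\sqrt{|a-b|}$ valid for $a,b\geq 0$. We argue on the high-probability event $\Omega_0$ of Lemma~\ref{lem:uniform_bound}, which has probability at least $1-n^{-C_2}$, and on which
\[
\frac{1}{n}\bigl|\langle A_{i,[m]},A_{k,[m]}\rangle-\langle \theta^*_{i,[m]},\theta^*_{k,[m]}\rangle\bigr|\;\leq\;C_1\sqrt{\tfrac{\log n}{n}},\qquad\forall\,i\neq k.
\]

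First, I would fix $i,i'\in [n]\setminus[m]$ and $k\in[m]$, and use the triangle inequality together with the bilinearity of the inner product to write, on $\Omega_0$,
\[
\frac{1}{n}\bigl|\langle A_{i,[m]}-A_{i',[m]},A_{k,[m]}\rangle-\langle \theta^*_{i,[m]}-\theta^*_{i',[m]},\theta^*_{k,[m]}\rangle\bigr|\;\leq\;2C_1\sqrt{\tfrac{\log n}{n}}.
\]
Since $m\asymp n$, the gap between the $1/n$ and $1/m$ normalizations in the definitions of $\hat d_I$ and $d_I$ only adjusts the constant: writing $\tfrac{1}{m}=\tfrac{1}{n}+(\tfrac{1}{m}-\tfrac{1}{n})$ and using that $|\langle\theta^*_{i,[m]}-\theta^*_{i',[m]},\theta^*_{k,[m]}\rangle|\leq 2m$, the remainder contributes $O(1)$, which is absorbed by the bound $2C_1\sqrt{\log n/n}$ after redefining $C_1$.

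Second, I would apply $|\sqrt{a}-\sqrt{b}|\leq\sqrt{|a-b|}$ to the quantities inside the square roots that define $\hat d_I(i,i')$ and $d_I(i,i')$, for each fixed $k$. Combined with the previous step, this gives
\[
\Biggl|\sqrt{\tfrac{1}{n}\bigl|\langle A_{i,[m]}-A_{i',[m]},A_{k,[m]}\rangle\bigr|}\;-\;\sqrt{\tfrac{1}{m}\bigl|\langle\theta^*_{i,[m]}-\theta^*_{i',[m]},\theta^*_{k,[m]}\rangle\bigr|}\,\Biggr|\;\leq\;\sqrt{2C_1}\bigl(\tfrac{\log n}{n}\bigr)^{1/4}.
\]

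Finally, the elementary inequality $|\max_k f_k-\max_k g_k|\leq\max_k|f_k-g_k|$ lets me take the max over $k\in[m]$ on both sides without degrading the bound, yielding the stated conclusion uniformly in $i,i'\in[n]\setminus[m]$ on the event $\Omega_0$. The main (and essentially only) technical subtlety will be bookkeeping the $1/n$ versus $1/m$ discrepancy and keeping the constants clean; there is no hard probabilistic step beyond invoking Lemma~\ref{lem:uniform_bound}, which already provides the uniform deviation bound that drives the entire argument.
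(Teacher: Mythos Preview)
Your approach is exactly what the paper intends: the lemma is stated immediately after Lemma~\ref{lem:uniform_bound} with the words ``As a consequence, we obtain our next result'' and no further argument, so the derivation via $|\sqrt{a}-\sqrt{b}|\leq\sqrt{|a-b|}$ and $|\max_k f_k-\max_k g_k|\leq\max_k|f_k-g_k|$ on the event of Lemma~\ref{lem:uniform_bound} is the implicit proof.

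There is one genuine gap in your write-up, however. Your handling of the $1/n$ versus $1/m$ normalization does not work: you bound the discrepancy term by $\bigl|\tfrac{1}{m}-\tfrac{1}{n}\bigr|\cdot 2m = 2\,|n-m|/n$, correctly note that this is $O(1)$ since $m\asymp n$, and then claim it ``is absorbed by the bound $2C_1\sqrt{\log n/n}$ after redefining $C_1$.'' That is impossible---an $O(1)$ term cannot be absorbed into an $o(1)$ bound by adjusting a constant. This would break the conclusion, since after taking the square root you would be left with an $O(1)$ error rather than $(\log n/n)^{1/4}$. The discrepancy is almost certainly a typo in the paper (either $\hat d_I$ should carry $1/m$ or $d_I$ should carry $1/n$; note that Lemma~\ref{lem:uniform_bound} is stated with $1/n$), and with matching normalizations your argument goes through cleanly. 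You should flag the inconsistency rather than paper over it with an incorrect absorption step.
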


\begin{lemma}
	\label{lem:bar_dist}
	Let  $\bar{d}_I$  be a distance between  $i,i^{\prime}  \in [n] \backslash [m]$  defined as
	\[
	\bar{d}_I(i,i^{\prime})\,\,=\,\,   \underset{   k \in [m]  }{\max}   \,\sqrt{ \left\vert     \displaystyle \int_0^1    \left[ f_0(\xi_i,t) \,-\,   f_0(\xi_{i^{\prime}},t)    \right] f_0(\xi_k,t)  \,dt  \right\vert   }.     
	\]
	Then there exist positive constants  $c_1,c_2 >0$  such that for large enough  $n$ with probability at least  $1 \,-\,  n^{-c_2} $  we have that  	
	\[
	\left\vert \bar{d}_I(i,i^{\prime})  \,-\, \hat{d}_I(i,i^{\prime})  \right\vert    \,\,\leq \,\,    c_1 \left(   \frac{\log n}{n} \right)^{1/4}  \,\,\,\,\,\,\,\, \forall  i,i^{\prime}  \in  [n] \backslash [m]. 
	\]
\end{lemma}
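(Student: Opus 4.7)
}

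The plan is to route through the intermediate distance $d_I$ introduced in Lemma \ref{cor:inner_product_dist_bound}. By the triangle inequality,
\[
\left\vert \bar d_I(i,i^{\prime}) \,-\, \hat d_I(i,i^{\prime})\right\vert \,\leq\, \left\vert \bar d_I(i,i^{\prime}) \,-\, d_I(i,i^{\prime})\right\vert \,+\, \left\vert d_I(i,i^{\prime}) \,-\, \hat d_I(i,i^{\prime})\right\vert,
\]
and the second term is already controlled at the desired rate $(\log n/n)^{1/4}$ by Lemma \ref{cor:inner_product_dist_bound}. So I only need to bound the first term.

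Since $\sqrt{\cdot}$ is monotone and $|\max_k a_k - \max_k b_k| \leq \max_k |a_k - b_k|$, and since $|\sqrt{u}-\sqrt{v}| \leq \sqrt{|u-v|}$, it suffices to show that
\[
\underset{i,i^{\prime} \in [n]\backslash [m],\; k\in [m]}{\max}\left\vert \frac{1}{m}\sum_{j\in [m]} \bigl[f_0(\xi_i,\xi_j)\,-\,f_0(\xi_{i^{\prime}},\xi_j)\bigr]\,f_0(\xi_k,\xi_j) \,-\, \int_0^1 \bigl[f_0(\xi_i,t)\,-\,f_0(\xi_{i^{\prime}},t)\bigr]f_0(\xi_k,t)\,dt\right\vert \,\leq\, C\sqrt{\tfrac{\log n}{n}}
\]
with high probability. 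Indeed, if the signed difference inside the absolute value is bounded by $C\sqrt{\log n/n}$, then by the reverse triangle inequality the same bound applies after taking absolute values, and the $\sqrt{\cdot}$ step then yields the $(\log n/n)^{1/4}$ rate on $|\bar d_I - d_I|$.

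To prove the uniform bound, I would fix $i,i^{\prime} \in [n]\backslash [m]$ and $k \in [m]$, and observe that conditional on $(\xi_i,\xi_{i^{\prime}},\xi_k)$ the summands $X_j \,=\, [f_0(\xi_i,\xi_j)-f_0(\xi_{i^{\prime}},\xi_j)]\,f_0(\xi_k,\xi_j)$, for $j\in[m]$, are i.i.d., lie in $[-1,1]$ (since $f_0 \in [0,1]$), and have conditional mean equal to the integral appearing above. Applying Hoeffding's inequality (Theorem \ref{thm:Hoeffding}) with $\delta = 2n^{-(c+3)}$ gives the bound $\sqrt{2(c+3)\log n / m}$ for that fixed triple with probability $1 - 2n^{-(c+3)}$. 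A union bound over the $O(n^3)$ triples $(i,i^{\prime},k)$ yields the uniform bound of order $\sqrt{\log n / n}$ with probability at least $1 - n^{-c}$, using $m \asymp n$. Combining with Lemma \ref{cor:inner_product_dist_bound} closes the proof.

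There is no real obstacle here; the argument is a standard Hoeffding-plus-union-bound Monte Carlo concentration estimate. The only subtlety worth flagging is that the loss of a power of $1/2$ (from $\sqrt{\log n/n}$ down to $(\log n/n)^{1/4}$) is unavoidable because $\hat d_I$, $d_I$ and $\bar d_I$ are defined as square roots of inner-product type quantities; this is exactly the same loss already present in Lemma \ref{cor:inner_product_dist_bound}, so the two error bounds match and combine cleanly.
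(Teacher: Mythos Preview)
Your proposal is correct and follows essentially the same route as the paper: both arguments reduce to a Hoeffding-plus-union-bound control of the Monte Carlo error $\tfrac{1}{m}\sum_{j\in[m]}[f_0(\xi_i,\xi_j)-f_0(\xi_{i'},\xi_j)]f_0(\xi_k,\xi_j)-\int_0^1[\cdots]dt$, combined with the inequalities $|\max_k a_k-\max_k b_k|\le\max_k|a_k-b_k|$ and $|\sqrt{u}-\sqrt{v}|\le\sqrt{|u-v|}$; your version is arguably cleaner in that it explicitly invokes Lemma~\ref{cor:inner_product_dist_bound} via the triangle inequality, whereas the paper's display goes directly from $\hat d_I$ to the $\theta^*$-based expression. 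One tiny wrinkle worth patching: since $k\in[m]$, the summand with $j=k$ is deterministic after conditioning on $(\xi_i,\xi_{i'},\xi_k)$, so the $X_j$ are i.i.d.\ only for $j\in[m]\setminus\{k\}$; the leftover term is bounded by $1/m=O(1/n)$ and is harmlessly absorbed into the constant.
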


\begin{proof}
	We notice that
	\[
	\def\arraystretch{2.2}
	\begin{array}{lll}
	\left\vert \bar{d}_I(i,i^{\prime})  \,-\, \hat{d}_I(i,i^{\prime})  \right\vert^2   &\leq  & \left\vert \bar{d}_I(i,i^{\prime})^2  \,-\, \hat{d}_I(i,i^{\prime})^2  \right\vert \\
	& \leq &  \Bigg\vert  \underset{   k \in [m]  }{\max}   \, \left\vert     \displaystyle \int_0^1    \left[ f_0(\xi_i,t) \,-\,   f_0(\xi_{i^{\prime}},t)    \right] f_0(\xi_k,t)  \,dt  \right\vert     \,-\,\\
	& &\,\displaystyle  \underset{  k   \in [m] }{\max } \,  \frac{1}{m} \vert  \langle  \theta^*_{i, [m]},  \theta^*_{k, [m] } \rangle    \,-\,    \langle  \theta^*_{i^{\prime}, [m]},  \theta^*_{k, [m] } \rangle   \vert      \Bigg\vert \\
	& \leq&  \underset{   k \in [m]  }{\max}  \Bigg\vert  \displaystyle \int_0^1    \left[ f_0(\xi_i,t) \,-\,   f_0(\xi_{i^{\prime}},t)    \right] f_0(\xi_k,t)  \,dt \,-\,\\
	& &\,\displaystyle \frac{1}{m} \,  \,    \langle  \theta^*_{i, [m]},  \theta^*_{k, [m] } \rangle    \,-\,    \langle  \theta^*_{i^{\prime}, [m]},  \theta^*_{k, [m] } \rangle\Bigg\vert.
	\end{array}
	\]
	The claim  follows  by   using  Theorem \ref{thm:Hoeffding}  to bound  
	\[
	\Bigg\vert  \displaystyle \int_0^1    \left[ f_0(\xi_i,t) \,-\,   f_0(\xi_{i^{\prime}},t)    \right] f_0(\xi_k,t)  \,dt \,-\,\\
	\,\displaystyle \frac{1}{m} \,  \,    \langle  \theta^*_{i, [m]},  \theta^*_{k, [m] } \rangle    \,-\,    \langle  \theta^*_{i^{\prime}, [m]},  \theta^*_{k, [m] } \rangle\Bigg\vert
	\]
	for any $i,i^{\prime} \in [n] \backslash [m]$,  $k \in [m]$. The latter  can  be done by conditioning on $\xi_i$, $\xi_{i^{\prime}}$  and $\xi_{k}$, allowing us to obtain a uniform  bound.
	
\end{proof}

\begin{theorem}
	\label{thm:bound_on_tv}
	Let us suppose that Assumptions 1-2 hold. Then,  we  extend $\theta^*$  to be in $\mathbb{R}^{ n \times n}$, by setting $\theta*_{j,i}  \,=\,   \theta^*_{i,j}$  for all
	$i < j$  with  $i,j \in [n] \backslash [m]$. Moreover,  we set  $\theta_{i,i}^*   \,= \,  0$   for all  $i \in [n] \backslash [m]$. Let $\hat{\tau}$ be constructed  as in Section 2.2 by setting $d \,:=\, \hat{d}_I$. Then,
	\[
	\displaystyle    \frac{1}{n^2}   \sum_{i=  1}^{n-m  - 1}  \sum_{j=m+1}^{n} \vert  \theta_{ \hat{\tau}(i),j}^* \,-\, \theta_{\hat{\tau}(i+1),j}^*\vert    \,\,=\,\, O_{\mathbb{P}}\left(     \frac{  \log^{\frac{(1+\alpha)}{2}   }n  }{n^{  \frac{\alpha}{2} }}  \right).
	\]
\end{theorem}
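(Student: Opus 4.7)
}
The target quantity is an empirical total variation of $\theta^*$ along the data--driven ordering $\hat\tau$. My plan is to funnel it through the chain of comparisons: empirical sum over $\xi_j$ $\to$ $L^1$ integral of $f_0$-slices $\to$ population distance $\bar d_I$ $\to$ data distance $\hat d_I$ $\to$ $\hat d_I$-cost of the oracle ordering $\tau^*$ from Assumption \ref{as2} (using the NN approximation guarantee) $\to$ $L^1$ TV along $\tau^*$. Lemmas \ref{cor:inner_product_dist_bound} and \ref{lem:bar_dist} already control the $\hat d_I \leftrightarrow \bar d_I$ comparison up to an additive $(\log n/n)^{1/4}$ per edge, so the new work consists of (a) converting between $\bar d_I$ and the $L^1$ distance using Assumption \ref{as1}, (b) invoking the NN approximation to relate $\hat\tau$ with $\tau^*$, and (c) a final concentration argument over $\xi_j, j\in[n]\setminus[m]$.

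\emph{Step 1: NN approximation.} Since $\hat d_I$ satisfies the triangle inequality (as noted in Section \ref{sec:d}), (\ref{eqn:nn_property}) yields
\[
\sum_{i=1}^{n-m-1}\hat d_I(\hat\tau(i),\hat\tau(i+1))\;\leq\;\Bigl(1+\tfrac{1}{2}\log_2(n-m)\Bigr)\sum_{i=1}^{n-m-1}\hat d_I(\tau^*(i),\tau^*(i+1)),
\]
with $\tau^*$ the oracle permutation from Assumption \ref{as2}. By Lemma \ref{lem:bar_dist}, each $\hat d_I$ on the right may be replaced by $\bar d_I$ at the cost of $O(n(\log n/n)^{1/4})$.

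\emph{Step 2: bounding the $\bar d_I$-cost of $\tau^*$.} The crucial observation is that because $f_0(\xi_k,\cdot)\in[0,1]$,
\[
\bar d_I(i,i')^2\;\leq\;\int_0^1\bigl|f_0(\xi_i,t)-f_0(\xi_{i'},t)\bigr|\,dt.
\]
Cauchy--Schwarz over the $n-m-1$ edges of $\tau^*$ and Assumption \ref{as2} then give
\[
\sum_{i}\bar d_I(\tau^*(i),\tau^*(i+1))\;\leq\;\sqrt{n}\,\Bigl(\sum_{i}\int|f_0(\xi_{\tau^*(i)},t)-f_0(\xi_{\tau^*(i+1)},t)|\,dt\Bigr)^{1/2}\;=\;O_{\mathbb{P}}\bigl(n^{1-\alpha/2}\log^{\alpha/2}n\bigr).
\]
Combining Steps 1--2, $\sum_i\hat d_I(\hat\tau(i),\hat\tau(i+1))=O_{\mathbb{P}}(n^{1-\alpha/2}\log^{(1+\alpha)/2}n)$ up to log factors, absorbing the lower-order $n^{3/4}\log^{1/4}n$ term.

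\emph{Step 3: inverting $\bar d_I$ into the $L^1$ TV along $\hat\tau$.} Invoking Lemma \ref{lem:bar_dist} once more gives the same-order bound on $\sum_i\bar d_I(\hat\tau(i),\hat\tau(i+1))$. To go from $\bar d_I$ back to the $L^1$ distance I use Assumption \ref{as1}: for each $i$ pick $k,k'\in[m]$ with $\int|f_0(\xi_{\hat\tau(i)},t)-f_0(\xi_k,t)|\,dt,\;\int|f_0(\xi_{\hat\tau(i+1)},t)-f_0(\xi_{k'},t)|\,dt\;\lesssim\;(\log n/n)^{\alpha}$. Writing $g=f_0(\xi_{\hat\tau(i)},\cdot)-f_0(\xi_{\hat\tau(i+1)},\cdot)$ and splitting
\[
\int g^2\,dt\;=\;\int g\cdot\bigl(f_0(\xi_k,\cdot)-f_0(\xi_{k'},\cdot)\bigr)\,dt+\int g\cdot\bigl[(f_0(\xi_{\hat\tau(i)},\cdot)-f_0(\xi_k,\cdot))-(f_0(\xi_{\hat\tau(i+1)},\cdot)-f_0(\xi_{k'},\cdot))\bigr]\,dt,
\]
the first piece is bounded by $2\bar d_I(\hat\tau(i),\hat\tau(i+1))^2$ (by the definition of $\bar d_I$, maxing over $k\in[m]$) and the second by $2c_1(\log n/n)^{\alpha}$ using $|g|\leq 1$. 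Hence $\int|g|\,dt\leq\sqrt{\int g^2\,dt}\leq\sqrt{2}\,\bar d_I(\hat\tau(i),\hat\tau(i+1))+\sqrt{2c_1}(\log n/n)^{\alpha/2}$. Summing over $i$,
\[
\sum_{i}\int\bigl|f_0(\xi_{\hat\tau(i)},t)-f_0(\xi_{\hat\tau(i+1)},t)\bigr|\,dt\;=\;O_{\mathbb{P}}\bigl(n^{1-\alpha/2}\log^{(1+\alpha)/2}n\bigr).
\]

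\emph{Step 4: integral to empirical sum.} The final step exchanges the integral for the average over $\xi_j$, $j\in[n]\setminus[m]$. Conditioning on $\xi_{\hat\tau(i)},\xi_{\hat\tau(i+1)}$ and applying Theorem \ref{thm:Hoeffding} with a union bound over the $O(n^2)$ ordered index pairs in $[n]\setminus[m]$ shows that $\sum_{j\in[n]\setminus[m]}|f_0(\xi_i,\xi_j)-f_0(\xi_{i'},\xi_j)|=(n-m)\int|f_0(\xi_i,t)-f_0(\xi_{i'},t)|dt+O(\sqrt{n\log n})$ uniformly in $(i,i')$. Multiplying the bound from Step 3 by $n$, absorbing the $O(n\cdot\sqrt{n\log n})$ deviation which is of lower order, and dividing by $n^2$ gives the stated rate.

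\emph{Main obstacle.} The delicate step is Step 3. The distance $\bar d_I$ only averages $(f_0(\xi_i,\cdot)-f_0(\xi_{i'},\cdot))$ against columns indexed by $[m]$, so inverting it back to a genuine $L^1$ TV on $f_0$-slices requires Assumption \ref{as1} to guarantee that every $\xi_i,\;i\in[n]\setminus[m]$, has a nearby proxy $\xi_k,\;k\in[m]$, in the slice-$L^1$ sense. The inequality $\int|g|\leq\sqrt{\int g^2}$ is what allows the square root built into the definition of $\bar d_I$ to cancel the square root produced by Cauchy--Schwarz in Step 2; without this precise matching, Assumption \ref{as2} alone would not yield the $n^{-\alpha/2}$ rate.
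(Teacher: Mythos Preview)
Your proposal is correct and follows essentially the same approach as the paper's own proof: both arguments run through the chain $\text{empirical TV}\leftrightarrow\int|f_0(\xi_i,t)-f_0(\xi_{i'},t)|\,dt\leftrightarrow\bar d_I\leftrightarrow\hat d_I$, invoke the NN approximation guarantee (\ref{eqn:nn_property}) to pass from $\hat\tau$ to $\tau^*$, use Assumption~\ref{as1} to approximate $f_0(\xi_i,\cdot)$ by some $f_0(\xi_k,\cdot)$ with $k\in[m]$ when inverting $\bar d_I$ back to the $L^1$ distance, and finish with Cauchy--Schwarz plus Assumption~\ref{as2}. The only difference is cosmetic: you carry out the NN step first and the Hoeffding step last, whereas the paper opens with Hoeffding and pushes the NN step into the middle of a single long chain of inequalities; the ingredients and the resulting rate are identical.
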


\begin{proof}
	We first notice that by  Theorem \ref{thm:Hoeffding}  we have that  with probability approaching one
	\begin{equation}
	\label{eqn:TV1}
	\underset{i, i^{\prime} \in  [n] \backslash [m]   }{\max}  \left\vert \frac{1}{n-m}  \displaystyle \sum_{j=m+1 }^{n-1} \vert  \theta_{i,j}^* \,-\, \theta_{i^{\prime},j}^*\vert \,\,-\,\,   \displaystyle  \int_{0}^1   \vert  f_0(\xi_{i^{\prime}},t)  \,-\,  f_0(\xi_{i},t)  \vert dt      \right \vert    \,\,\leq \,\,C_1\left( \frac{\log n}{n}  \right)^{1/2},      
	\end{equation}
	for some positive constant $C_1 >0$ (we can attain this by first conditioning on $\xi$).
	
	On the other hand, with high probability,  for any $i \in [n] \backslash [m]$  by Assumption 1  there exists $k_i \in [m]$  such that
	\begin{equation}
	\label{eqn:aux}
	\left\vert  \left\vert   \displaystyle  \int_{0}^1   (  f_0(\xi_{i^{\prime} },t)  \,-\,  f_0(\xi_{i},t)) f_0(\xi_{i },t) dt      \right \vert  \,\,-\,\,\left\vert   \displaystyle  \int_{0}^1   (  f_0(\xi_{i^{\prime} },t)  \,-\,  f_0(\xi_{i},t)) f_0(\xi_{k_i },t) dt      \right \vert  \right\vert \,\,\leq \,\,  2\|f_0\|_{\infty}c_1  \left(  \frac{\log n}{n}\right)^{\alpha}
	\end{equation}
	
	Next, for any $i, i^{\prime} \in [n ]\backslash [m]$  we have that
	\begin{equation}
	\label{eqn:aux2}
	\begin{array}{lll}
	\displaystyle  \int_{0}^1   \vert  f_0(\xi_{i^{\prime} },t)  \,-\,  f_0(\xi_{i},t)  \vert dt     & \leq &  \left[ \displaystyle  \int_{0}^1   \vert  f_0(\xi_{i^{\prime}},t)  \,-\,  f_0(\xi_{i},t)  \vert^2 dt   \right]^{1/2} \\
	&  = &\left[ \displaystyle  \int_{0}^1   (  f_0(\xi_{i^{\prime} },t)  \,-\,  f_0(\xi_{i},t)) f_0(\xi_{i^{\prime} },t) dt  \,\,+\,\,\int_{0}^1   (  f_0(\xi_{i^{\prime} },t)  \,-\,  f_0(\xi_{i},t)) f_0(\xi_{i },t) dt   \right]^{1/2}\\
	&  \leq  &   \left \vert \displaystyle \int_{0}^1   (  f_0(\xi_{i^{\prime} },t)  \,-\,  f_0(\xi_{i},t)) f_0(\xi_{i^{\prime} },t) dt  \right\vert^{1/2}   \,+\,\left \vert \displaystyle \int_{0}^1   (  f_0(\xi_{i^{\prime} },t)  \,-\,  f_0(\xi_{i},t)) f_0(\xi_{i },t) dt  \right\vert^{1/2}\\
	& \leq &  2\left[   \bar{d}_I(i,i^{\prime})^2 \,+\,   2\|f_0\|_{\infty}c_1  \left(  \frac{\log n}{n}\right)^{\alpha}  \right]^{1/2}\\
	& \leq &2\bar{d}_I(i,i^{\prime})   \,\,+  \,\, 2\sqrt{ 2\|f_0\|_{\infty}c_1}  \left(  \frac{\log n}{n}\right)^{\alpha/2}.
	\end{array}
	\end{equation}
	
	Therefore,   combining  (\ref{eqn:TV1}), (\ref{eqn:aux})  and (\ref{eqn:aux2})    we obtain  that with probability approaching $1$
	\begin{equation}
	\label{aux3}
	\begin{array}{lll}
	\displaystyle    \frac{1}{n}   \sum_{i=  1}^{n-m  - 1}  \sum_{j=m+1}^{n} \vert  \theta_{ \hat{\tau}(i),j}^* \,-\, \theta_{\hat{\tau}(i+1),j}^*\vert   & \leq &   \displaystyle    \sum_{i=1}^{n-m  - 1} \left[  \int_{0}^1   \vert  f_0(\xi_{\hat{\tau}(i) },t)  \,-\,  f_0(\xi_{\hat{\tau}(i+1)},t)  \vert dt  \,\,+\,\,C_1\left( \frac{\log n}{n}  \right)^{1/2} \right]\\
	& \leq &\displaystyle    \sum_{i=1}^{n-m  - 1}  \left[ 2\bar{d}_I(\hat{\tau}(i),\hat{\tau}(i+1))   \,\,+  \,\, 2\sqrt{ 2\|f_0\|_{\infty}c_1}  \left(  \frac{\log n}{n}\right)^{\alpha/2}  \right]  \\
	& &\,\,+\,\,
	C_1\left( n\,\log n  \right)^{1/2}\\
	& \leq  & \displaystyle  2   \sum_{i=1}^{n-m-1}  \left[  \hat{d}_I(\hat{\tau}(i),\hat{\tau}(i+1)) \,+\,    c_1 \left(   \frac{\log n}{n} \right)^{1/4} \right] \\
	& & \,\,\,+\,\, C_1\left( n\,\log n  \right)^{1/2}   \,\,+\,\,  2\sqrt{ 2\|f_0\|_{\infty}c_1} \,n^{1 - \alpha/2}  \log^{\alpha/2} n   \\
	& \leq & 2\left( 1 + \frac{\log_2 n   }{2}  \right)   \displaystyle \sum_{i=1}^{n-m-1} \hat{d}_I(\tau^*(i),\tau^*(i+1))  \,\,+\,\, C_1\left( n\,\log n  \right)^{1/2}\\
	& &  \,\,\,+\,\,  2c_1\,n^{3/4} \log^{1/4} n   \,\,+\,\,2\sqrt{ 2\|f_0\|_{\infty}c_1} \,n^{1 - \alpha/2}  \log^{\alpha/2} n 
	\end{array}
	\end{equation}
	where the third inequality follows from Lemma \ref{lem:bar_dist}  and the last one from Theorem 1  in \cite{rosenkrantz1977intussusception}. Next, we proceed to bound the last term in the previous inequality. To that end, we notice that  by Lemma \ref{lem:bar_dist},  with probability approaching one,  for all $i \in [n-m-1]$ it holds that
	\[
	\begin{array}{lll}
	\hat{d}_I(\tau^*(i),\tau^*(i+1))   &\leq &  \underset{   k \in [m]  }{\max}   \,\sqrt{ \left\vert     \displaystyle \int_0^1    \left[ f_0(\xi_{ \tau^*(i) },t) \,-\,   f_0(\xi_{\tau^*(i+1) },t)    \right] f_0(\xi_k,t)  \,dt  \right\vert   }   \,\,+\,\, \kappa_1 \left(   \frac{\log n}{n} \right)^{1/4}\\
	& \leq &  \sqrt{   \displaystyle \int_0^1    \vert f_0(\xi_{ \tau^*(i) },t) \,-\,   f_0(\xi_{\tau^*(i+1) },t)  \vert   \,dt }\,\,+\,\, \kappa_1 \left(   \frac{\log n}{n} \right)^{1/4},\\
	\end{array}
	\]
	for some positive constant  $\kappa_1$. Therefore, combining with (\ref{aux3})
	\[
	\begin{array}{lll}
	\displaystyle    \frac{1}{n}   \sum_{i=  1}^{n-m  - 1}  \sum_{j=m+1}^{n} \vert  \theta_{ \hat{\tau}(i),j}^* \,-\, \theta_{\hat{\tau}(i+1),j}^*\vert   & \leq &   2\left( 1 + \frac{\log_2 n   }{2}  \right)   \displaystyle \sum_{i=1}^{n-m-1} \sqrt{   \displaystyle \int_0^1    \vert f_0(\xi_{ \tau^*(i) },t) \,-\,   f_0(\xi_{\tau^*(i+1) },t)  \vert   \,dt }\\ 
	& &  \,\,+\,\,\left( 2c_1 \kappa_1  \,+\, C_1 \right) n^{3/4} \log^{1/4} n   \,\,+\,\, 2\sqrt{ 2\|f_0\|_{\infty}c_1} \,n^{1 - \alpha/2}  \log^{\alpha/2} n  \\
	& \leq  & 2\left( 1 + \frac{\log_2 n   }{2}  \right)   \sqrt{n}\,\left(    \displaystyle \sum_{i=1}^{n-m-1}  \displaystyle \int_0^1    \vert f_0(\xi_{ \tau^*(i) },t) \,-\,   f_0(\xi_{\tau^*(i+1) },t)  \vert   \,dt \right)^{1/2}\\ 
	& &  \,\,+\,\,\left( 2c_1 \kappa_1  \,+\, C_1 \right) n^{3/4} \log^{1/4} n   \,\,+\,\, 2\sqrt{ 2\|f_0\|_{\infty}c_1} \,n^{1 - \alpha/2}  \log^{\alpha/2} n,  \\	 	   	 
	\end{array}
	\] 
	with probability approaching $1$. The conclusion follows from Assumption 2.
\end{proof}

\paragraph{Proof  of Theorem  1: } This follows combining Theorem \ref{thm:bound_on_tv}  with Lemma \ref{lem:tv:rates}.

	\bibliographystyle{abbrvnat}
	\bibliography{graphfused}

\begin{thebibliography}{35}
\providecommand{\natexlab}[1]{#1}
\providecommand{\url}[1]{\texttt{#1}}
\expandafter\ifx\csname urlstyle\endcsname\relax
  \providecommand{\doi}[1]{doi: #1}\else
  \providecommand{\doi}{doi: \begingroup \urlstyle{rm}\Url}\fi

\bibitem[Adamic and Glance(2005)]{adamic2005political}
Lada~A Adamic and Natalie Glance.
\newblock The political blogosphere and the 2004 us election: divided they
  blog.
\newblock In \emph{Proceedings of the 3rd international workshop on Link
  discovery}, pages 36--43. ACM, 2005.

\bibitem[Airoldi et~al.(2013)Airoldi, Costa, and Chan]{airoldi2013stochastic}
Edo~M Airoldi, Thiago~B Costa, and Stanley~H Chan.
\newblock Stochastic blockmodel approximation of a graphon: Theory and
  consistent estimation.
\newblock In \emph{Advances in Neural Information Processing Systems}, pages
  692--700, 2013.

\bibitem[Barbero and Sra(2014)]{barbero2014modular}
{\'A}lvaro Barbero and Suvrit Sra.
\newblock Modular proximal optimization for multidimensional total-variation
  regularization.
\newblock \emph{arXiv preprint arXiv:1411.0589}, 2014.

\bibitem[Bellmore and Nemhauser(1968)]{bellmore1968traveling}
Mandell Bellmore and George~L Nemhauser.
\newblock The traveling salesman problem: a survey.
\newblock \emph{Operations Research}, 16\penalty0 (3):\penalty0 538--558, 1968.

\bibitem[Bickel and Chen(2009)]{bickel2009nonparametric}
Peter~J Bickel and Aiyou Chen.
\newblock A nonparametric view of network models and newman--girvan and other
  modularities.
\newblock \emph{Proceedings of the National Academy of Sciences}, 106\penalty0
  (50):\penalty0 21068--21073, 2009.

\bibitem[Boucheron et~al.(2013)Boucheron, Lugosi, and
  Massart]{boucheron2013concentration}
St{\'e}phane Boucheron, G{\'a}bor Lugosi, and Pascal Massart.
\newblock \emph{Concentration inequalities: A nonasymptotic theory of
  independence}.
\newblock Oxford university press, 2013.

\bibitem[Butland et~al.(2005)Butland, Peregr{\'\i}n-Alvarez, Li, Yang, Yang,
  Canadien, Starostine, Richards, Beattie, Krogan,
  et~al.]{butland2005interaction}
Gareth Butland, Jos{\'e}~Manuel Peregr{\'\i}n-Alvarez, Joyce Li, Wehong Yang,
  Xiaochun Yang, Veronica Canadien, Andrei Starostine, Dawn Richards, Bryan
  Beattie, Nevan Krogan, et~al.
\newblock Interaction network containing conserved and essential protein
  complexes in escherichia coli.
\newblock \emph{Nature}, 433\penalty0 (7025):\penalty0 531, 2005.

\bibitem[Chan and Airoldi(2014)]{chan2014consistent}
Stanley Chan and Edoardo Airoldi.
\newblock A consistent histogram estimator for exchangeable graph models.
\newblock In \emph{International Conference on Machine Learning}, pages
  208--216, 2014.

\bibitem[Chatterjee et~al.(2015)]{chatterjee2015matrix}
Sourav Chatterjee et~al.
\newblock Matrix estimation by universal singular value thresholding.
\newblock \emph{The Annals of Statistics}, 43\penalty0 (1):\penalty0 177--214,
  2015.

\bibitem[Clarkson and Adams(1933)]{clarkson1933definitions}
James~A Clarkson and C~Raymond Adams.
\newblock On definitions of bounded variation for functions of two variables.
\newblock \emph{Transactions of the American Mathematical Society}, 35\penalty0
  (4):\penalty0 824--854, 1933.

\bibitem[Gao and Ma(2018)]{gao2018minimax}
Chao Gao and Zongming Ma.
\newblock Minimax rates in network analysis: Graphon estimation, community
  detection and hypothesis testing.
\newblock \emph{arXiv preprint arXiv:1811.06055}, 2018.

\bibitem[Gao et~al.(2015)Gao, Lu, Zhou, et~al.]{gao2015rate}
Chao Gao, Yu~Lu, Harrison~H Zhou, et~al.
\newblock Rate-optimal graphon estimation.
\newblock \emph{The Annals of Statistics}, 43\penalty0 (6):\penalty0
  2624--2652, 2015.

\bibitem[Gao et~al.(2016)Gao, Lu, Ma, and Zhou]{gao2016optimal}
Chao Gao, Yu~Lu, Zongming Ma, and Harrison~H Zhou.
\newblock Optimal estimation and completion of matrices with biclustering
  structures.
\newblock \emph{Journal of Machine Learning Research}, 17\penalty0
  (161):\penalty0 1--29, 2016.

\bibitem[Gu{\'e}don and Vershynin(2016)]{guedon2016community}
Olivier Gu{\'e}don and Roman Vershynin.
\newblock Community detection in sparse networks via grothendieck’s
  inequality.
\newblock \emph{Probability Theory and Related Fields}, 165\penalty0
  (3-4):\penalty0 1025--1049, 2016.

\bibitem[Hoeffding(1963)]{hoeffding1963probability}
Wassily Hoeffding.
\newblock Probability inequalities for sums of bounded random variables.
\newblock \emph{Journal of the American statistical association}, 58\penalty0
  (301):\penalty0 13--30, 1963.

\bibitem[Hutter and Rigollet(2016)]{hutter2016optimal}
Jan-Christian Hutter and Philippe Rigollet.
\newblock Optimal rates for total variation denoising.
\newblock \emph{Annual Conference on Learning Theory}, 29:\penalty0 1115--1146,
  2016.

\bibitem[Klopp and Verzelen(2017)]{klopp2017optimal}
Olga Klopp and Nicolas Verzelen.
\newblock Optimal graphon estimation in cut distance.
\newblock \emph{arXiv preprint arXiv:1703.05101}, 2017.

\bibitem[Mammen and van~de Geer(1997)]{mammen1997locally}
Enno Mammen and Sara van~de Geer.
\newblock Locally apadtive regression splines.
\newblock \emph{Annals of Statistics}, 25\penalty0 (1):\penalty0 387--413,
  1997.

\bibitem[Padilla et~al.(2016)Padilla, Scott, Sharpnack, and
  Tibshirani]{padilla2016dfs}
Oscar Hernan~Madrid Padilla, James~G Scott, James Sharpnack, and Ryan~J
  Tibshirani.
\newblock The dfs fused lasso: Linear-time denoising over general graphs.
\newblock \emph{arXiv preprint arXiv:1608.03384}, 2016.

\bibitem[Rohe et~al.(2011)Rohe, Chatterjee, and Yu]{rohe2011spectral}
Karl Rohe, Sourav Chatterjee, and Bin Yu.
\newblock Spectral clustering and the high-dimensional stochastic blockmodel.
\newblock \emph{The Annals of Statistics}, pages 1878--1915, 2011.

\bibitem[Rosenkrantz et~al.(1977{\natexlab{a}})Rosenkrantz, Stearns, and
  Lewis]{rosenkrantz1977analysis}
Daniel~J Rosenkrantz, Richard~E Stearns, and Philip~M Lewis, II.
\newblock An analysis of several heuristics for the traveling salesman problem.
\newblock \emph{SIAM journal on computing}, 6\penalty0 (3):\penalty0 563--581,
  1977{\natexlab{a}}.

\bibitem[Rosenkrantz et~al.(1977{\natexlab{b}})Rosenkrantz, Cox, Silverman, and
  Martin]{rosenkrantz1977intussusception}
Jens~G Rosenkrantz, Joseph~A Cox, Frederic~N Silverman, and Lester~W Martin.
\newblock Intussusception in the 1970s: indications for operation.
\newblock \emph{Journal of pediatric surgery}, 12\penalty0 (3):\penalty0
  367--373, 1977{\natexlab{b}}.

\bibitem[Rudin et~al.(1992)Rudin, Osher, and Faterni]{rudin1992nonlinear}
Leonid Rudin, Stanley Osher, and Emad Faterni.
\newblock Nonlinear total variation based noise removal algorithms.
\newblock \emph{Physica {D}: Nonlinear Phenomena}, 60\penalty0 (1):\penalty0
  259--268, 1992.

\bibitem[Sadhanala et~al.(2016)Sadhanala, Wang, and
  Tibshirani]{sadhanala2016total}
Veeranjaneyulu Sadhanala, Yu-Xiang Wang, and Ryan~J. Tibshirani.
\newblock Total variation classes beyond 1d: {Minimax} rates, and the
  limitations of linear smoothers.
\newblock \emph{To appear, Neural Information Processing Systems}, 2016.

\bibitem[Song et~al.(2016)Song, Lee, Li, and Shah]{song2016blind}
Dogyoon Song, Christina~E Lee, Yihua Li, and Devavrat Shah.
\newblock Blind regression: Nonparametric regression for latent variable models
  via collaborative filtering.
\newblock In \emph{Advances in Neural Information Processing Systems}, pages
  2155--2163, 2016.

\bibitem[Tansey and Scott(2015)]{tansey2015fast}
Wesley Tansey and James Scott.
\newblock A fast and flexible algorithm for the graph-fused lasso.
\newblock \emph{arXiv preprint arXiv:1505.06475}, 2015.

\bibitem[Tibshirani et~al.(2005)Tibshirani, Saunders, Rosset, Zhu, and
  Knight]{tibshirani2005sparsity}
Robert Tibshirani, Michael Saunders, Saharon Rosset, Ji~Zhu, and Keith Knight.
\newblock Sparsity and smoothness via the fused lasso.
\newblock \emph{Journal of the Royal Statistical Society: Series B},
  67\penalty0 (1):\penalty0 91--108, 2005.

\bibitem[Tibshirani(2014)]{tibshirani2014adaptive}
Ryan~J. Tibshirani.
\newblock Adaptive piecewise polynomial estimation via trend filtering.
\newblock \emph{The Annals of Statistics}, 42\penalty0 (1):\penalty0 285--323,
  2014.

\bibitem[Tibshirani and Taylor(2011)]{tibshirani2011solution}
Ryan~J. Tibshirani and Jonathan Taylor.
\newblock The solution path of the generalized lasso.
\newblock \emph{Annals of Statistics}, 39\penalty0 (3):\penalty0 1335--1371,
  2011.

\bibitem[Von~Luxburg et~al.(2010)Von~Luxburg, Radl, and Hein]{von2010hitting}
Ulrike Von~Luxburg, Agnes Radl, and Matthias Hein.
\newblock Hitting and commute times in large graphs are often misleading.
\newblock \emph{arXiv preprint arXiv:1003.1266}, 2010.

\bibitem[Wang et~al.(2016)Wang, Sharpnack, Smola, and
  Tibshirani]{wang2016trend}
Yu-Xiang Wang, James Sharpnack, Alex Smola, and Ryan~J Tibshirani.
\newblock Trend filtering on graphs.
\newblock \emph{Journal of Machine Learning Research}, 17\penalty0
  (105):\penalty0 1--41, 2016.

\bibitem[Wolfe and Olhede(2013)]{wolfe2013nonparametric}
Patrick~J Wolfe and Sofia~C Olhede.
\newblock Nonparametric graphon estimation.
\newblock \emph{arXiv preprint arXiv:1309.5936}, 2013.

\bibitem[Xu(2017)]{xu2017rates}
Jiaming Xu.
\newblock Rates of convergence of spectral methods for graphon estimation.
\newblock \emph{arXiv preprint arXiv:1709.03183}, 2017.

\bibitem[Yan et~al.(2018)Yan, Sarkar, and Cheng]{yan2018provable}
Bowei Yan, Purnamrita Sarkar, and Xiuyuan Cheng.
\newblock Provable estimation of the number of blocks in block models.
\newblock In \emph{International Conference on Artificial Intelligence and
  Statistics}, pages 1185--1194, 2018.

\bibitem[Zhang et~al.(2015)Zhang, Levina, and Zhu]{zhang2015estimating}
Yuan Zhang, Elizaveta Levina, and Ji~Zhu.
\newblock Estimating network edge probabilities by neighborhood smoothing.
\newblock \emph{arXiv preprint arXiv:1509.08588}, 2015.

\end{thebibliography}

\end{document}